\theoremstyle{plain}
\newtheorem{theorem}{Theorem}[section]
\newtheorem{lemma}[theorem]{Lemma}
\newtheorem{claim}{Claim}
\newtheorem{conjecture}[theorem]{Conjecture}
\newtheorem{corollary}[theorem]{Corollary}
\newtheorem{proposition}{Proposition}
\newtheorem{definition}[theorem]{Definition}
\newtheorem{remark}[theorem]{Remark}
\newtheorem{algorithm}{Algorithm}
\renewcommand{\cal}[1]{\mathcal{#1}}
\newcommand{\R}{\mathbb{R}}
\newcommand{\N}{\mathbb{N}}
\newcommand{\E}{\mathop{\mathbb{E}}}
\newcommand{\rv}[1]{\mathbf{#1}}
\newcommand{\reg}[1]{\mathsf{#1}}
\newcommand{\ot}{\otimes}
\newcommand{\abs}[1]{\left | #1 \right|}
\newcommand{\eps}{\epsilon}
\newcommand{\ip}[1]{\left \langle #1 \right \rangle}
\newcommand{\supp}{\mathrm{supp}}
\newcommand{\ans}{\mathsf{ans}}
\newcommand{\negl}{\mathsf{negl}}
\newcommand{\cO}{\mathcal{O}}
\newcommand{\cD}{\mathcal{D}}
\newcommand{\class}[1]{\mathsf{#1}}
\newcommand{\QCMA}{\class{QCMA}}
\newcommand{\QMA}{\class{QMA}}
\newcommand{\Components}{\textsc{Components}}
\newcommand{\poly}{\mathrm{poly}}
\renewcommand{\cal}[1]{\mathcal{#1}}
\newcommand{\Lyes}{\cal{L}_{\mathrm{yes}}}
\newcommand{\Lno}{\cal{L}_{\mathrm{no}}}
\newcommand{\Hmin}{H_{\min}}
\date{}
\title{$\QMA$ vs. $\QCMA$ and Pseudorandomness} 
\author{Jiahui Liu \\ \small{MIT} \and Saachi Mutreja \\ \small{Columbia}\and 
Henry Yuen \\ 
\small{Columbia}}
\begin{document}
\maketitle

\begin{abstract}
    We study a longstanding question of Aaronson and Kuperberg on whether there exists a classical oracle separating $\mathsf{QMA}$ from $\mathsf{QCMA}$. Settling this question in either direction would yield insight into the power of quantum proofs over classical proofs. We show that such an oracle exists if a certain quantum pseudorandomness conjecture holds. Roughly speaking, the conjecture posits that quantum algorithms cannot, by making few queries, distinguish between the uniform distribution over permutations versus permutations drawn from so-called ``dense'' distributions. 
    Our result can be viewed as establishing a ``win-win'' scenario: \emph{either} there is a classical oracle separation of $\mathsf{QMA}$ from $\mathsf{QCMA}$, \emph{or} there is quantum advantage in distinguishing pseudorandom distributions on permutations.
\end{abstract}
\section{Introduction}
In 2007, Aaronson and Kuperberg showed that there is a quantum oracle $U$ relative to which $\QMA^U \neq \QCMA^U$~\cite{aaronson2007quantum}. Since then, finding a \emph{classical} oracle to separate the two complexity classes has been a fascinating question in quantum complexity theory. There has been progress on this question by considering either non-standard models of oracles or restricted $\QCMA$ verifiers: 
\begin{enumerate}
\item Fefferman and Kimmel~\cite{fefferman2018quantum},  Natarajan and Nirkhe~\cite{natarajan2023distribution}, and Li, Liu, Pelecanos and Yamawaka~\cite{li2024classical} separate $\QMA$ from $\QCMA$ relative to a \emph{distributional} oracle, in which the oracle is sampled from a distribution and the quantum proof can only depend on the distribution. 

\item Liu, Li, Pelecanos, and Yamakawa~\cite{li2024classical} also obtain the separation relative to a classical oracle, but under the restriction that the quantum verifier's query register is measured prior to querying the oracle. Ben-David and Kundu~\cite{ben2024oracle} recently showed that the separation under the restriction that the verifier makes a bounded number of rounds of adaptive queries to the classical oracle. 
\end{enumerate}

We continue ``chipping'' away at the $\QMA$ vs. $\QCMA$ question by proving the following:
\begin{theorem}[Main theorem, informal]
\label{thm:main-informal}
    Assuming a quantum pseudorandomness conjecture (described below), there exists a classical oracle $F$ such that $\QMA^F \neq \QCMA^F$.
\end{theorem}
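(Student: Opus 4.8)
\section{Proof strategy}

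The plan is to follow the standard ``oracle separation by planted structure'' template, but instantiated with a \emph{permutation} oracle so that the single genuinely hard step of the lower bound becomes precisely the conjectured pseudorandomness statement (and so that the conjecture's failure exhibits the promised quantum advantage). For each input length $n$ I would let the oracle encode a permutation $\pi_n$ on $\{0,1\}^{n}$ (with the usual inverse access), drawn either from a ``planted'' distribution $\mathcal{D}^{\mathrm{yes}}_n$ or from the uniform distribution $\mathcal{D}^{\mathrm{no}}_n$ over permutations, and define the unary language $L=\{1^{n}: \pi_n\sim\mathcal{D}^{\mathrm{yes}}_n\}$. The distribution $\mathcal{D}^{\mathrm{yes}}_n$ should hide a gadget that (i) leaves $\pi_n$ looking essentially uniform to any query-bounded observer --- formally, $\mathcal{D}^{\mathrm{yes}}_n$ is a \emph{dense} distribution in the sense of the conjecture, and stays dense after conditioning on any $\poly(n)$ input/output pairs --- yet (ii) is ``unlocked'' by a suitable $\poly(n)$-qubit quantum state. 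Since $\mathcal{D}^{\mathrm{yes}}_n$ and $\mathcal{D}^{\mathrm{no}}_n$ agree up to negligible statistical distance once one also conditions $\mathcal{D}^{\mathrm{no}}_n$ on not accidentally containing the gadget, $L$ is well defined for a $1-o(1)$ fraction of oracles, and a Borel--Cantelli / union-bound-over-machines argument at the end upgrades ``$L$ separates the two classes for a random oracle with probability $1$'' into ``there is a fixed classical oracle $F$''.

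For the $\QMA^{F}$ upper bound I would exhibit a verifier that, on input $1^{n}$ and witness state $\ket{\psi}$, treats the gadget-unlocking state as the honest witness: it applies $\pi_n$ and $\pi_n^{-1}$ a constant number of times, interleaved with Fourier transforms and projective consistency checks (in the spirit of hidden-subspace-state verification), and accepts iff all checks pass. Completeness is a direct computation from the definition of the planted gadget. Soundness --- that for a \emph{uniformly random} permutation \emph{no} witness is accepted with non-negligible probability --- is the subtler half; I would argue it unconditionally, showing that any witness surviving the consistency checks forces $\pi_n$ to satisfy a system of constraints that a uniform permutation violates except with negligible probability (an amplitude-decay / rank argument). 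This is consistent with the conjecture, which only constrains algorithms receiving oracle access, whereas the honest $\QMA$ witness is oracle-dependent advice.

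For the $\QCMA^{F}$ lower bound I would argue by contradiction. Suppose a $\QCMA$ verifier $V$ decides $L$ with classical witnesses of length $m(n)=\poly(n)$ and $q(n)=\poly(n)$ queries. On a \textsc{yes} instance some witness is accepted with high probability, so by pigeonhole over the $2^{m(n)}$ witnesses there is, for infinitely many $n$, a single string $w^{\ast}_n$ accepted with high probability on a set $\mathcal{G}_n$ of $\mathcal{D}^{\mathrm{yes}}_n$-mass at least $2^{-m(n)}$. Conditioning $\mathcal{D}^{\mathrm{yes}}_n$ on $\mathcal{G}_n$ costs only $m(n)$ bits of min-entropy, so --- because $\mathcal{D}^{\mathrm{yes}}_n$ was built to be \emph{robustly} dense --- the conditional distribution is still a dense distribution on permutations. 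But $V(\cdot,w^{\ast}_n)$ is then a fixed $q(n)$-query quantum algorithm, so by the conjecture it cannot distinguish a permutation drawn from this dense distribution from a uniformly random one; since it accepts the former with high probability it must accept the latter with non-negligible probability, yet the latter are \textsc{no} instances of $L$, contradicting the correctness of $V$. Together with the upper bound and the diagonalization wrap-up, this proves the theorem.

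The main obstacle is the design of the planted gadget (equivalently, of $\mathcal{D}^{\mathrm{yes}}_n$): it must simultaneously (a) be hidable inside a distribution on permutations that is dense in exactly the technical sense the conjecture is phrased for, and that \emph{stays} dense under $\poly(n)$-fold conditioning --- this is what legitimizes the pigeonhole step; (b) admit a short quantum witness together with a constant-query verification procedure of overwhelming completeness; and (c) be useless without that witness, so that $\QMA$ soundness holds against a uniformly random permutation. A secondary but real difficulty is fixing the ``right'' notion of dense: weak enough that the planted distribution and all its relevant conditionings qualify, strong enough that the resulting pseudorandomness conjecture is believable and that the indistinguishability statement can be invoked cleanly as a black box in the $\QCMA$ argument.
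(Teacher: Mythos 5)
Your high-level template (maximize over witnesses, condition the Yes distribution, invoke the pseudorandomness conjecture to replace the conditioned distribution by something uniform-looking) matches the first half of the paper's argument, but there is a genuine gap at the pigeonhole step, and it hides the entire second half of the actual proof. Conditioning $\mathcal{D}^{\mathrm{yes}}_n$ on the $2^{-m(n)}$-mass set $\mathcal{G}_n$ of oracles accepted with the fixed witness $w^*_n$ only lowers min-entropy by $m(n)$ bits; it does \emph{not} preserve density, and no ``robustly dense'' design of $\mathcal{D}^{\mathrm{yes}}_n$ can make it do so: an adversarially chosen event of mass $2^{-m}$ can completely fix $\Theta(m/\log N)$ coordinates of the permutation, and a distribution with even one fixed coordinate is not $\delta$-dense for small $\delta$. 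This is exactly why the paper does not apply \Cref{conj:pseudorandomness} directly to the conditioned distribution. Instead it uses the min-entropy decomposition lemma (\Cref{lem:high-min-entropy-decomposition}) to write the conditioned distribution as a convex combination of $(k,\delta)$-dense sources, i.e.\ sources with $k=\poly(\log N)$ \emph{fixed} coordinates and density only outside them (and with the inverse permutations dense as well, which your sketch does not track but which is necessary, as the paper's $\rv{P}^{-1}(1)$-even example shows). The conjecture then only yields indistinguishability from the uniform distribution \emph{conditioned on the same fixings}, not from your plain-uniform No instances.

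Consequently the conjecture cannot be ``invoked cleanly as a black box'' to finish the lower bound: one is left with a query algorithm that distinguishes Yes instances consistent with some fixing data $\rho$ from No instances, and one must separately prove, unconditionally, that revealing $\poly(\log N)$ fixed vertices/edges is useless. In the paper this is the heart of \Cref{lem:bqpred}: a planting reduction that uses expander random walks to embed the fixing data $\rho$ into a fresh random instance, a $\mathrm{Reconnect}$ argument showing the planted oracle is exactly uniform over $\Lyes^\rho$, and finally a contradiction with the Ambainis--Childs--Liu expansion-testing lower bound (\Cref{lem:modified_acl}). Your proposal has no analogue of this step, and in your framework (Yes $=$ planted gadget, No $=$ uniform permutation) it is not even clear what unconditional query lower bound would play the role of \Cref{lem:modified_acl}; relatedly, the ``gadget'' whose existence you defer is precisely where the paper commits to a concrete problem ($\Components$: two components vs.\ expander, with the signed-superposition $\QMA$ witness) so that both the $\QMA$ soundness analysis and the uselessness-of-fixings reduction can actually be carried out. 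As written, the proposal is therefore incomplete in a way that design choices alone cannot repair: the density-preservation claim is false, and the missing fixed-coordinates argument is a substantial, problem-specific component rather than a routine wrap-up.
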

\noindent In other words, we prove a \emph{conditional} separation relative to a standard classical oracle and without any restrictions on the verifier.

\subsection{A quantum pseudorandomness conjecture}

To explain the quantum pseudorandomness conjecture, we first explain the notion of a \emph{dense distribution}. We say that a random variable $\rv{F}$ over $\{0,1\}^N$ is \emph{$\delta$-dense} if for all subsets $S \subseteq [N]$ of coordinates, the marginal distribution of $F \sim \rv{F}$ restricted to the coordinates in $S$ has min-entropy at least $(1 - \delta)|S|$.\footnote{In the theoretical computer science literature this condition is often called ``blockwise dense,'' but we use ``dense'' for brevity.} In other words, every subset of coordinates has near maximal min-entropy. 

We now present a simplified version of our quantum pseudorandomness conjecture.

\begin{conjecture}[Simplified quantum pseudorandomness conjecture]
\label{conj:simplified}
Let $A$ be a quantum query algorithm that makes $T$ queries to an oracle $F \in \{0,1\}^N$. Let $\rv{F}$  denote a $\delta$-dense random variable over $\{0,1\}^N$. Then
\[
    \Big | \Pr_{F \sim \rv{U}}[A^F = 1] - \Pr_{F \sim \rv{F}}[A^F = 1] \Big| \leq \poly(T) \cdot \poly(\delta)
\]
where $\rv{U}$ denotes a uniformly random $N$-bit string, and the probabilities are over both the randomness of the algorithm $A$ as well as the randomness of sampling $F$.
\end{conjecture}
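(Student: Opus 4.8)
The plan is to first prove the \emph{classical} analogue of \Cref{conj:simplified}, which has a short and clean proof, and then to ``lift'' that argument to quantum query algorithms using the compressed-oracle technique.

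\emph{Classical warm-up.} For a set $P \subseteq [N]$ and an assignment $v \in \{0,1\}^{P}$, let $L_P(v) := 2^{|P|}\,\Pr[\rv{F}_P = v]$ be the likelihood ratio of the $\rv{F}$-marginal on $P$ against the uniform marginal. Two observations drive everything: (i) $\E_{v \sim \rv{U}_P}[L_P(v)] = 1$ since $L_P$ is a likelihood ratio, and (ii) $\delta$-density gives $\Pr[\rv{F}_P = v] \le 2^{-(1-\delta)|P|}$, so $L_P(v) \le 2^{\delta|P|} \le 2^{\delta T}$ whenever $|P| \le T$. A classical (w.l.o.g.\ deterministic, adaptive) $T$-query algorithm assigns to each transcript $\tau$ a probability under $\rv{U}$ and under $\rv{F}$ whose ratio is exactly $L_{P_\tau}(v_\tau) \in [0, 2^{\delta T}]$; summing over accepting transcripts, and then over the complement, yields the two-sided bound $|\Pr_{\rv U}[A=1] - \Pr_{\rv F}[A=1]| \le 2^{\delta T} - 1 = O(\delta T)$ for $\delta T \le 1$, which is of the desired $\poly(T)\cdot\poly(\delta)$ form. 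One subtlety to flag: $\delta$-density only \emph{upper}-bounds $\Pr[\rv{F}_P = v]$ (the support may be a measure-zero, code-like set in some directions), so one really must pit accepting and rejecting transcripts against each other rather than hope for a pointwise two-sided ratio.

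\emph{Quantum lift.} I would purify a uniformly random $F$ via Zhandry's compressed oracle and a $\delta$-dense $\rv{F}$ via the analogous lazily-sampled oracle: both keep a database $D$ of at most $T$ recorded position/value pairs, and a query to a fresh position $i$ is answered by placing that cell in the state $\sqrt{p_0^{D}}\ket{0} + \sqrt{p_1^{D}}\ket{1}$, where $(p_0^D, p_1^D) = (\tfrac12,\tfrac12)$ for the uniform oracle and equals the conditional distribution of $\rv{F}_i$ given $\rv{F}_D = D$ for the dense oracle (decompressing only on the nonzero branches, so code-like $\rv{F}$ with some vanishing conditional probabilities cause no trouble). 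The two purified oracles then differ \emph{only} through this decompression unitary, so a hybrid argument over the $T$ queries bounds the distinguishing advantage by $T$ times an amplitude-weighted measure of how far $(p_0^D,p_1^D)$ is from uniform over databases reachable in at most $T$ queries; the $\delta$-density estimate $L_P(v) \le 2^{\delta T}$ together with the mean-one normalization should be exactly what controls that measure.

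\emph{Main obstacle.} The hard step is the amplitude-weighted per-query bound in the quantum hybrid. Classically, ``bounded likelihood ratio on every transcript'' converts immediately into bounded advantage; quantumly the algorithm holds a coherent superposition over databases, and it is constructive/destructive interference among many slightly-biased branches --- not any single-transcript deviation --- that could be amplified. This is precisely the regime of Simon- and Shor-type speedups: hidden linear or group structure is consistent with $\delta$-density for $\delta \approx 1/\poly$, and coherent queries can in general exploit such global structure. Ruling this out uniformly over \emph{all} $\delta$-dense $\rv{F}$ amounts to a genuine quantum query lower bound against arbitrary structured oracles, which is why we only conjecture it. Sensible intermediate goals: prove the bound first for structured sub-families (affine subspaces, random low-rate linear codes, and the dense distributions used in our oracle construction), and separately pin down the $T$--$\delta$ dependence --- the classical bound is $O(\delta T)$, and any quantum bound of the form $\poly(T)\cdot\poly(\delta)$, e.g.\ $O(T\sqrt{\delta})$ or $O(\delta T^{2})$, already suffices for the intended separation.
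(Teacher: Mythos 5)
You have set out to prove \Cref{conj:simplified}, but note that the paper does not prove this statement at all: it is stated and used as a \emph{conjecture}, and the whole point of the main theorem is that the $\QMA$ vs.\ $\QCMA$ separation is \emph{conditional} on it. So there is no proof in the paper to compare yours against, and your own proposal, by its candid admission, also does not prove it --- the ``amplitude-weighted per-query bound'' in your compressed-oracle hybrid is exactly the open quantum core. Your classical warm-up is sound (the transcript likelihood-ratio argument, applied one-sidedly to the accepting and rejecting transcripts separately, does give advantage $2^{\delta T}-1 = O(\delta T)$), but this reproduces what the paper already attributes to Unruh and to Coretti--Dodis--Guo--Steinberger; it is evidence for the conjecture, not progress on the quantum case.

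Where a comparison is meaningful is with the paper's partial evidence, \Cref{lem:weaker}, which takes a genuinely different route from your proposed lift: instead of purified/compressed oracles and a query-by-query hybrid, it goes through the polynomial method, the Dinur--Friedgut--Kindler--O'Donnell junta approximation, and $(2,q)$-hypercontractivity, yielding a bound of the form $\sqrt{\delta}\,2^{O(T)}/\eps + O((\delta N)^{T/2}\eps)$ --- exponential in $T$ and with an explicit $N$-dependence, so useful only for very small $T$ and $\delta$. A compressed-oracle argument, if it could be completed, would plausibly avoid the exponential $T$ dependence, which is why it is an attractive direction; but be aware that the paper explicitly ties the difficulty of \Cref{conj:simplified} to the Aaronson--Ambainis conjecture (\Cref{conj:aa-intro}), since \Cref{conj:simplified} implies it, so your ``main obstacle'' paragraph is understating the barrier: a per-query bound uniform over all $\delta$-dense sources would resolve a long-standing open problem. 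Two further caveats: the classical-to-quantum analogy is delicate precisely because the dense-oracle decompression unitary is not a small perturbation of the uniform one in operator norm (only in an average, likelihood-ratio sense), so the naive hybrid does not close; and if your eventual goal is the separation in this paper, the statement you actually need is the permutation version \Cref{conj:pseudorandomness} (with denseness of both $\rv{P}$ and $\rv{P}^{-1}$), which the paper notes is not known to be comparable to \Cref{conj:simplified}.
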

In other words, even though a dense random variable $\rv{F}$ may be far from uniform\footnote{An example of a dense distribution is the uniform distribution over even-parity strings: this is $\frac{1}{N}$-dense, but is $\Omega(1)$ far in statistical distance from the uniform distribution over all strings.}, no quantum algorithm can distinguish it from uniform by making few queries to an oracle sampled from it. In other words, the conjecture posits that dense distributions look \emph{pseudorandom} to all polynomial-query quantum algorithms. 

The conjecture is true for classical probabilistic query algorithms; this was essentially proved by Unruh~\cite{unruh2007random} (and later improved by Coretti, Dodis, Guo, and Steinberger~\cite{coretti2018random}) in the context of studying the random oracle model (from cryptography) when the adversary is provided some short advice about the random oracle. Guo, Li, Liu, and Zhang~\cite{guo2021unifying} first formulated the quantum pseudorandomness conjecture above and showed that it is closely linked to the \emph{Aaronson-Ambainis conjecture}, formulated by Aaronson and Ambainis in~\cite{aaronson2009need}:
\begin{conjecture}[Aaronson-Ambainis conjecture~\cite{aaronson2009need}]
\label{conj:aa-intro}
Let $p: \{0,1\}^N \to [0,1]$ be a degree-$d$ real polynomial. Then there exists a coordinate $j \in [N]$ such that $\mathrm{Inf}_j(p) \geq \poly(\mathrm{Var}(p)/d)$.
\end{conjecture}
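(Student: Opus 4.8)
The plan is to prove the \emph{contrapositive}, assuming \Cref{conj:simplified}. Suppose \Cref{conj:aa-intro} failed, so there is a bounded degree-$d$ polynomial $p\colon\{0,1\}^N\to[0,1]$ with $\mathrm{Var}(p)=\sigma^2$ and $\mathrm{Inf}_j(p)\le\tau$ for every coordinate $j$, with $\tau$ much smaller than any fixed polynomial in $\sigma/d$. I will build a few-query quantum algorithm that distinguishes a \emph{dense} distribution from uniform with advantage exceeding $\poly(T)\poly(\delta)$, contradicting \Cref{conj:simplified}. For the bulk of the argument I work with the technically central case in which $p$ is the acceptance probability of a $T$-query quantum algorithm $A$ (so that I literally have a quantum query algorithm to invoke the conjecture with); reducing the general case of bounded low-degree polynomials to this one, and relating $T$ to $d$, is a separate step that I set aside.

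The core object is a distribution $\rv{F}$ manufactured from $p$ that is simultaneously \emph{dense} and \emph{distinguished from uniform by $A$}. The natural choice is to tilt the uniform distribution toward inputs that $A$ accepts: give $x$ weight proportional to $p(x)$ (or, to keep things well-behaved, proportional to a Lipschitz truncation of $p$ near a high threshold). Then running $A$ once is already a distinguisher, since a direct computation gives $\Pr_{F\sim\rv{F}}[A^F=1]-\Pr_{F\sim\rv{U}}[A^F=1]\ge \mathrm{Var}(p)/\E[p]\ge\sigma^2$. For the density side, one checks that $\rv{F}$ is $\delta$-dense precisely when no subcube $\{x\colon x_S=a\}$ inflates $\E[p]$ by more than a factor $2^{\delta|S|}$; and this inflation telescopes, over $j\in S$, into a sum of conditional expectations of the discrete derivatives $D_jp$, each controlled by $\sqrt{\mathrm{Inf}_j(p)}\le\sqrt\tau$ (up to a $2^{O(|S|)}$ factor incurred by conditioning on the earlier coordinates of $S$). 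Since moreover $\delta$-density is automatic once $|S|\gtrsim\log(1/\sigma)/\delta$, this shows $\rv{F}$ is $\delta$-dense for some $\delta$ that is small when $\tau$ is; plugging into \Cref{conj:simplified} yields $\sigma^2\le\poly(T)\poly(\delta)$ and hence a lower bound on $\max_j\mathrm{Inf}_j(p)$.

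Executed in this direct way, though, the calculation only affords $\delta=O(1/\log(1/\tau))$, hence merely the quasi-polynomial conclusion $\max_j\mathrm{Inf}_j(p)\ge 2^{-\poly(d/\sigma)}$, whereas \Cref{conj:aa-intro} demands a genuine polynomial lower bound. To get the polynomial bound I would instead route through a ``junta-flattening'' statement: \Cref{conj:simplified} ought to imply that for every $T$-query quantum algorithm and every $\epsilon>0$ there is a set $S$ of merely $\poly(T/\epsilon)$ coordinates with $\E_{x_S}[\mathrm{Var}(p\mid x_S)]\le\epsilon$, because otherwise the statement ``no small revealed set of coordinates brings the residual variance below $\epsilon$'' can be repackaged as an honestly $\delta$-dense distribution, with $\delta=\poly(\epsilon/T)$, on which $A$ has distinguishing advantage $\Omega(\epsilon)$. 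Granting this and applying it with $\epsilon=\sigma^2/2$: the coarsening $\overline p:=\E[p\mid x_S]$ has $\mathrm{Var}(\overline p)=\sigma^2-\E_{x_S}[\mathrm{Var}(p\mid x_S)]\ge\sigma^2/2$; as a real function of only $|S|=\poly(d/\sigma)$ coordinates it therefore has total influence $\ge\sigma^2/2$, so some coordinate $j\in S$ has $\mathrm{Inf}_j(\overline p)\ge\sigma^2/(2|S|)=\poly(\sigma/d)$; and because $\widehat{\overline p}(U)=\widehat p(U)$ for $U\subseteq S$, this forces $\mathrm{Inf}_j(p)\ge\mathrm{Inf}_j(\overline p)\ge\poly(\mathrm{Var}(p)/d)$, which is exactly \Cref{conj:aa-intro}.

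I expect the main obstacle to be precisely the junta-flattening implication --- turning ``every small revealed set leaves residual variance above $\epsilon$'' into a bona fide $\delta$-dense distribution (with $\delta=\poly(\epsilon/T)$) that $A$ distinguishes from uniform with advantage $\Omega(\epsilon)$. A distribution with large residual variance after any small reveal need not itself be dense, so I anticipate needing a careful iterative construction --- greedily revealing the currently most informative coordinate while a potential/energy argument certifies that the density parameter stays polynomially small and the accumulated bias stays $\Omega(\epsilon)$ --- together with anti-concentration of low-degree polynomials (Carbery--Wright) to handle the threshold probabilities appearing in the hard-truncation variant, and quite possibly the full (non-simplified) form of the conjecture in order to tolerate distributions that are only \emph{close to} dense. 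The reduction from general bounded low-degree polynomials to quantum acceptance probabilities that I invoked at the outset is a secondary point worth isolating, as it is what permits a conjecture about quantum query algorithms to say anything about arbitrary bounded low-degree polynomials in the first place.
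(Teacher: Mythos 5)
You were asked about \Cref{conj:aa-intro}, which is the Aaronson--Ambainis conjecture itself: it is stated in the paper as an open conjecture, with no proof, so there is no proof in the paper to compare against --- and your proposal, being explicitly conditional on \Cref{conj:simplified} (another open conjecture), could not in any case constitute a proof of the statement as written. The closest the paper comes is \Cref{prop:conj-uniform-implies-aa}, which assumes \Cref{conj:simplified} and derives only a \emph{reformulation} of Aaronson--Ambainis stated for quantum query algorithms (Conjecture 3 of~\cite{guo2021unifying}): one conditions on the set of oracles whose acceptance probability exceeds $\mu+\sigma/2$ (a set of measure at least $3\sigma^2/8$, hence of min-entropy deficit $O(\log 1/\sigma^2)$), applies the min-entropy decomposition lemma to write this as a convex combination of $(k,\delta)$-dense sources with $k=\poly(T/\sigma)$, and then uses \Cref{conj:simplified} to swap the dense part for uniform, yielding a bit-fixing source on which the algorithm is biased by $\Omega(\sigma)$. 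This is a different and much shorter route than yours: conditioning on a high-acceptance event plus the decomposition lemma, rather than tilting the uniform measure by $p$ or any junta-flattening argument.

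Even judged purely as a conditional argument, your proposal has two genuine gaps, both of which you flag but neither of which is dispensable. First, the polynomial bound rests entirely on the ``junta-flattening'' implication, which you do not prove; your fallback direct argument (tilting by $p$ and bounding subcube inflation via discrete derivatives, at a $2^{O(|S|)}$ loss) concededly gives only $\delta=O(1/\log(1/\tau))$ and hence the quasi-polynomial bound $2^{-\poly(d/\sigma)}$, i.e.\ the already-known weak form (compare \Cref{lem:weaker} and the Dinur--Friedgut--Kindler--O'Donnell route), not \Cref{conj:aa-intro}. Turning ``every small reveal leaves residual variance above $\eps$'' into an honestly $\delta$-dense distribution with $\delta=\poly(\eps/T)$ is essentially the whole difficulty --- such a distribution need not be dense, and the iterative/potential construction you sketch is exactly the step nobody knows how to carry out. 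Second, the reduction you ``set aside,'' from arbitrary bounded degree-$d$ polynomials to acceptance probabilities of $\poly(d)$-query quantum algorithms, is not a secondary point: \Cref{conj:simplified} quantifies only over quantum query algorithms, and it is not known that every bounded degree-$d$ polynomial is (even approximately) realized as the acceptance probability of a $\poly(d)$-query algorithm. This is precisely why the paper and~\cite{guo2021unifying} prove only the quantum-algorithm reformulation rather than \Cref{conj:aa-intro} as stated. Your closing steps (law of total variance, the Poincar\'e inequality on the coarsened function, and $\mathrm{Inf}_j(p)\ge\mathrm{Inf}_j(\overline p)$) are sound, but they rest on these two unproven pillars.
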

\noindent Here, $\mathrm{Inf}_j(p)$ denotes the influence of the $j$'th variable in $p$ and $\mathrm{Var}(p)$ denotes $p$'s variance over the uniform distribution. 

The link between \Cref{conj:simplified} and the Aaronson-Ambainis conjecture is that they both imply that every quantum query algorithm can be classically simulated on \emph{most} inputs (with respect to the uniform distribution) with only a polynomial overhead.  
Just like the $\QMA$ vs. $\QCMA$ question, the Aaronson-Ambainis conjecture is also one of the central open problems in quantum complexity theory. Despite significant attention and a slew of partial results~\cite{o2015polynomial,arunachalam2019quantum,bansal2022influence,gutierrez2023influences,lovett2023fractional,bhattacharya2024aaronson}, the question remains open.

The difficulty of resolving the Aaronson-Ambainis conjecture (and its conclusion about the classical simultability of quantum query algorithms) suggests that proving \Cref{conj:simplified} will also be quite challenging. However there is some evidence that the pseudorandomness conjecture is true; in \Cref{sec:pseudorandomness} we prove a special case of it when both the density parameter $\delta$ and the number of queries $T$ is very small.

\begin{restatable}[Weaker version of \Cref{conj:simplified}]{lemma}{lemweaker}
\label{lem:weaker}
For all $\eps > 0$, for all $T$-query algorithms $A$, and for all $\delta$-dense random variables $\rv{F}$ over $\{0,1\}^N$, we have
\[
    \Big | \Pr_{F \sim \rv{U}}[A^F = 1] - \Pr_{F \sim \rv{F}}[A^F = 1] \Big| \leq \sqrt{\delta} 2^{O(T)}/\eps + O \Big( (\delta N)^{T/2} \eps \Big)~.
\]
\end{restatable}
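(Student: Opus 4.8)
The plan is to bound the distinguishing advantage using the polynomial method together with a moment bound on how large a low-degree polynomial can be on a dense distribution. First I would recall the standard fact (from Beals–Buhrman–Cleve–Mosca–de Wolf) that for a $T$-query algorithm $A$, the acceptance probability $p(F) = \Pr[A^F = 1]$ is a real multilinear polynomial of degree at most $2T$ in the bits of $F$. So the quantity we want to bound is $|\E_{\rv U}[p] - \E_{\rv F}[p]|$ for a degree-$2T$ polynomial $p: \{0,1\}^N \to [0,1]$.

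The key step is to split $p$ according to its Fourier expansion on the uniform distribution, $p = \sum_{S: |S| \le 2T} \hat p(S)\, \chi_S$. The constant term $\hat p(\emptyset) = \E_{\rv U}[p]$ cancels, so $\E_{\rv F}[p] - \E_{\rv U}[p] = \sum_{1 \le |S| \le 2T} \hat p(S)\, \E_{\rv F}[\chi_S]$. I would estimate this sum in two regimes using a threshold $\eps$ on $|\hat p(S)|$ (this is where the free parameter $\eps$ enters). For the ``large-coefficient'' terms with $|\hat p(S)| > \eps$: since $\sum_S \hat p(S)^2 = \E_{\rv U}[p^2] \le 1$, there are at most $1/\eps^2$ such terms, and for each one I would bound $|\E_{\rv F}[\chi_S]|$ using density — a $\delta$-dense distribution restricted to the $|S|$ coordinates of $S$ has min-entropy at least $(1-\delta)|S|$, which forces the bias of the parity $\chi_S$ to be at most roughly $\sqrt{\delta |S|} \le \sqrt{\delta \cdot 2T}$, or more crudely $\sqrt{\delta}\, 2^{O(T)}$ after summing over the (at most $2^{O(T)}$) choices; combined with the count $1/\eps^2$ and $|\hat p(S)| \le 1$ this yields the $\sqrt{\delta}\, 2^{O(T)}/\eps$ term. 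For the ``small-coefficient'' terms with $|\hat p(S)| \le \eps$: I would apply Cauchy–Schwarz, bounding $\sum_{1 \le |S| \le 2T}|\hat p(S)|\cdot |\E_{\rv F}[\chi_S]|$ by $\eps \sum_{1 \le |S| \le 2T}|\E_{\rv F}[\chi_S]|$, and then control $\sum_{|S| \le 2T}|\E_{\rv F}[\chi_S]|$ by a counting-plus-bias argument: there are $O(N^{2T})$ sets of size $\le 2T$ and each parity has bias $O(\sqrt{\delta})$-ish, or alternatively a second-moment estimate $\sum_{|S|=k}\E_{\rv F}[\chi_S]^2 \le (\delta N)^k$-type bound, giving the $O((\delta N)^{T/2}\eps)$ term. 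Balancing the two error terms over the choice of $\eps$ recovers the stated inequality.

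The main obstacle I anticipate is getting the right quantitative control on $\E_{\rv F}[\chi_S]$ from the density hypothesis — in particular, establishing a clean bound like $\sum_{|S| = k} \big(\E_{\rv F}[\chi_S]\big)^2 \le (O(\delta) N)^{k}$ (or an $\ell_1$ analogue) for a $\delta$-dense $\rv F$. The min-entropy condition on every subset of coordinates is exactly the hypothesis one wants, but translating ``every $k$-subset has min-entropy $\ge (1-\delta)k$'' into a bound on Fourier mass at level $k$ requires care: one natural route is to write $\E_{\rv F}[\chi_S]$ in terms of the collision probability or the $\ell_2$ distance of the $S$-marginal from uniform, bound that distance by $2^{-(1-\delta)|S|} \cdot 2^{|S|} - 1 = 2^{\delta|S|}-1 = O(\delta |S|)$ when $\delta |S|$ is small, and then sum. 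Handling the regime where $\delta \cdot 2T$ is not small (so that $2^{\delta |S|}$ is not close to $1$) is what forces the crude $2^{O(T)}$ factors and explains why the lemma is only a ``weaker version'' — it is non-trivial only when $T$ and $\delta$ are both quite small.
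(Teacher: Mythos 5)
Your skeleton (polynomial method, comparing $\E_{\rv{U}}[p]$ and $\E_{\rv{F}}[p]$ for a bounded degree-$2T$ polynomial $p$, and exploiting density of $\rv{F}$ level by level) is the right starting point, but the decomposition you propose --- thresholding the Fourier coefficients of $p$ at magnitude $\eps$ --- cannot deliver the stated bound, and the step you yourself flag as the main obstacle is exactly where it fails. For the small-coefficient part your $\ell_\infty$--$\ell_1$ pairing needs $\sum_{1\le|S|\le 2T}|\E_{\rv{F}}[\chi_S]| = O((\delta N)^{T/2})$, and this is false for $\delta$-dense $\rv{F}$: take $\rv{F}$ to be a product of independent bits each with bias $\alpha = 2^{\delta}-1 \geq \delta/2$ (this is $\delta$-dense), so that $\E_{\rv{F}}[\chi_S] = \alpha^{|S|}$ and $\sum_{|S|=2T}|\E_{\rv{F}}[\chi_S]| \geq \binom{N}{2T}(\delta/2)^{2T} \geq (\delta N/(4T))^{2T}$, which dwarfs $(\delta N)^{T/2}$ (your cruder ``$N^{2T}$ sets times individual bias'' count is even larger, $\approx \delta\, N^{2T}$, making the bound vacuous at the parameters $T=\tfrac14\sqrt{\log N}$, $\delta = 2^{\sqrt{\log N}}/N$ quoted right after the lemma). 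Your alternative, the level-$k$ second-moment estimate $\sum_{|S|=k}\E_{\rv{F}}[\chi_S]^2 \le (O(\delta) N)^k$, is in fact true (it is essentially what the paper's H\"older-plus-hypercontractivity step encodes, and it also follows from an averaging argument over marginals on sets of size about $k/\delta$), but pairing it against the small-coefficient part by Cauchy--Schwarz only uses $\sum_{\mathrm{small}}\hat p(S)^2\le 1$, so the factor $\eps$ disappears and you get $(O(\delta N))^{T}$ with no $\eps$ --- again insufficient at those parameters. (A smaller slip: for the large coefficients, ``at most $1/\eps^2$ terms, each $|\hat p(S)|\le 1$'' gives $\sqrt{\delta}\,2^{O(T)}/\eps^2$; you should instead use $\sum_{|\hat p(S)|>\eps}|\hat p(S)|\le \eps^{-1}\sum_S\hat p(S)^2\le 1/\eps$, which fixes that term.)

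The missing ingredient is a structural theorem about \emph{bounded} low-degree polynomials: a magnitude threshold controls the $\ell_\infty$ norm of the leftover coefficients but not their $\ell_2$ mass, and a priori a bounded degree-$2T$ polynomial can carry $\Theta(1)$ Fourier weight spread over far more than $1/\eps^2$ coefficients each smaller than $\eps$, sitting exactly where $\E_{\rv{F}}[\chi_S]$ is large within its level-$k$ budget. Ruling this out is precisely what the paper's use of the Dinur--Friedgut--Kindler--O'Donnell theorem (\Cref{thm:dfko}) accomplishes: $p$ is $\eps$-close in $\ell_2$ to a $2^{O(T)}/\eps^2$-junta, the junta part sees only a marginal of $\rv{F}$ on $2^{O(T)}/\eps^2$ coordinates which density forces to be $O(\delta\cdot 2^{O(T)}/\eps^2)$-close to uniform (yielding $\sqrt{\delta}\,2^{O(T)}/\eps$), and the residual has $\ell_2$ norm at most $\eps$ and is paired against the density via H\"older and $(2,q)$-hypercontractivity (\Cref{lem:hypercontractivity}), yielding the $(\delta N)^{T/2}\eps$-type term. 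So the crucial smallness-of-the-residual is in the $\ell_2$ sense over a small set of \emph{coordinates}, not in the $\ell_\infty$ sense over coefficients; to repair your argument you would have to prove (or cite) exactly such a junta approximation, which is why the paper, and its remark tying this lemma to the weak Aaronson--Ambainis bound, routes the proof through DFKO rather than a coefficient threshold.
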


For example, by setting $T = \frac{1}{4}\sqrt{\log N}$, $\delta = 2^{\sqrt{\log N}}/N$, and $\eps = N^{-1/3}$, \Cref{lem:weaker} implies that $T$-query algorithms cannot distinguish between $\delta$-dense distributions and the uniform distribution with more than $O(\sqrt{\delta}/\eps + N^{1/4 - 1/3}) = o(1)$ bias. 

This lemma is analogous to a weaker version of the Aaronson-Ambainis conjecture which establishes the existence of a variable with influence at least $\poly(\mathrm{Var}(p)/2^{O(d)})$, i.e., there is an exponential dependence on the degree of the polynomial. This weaker version was observed by Aaronson and Ambainis in~\cite{aaronson2009need} to follow from a powerful Fourier-analytic result of Dinur, Friedgut, Kindler, and O'Donnell~\cite{dinur2006fourier}.

\paragraph{The pseudorandomness conjecture for random permutations.} We now state the version of the quantum pseudorandomness conjecture that is needed for our $\QMA$ vs. $\QCMA$ separation result. First, recall the  \emph{permutation oracle} model, in which the quantum algorithm can query a permutation $P: [N] \to [N]$ and its inverse $P^{-1}$; that is, the algorithm can query the unitary operations $U_P: \ket{x,y} \mapsto \ket{x,y \oplus P(x)}$ and $U_{P^{-1}}: \ket{x,y} \mapsto \ket{x,y \oplus P^{-1}(x)}$. 

We furthermore consider the setting where the quantum algorithm can query a number of permutations $P_1,\ldots,P_R$, where each $P_r$ is a permutation on $[N_r]$ for some integer $N_r$. To query permutation $r \in [R]$ (resp., its inverse), the algorithm calls a unitary $U_{P_r}: \ket{x,y} \mapsto \ket{x,y \oplus P_r(x)}$ (resp., the unitary $U_{P_r^{-1}}: \ket{x,y} \mapsto \ket{x,y \oplus P_r^{-1}(x)}$).

\begin{restatable}[Quantum pseudorandomness conjecture for random permutations]{conjecture}{pseudorandomness}
\label{conj:pseudorandomness}
Let $A$ be a quantum query algorithm that makes $T$ queries to a collection of permutations $(P_1,\ldots,P_R)$ where each $P_r$ permutes $[N_r]$ for some integer $N_r$. Let $\rv{P}$ denote a random variable over $S_{N_1} \times \cdots \times S_{N_R}$ such that both $\rv{P}$ and the inverse permutation $\rv{P}^{-1}$ are $\delta$-dense. Then
\[
    \Big | \Pr_{P \sim \rv{Z}}[A^P = 1] - \Pr_{P \sim \rv{P}}[A^P = 1] \Big| \leq \poly(T, \log N_1,\ldots,\log N_R) \cdot \poly(\delta)
\]
where $\rv{Z}$ denotes a tuple of uniformly random permutations sampled from $S_{N_1} \times \cdots \times S_{N_R}$. Here the $\poly(\cdot)$ notation denotes universal polynomials that are independent of $N_1,\ldots,N_r,T,\delta$ (but can depend on $R$).
\end{restatable}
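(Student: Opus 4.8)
The plan is to reduce \Cref{conj:pseudorandomness} to the Boolean-string version \Cref{conj:simplified} (for the conditional statement) or to \Cref{lem:weaker} (for an unconditional statement in a restricted regime), by encoding each permutation together with its inverse as a pair of truth tables. For a permutation $P_r$ on $[N_r]$, let $F_r \in \{0,1\}^{N_r\lceil \log N_r\rceil}$ list $P_r(1),\ldots,P_r(N_r)$ in binary, let $F_r'$ be the analogous string for $P_r^{-1}$, and set $F = (F_1,F_1',\ldots,F_R,F_R')\in\{0,1\}^N$ with $N = \sum_r 2N_r\lceil\log N_r\rceil$. A single query to $U_{P_r}$ (resp.\ $U_{P_r^{-1}}$) is simulated by $O(\log N_r)$ queries to the corresponding block of $F$ --- read the bits of $P_r(x)$ into an ancilla, apply the intended operation, uncompute --- so a $T$-query permutation algorithm $A$ becomes a $T' = O(T\cdot\max_r\log N_r)$-query string algorithm $A'$ with $\Pr[A'^{F}=1]=\Pr[A^{P}=1]$ for the corresponding $P$.

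The heart of the reduction is a density-transfer step: one wants that if $\rv{P}$ and $\rv{P}^{-1}$ are $\delta$-dense then $\rv{F}$ is $\delta'$-dense for some $\delta' = \poly(\delta,\max_r\log N_r)$. If ``$\delta$-dense'' for permutations is taken to mean precisely that the truth-table string is $\delta$-dense (which is exactly why the conjecture asks for both $\rv{P}$ and $\rv{P}^{-1}$ to be dense), this is essentially immediate. Under a more intrinsic notion of permutation-density --- near-maximal min-entropy of $(P(x))_{x\in S}$ subject to injectivity --- one has to argue that (i) slicing a single value $P_r(x)\in[N_r]$ into a subset of its bits loses at most an additive $\log N_r$ in min-entropy, which is the source of the $\log N_r$ factor, and (ii) the injectivity correlations only help, since conditioning on already-revealed pairs leaves the remaining values \emph{more} spread out than i.i.d.\ sampling would. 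Feeding the $\delta'$-dense $\rv{F}$ and $A'$ into \Cref{conj:simplified} gives the bound $\poly(T,\log N_1,\ldots,\log N_R)\cdot\poly(\delta)$ (with the $R$-dependence allowed, as in the statement); feeding them into \Cref{lem:weaker} instead gives an unconditional bound whenever $T$ and $\delta$ are small enough that the $2^{O(T')}\sqrt{\delta'}$ and $(\delta' N)^{T'/2}$ terms are $o(1)$.

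A second, more self-contained route avoids \Cref{conj:simplified} altogether by adapting the compressed-oracle methodology to permutations: maintain a database $D$ of input--output pairs revealed so far, write the algorithm's reduced state as a superposition over databases, and bound query-by-query how much replacing a uniform permutation by a $\delta$-dense one perturbs the amplitudes. The key estimate is that any size-$t$ database is consistent with both distributions with probabilities differing by a factor $1\pm\poly(\delta)\cdot t\log N$, so the ``fresh query'' branch of each oracle call is perturbed by $\poly(\delta)\cdot\poly(\log N)$ and a hybrid over the $T$ queries accumulates total error $\poly(T,\log N)\cdot\poly(\delta)$. This route has the advantage of directly targeting the $\poly(T)$ dependence (rather than $2^{O(T)}$) and of handling $P$ and $P^{-1}$ symmetrically within a single database.

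I expect the main obstacle to be the one the paper already flags for the string case: a genuine $\poly(T)$ bound in \Cref{conj:simplified} is open and appears to require the Aaronson--Ambainis conjecture, so the string-reduction route yields only a conditional statement and, unconditionally, only the tiny-parameter regime of \Cref{lem:weaker}. For the direct compressed-permutation-oracle route, the extra difficulty is that the compressed-oracle formalism for random \emph{permutations} with inverse access is considerably more delicate than for random functions --- the injectivity constraint makes the per-query transition map non-diagonal and history-dependent --- and no existing version of it is set up to simulate an \emph{arbitrary} dense distribution rather than the uniform permutation or a specific planted structure. Controlling that interaction cleanly, with only $\poly(\delta)$ loss per query and no exponential blow-up in $T$, is where a genuinely new idea seems to be needed; absent that, the realistic target is the permutation analogue of \Cref{lem:weaker}, with the full conjecture remaining open.
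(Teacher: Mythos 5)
The statement you are trying to prove is \Cref{conj:pseudorandomness} itself, which the paper does not prove: it is stated as a conjecture, used only as a hypothesis for the main theorem, and the authors explicitly list as open questions both whether \Cref{conj:simplified} implies \Cref{conj:pseudorandomness} and whether even a permutation analogue of \Cref{lem:weaker} can be shown (Section 4.3 records concrete obstructions, e.g.\ the failure of the single-permutation hybrid and the non-product structure defeating Boolean Fourier analysis). So your first route, if it worked, would resolve an open problem the authors deliberately pose; and indeed its ``density-transfer step'' is exactly where it breaks. Two concrete failures: (a) your string $F$ contains both the forward and the inverse truth tables, which are deterministic functions of each other, so even for a \emph{uniformly random} permutation the min-entropy of $F$ is at most about half its length; hence $F$ is never $\delta'$-dense for small $\delta'$, and \Cref{conj:simplified} applied to it gives a vacuous $\poly(\delta')=\Omega(1)$ bound. (b) Even dropping the inverse tables, the bit-encoding of a uniformly random permutation has min-entropy deficit $\Theta(N)$ out of $N\log N$ bits, so it is only $\Theta(1/\log N)$-dense as a string; since \Cref{conj:simplified} compares against the uniform distribution over \emph{all} strings (not over encodings of permutations), any triangle-inequality argument incurs an error $\poly(T)\cdot\poly(1/\log N)$, which is not of the form $\poly(T,\log N)\cdot\poly(\delta)$ and is far too weak in the regime the paper actually needs ($\delta\approx\poly(\log N)/N$). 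Your claim (i) that slicing a value into bits ``loses at most an additive $\log N_r$'' also ignores that string-density is a per-subset multiplicative requirement, so an additive loss is already fatal for small bit-subsets unless one redoes the bookkeeping carefully.

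The second route is likewise not a proof: the key estimate, that every size-$t$ database has probability under the dense distribution within a multiplicative $1\pm\poly(\delta)\,t\log N$ of its probability under the uniform permutation, is false. Denseness only upper-bounds probabilities (a min-entropy condition); a $\delta$-dense permutation distribution with tiny $\delta$ can assign probability zero to particular input--output pairs (e.g.\ $\rv{P}(1)$ uniform on $[N]\setminus\{1\}$), so no two-sided multiplicative closeness of database weights holds, and the whole point of the conjecture is that dense distributions can be statistically far from uniform while remaining query-indistinguishable. Moreover no compressed-oracle formalism currently simulates an arbitrary dense permutation distribution with inverse access, as you acknowledge. In short, neither route closes the gap; the honest status of your proposal matches the paper's: the statement remains a conjecture, with the realistic nearer-term target being a permutation analogue of \Cref{lem:weaker}.
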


A sample of the random variable $\rv{P}$ over the product set $S_{N_1} \times \cdots \times S_{N_R}$ can be thought of as a string of length $N_1 + \cdots + N_R$ whose coordinates are indexed by pairs $\bigcup_{r=1}^R \bigcup_{j=1}^{N_r} \{ (r,j) \}$. We say that $\rv{P}$ (resp. $\rv{P}^{-1}$) is $\delta$-dense if for every subset $J \subseteq \bigcup_{r=1}^R \bigcup_{j=1}^{N_r} \{ (r,j) \}$, the min-entropy of $\rv{P}$ (resp. $\rv{P}^{-1}$) restricted to the elements in $J$ is at least
\[
    (1 - \delta) \sum_{r=1}^R \log \Big ( N_r(N_r-1) \cdots (N_r-|J_r|+1) \Big),
\]
where $J_r = J \cap \{ (r,j) \}_j$. In other words, the min-entropy is $1 - \delta$ fraction of its maximal value. 

 The condition that the inverse permutation $\rv{P}^{-1}$ is dense is necessary for the conjecture to hold. Consider the following example: $\rv{P}$ is a uniformly random permutation over $[N]$ conditioned on $\rv{P}^{-1}(1)$ being an even integer. The forward permutation $\rv{P}$ is $\delta$-dense for $\delta = O(1/N)$, but clearly $\rv{P}$ is easily distinguishable from a uniformly random permutation by querying $\rv{P}^{-1}(1)$.\footnote{We are indebted to Mark Zhandry  for pointing out this example.} In this example the inverse permutation $\rv{P}^{-1}$ is not $\delta$-dense for small $\delta$, because the entropy of $\rv{P}^{-1}(1)$ is $\frac{1}{2} \log N$.  

In this conjecture, the number $R$ of permutations is treated as a fixed constant, while the $N_r$'s are numbers growing to infinity and $T,\delta$ are allowed to depend on the $N_r$'s. \\
\medskip

The main difference from \Cref{conj:simplified} is that the algorithms in \Cref{conj:pseudorandomness} query random permutations, rather than random boolean functions. Furthermore, the model of oracle algorithm is different, in that $A$ can compute the permutation both the ``forwards'' as well as ``inverse'' directions. 
Thus it is not clear if it is comparable with \Cref{conj:simplified} (i.e., whether one implies the other). 

The main pieces of evidence for \Cref{conj:pseudorandomness} come from (a) its resemblance to the random oracle model formulation of \Cref{conj:simplified}, and (b) a proof for the special case of classical probabilistic query algorithms; this was shown by Coretti, Dodis, and Guo~\cite[Claim 18]{coretti2018non} in the context of studying the power of auxiliary information in the random permutation model in cryptography.\footnote{We note that the statement of Claim 18 in~\cite{coretti2018non} does not explicitly state the condition that the inverse permutation is dense. However, as illustrated by the example above, this is necessary, and their proof implicitly uses this condition.} It is an interesting question to find additional evidence for \Cref{conj:pseudorandomness} (such as proving a permutation analogue of \Cref{lem:weaker}) -- or refuting it. 

Looking ahead, we can interpret our main result (\Cref{thm:main-informal}) as spelling out a ``win-win'' scenario: either quantum proofs are more powerful than classical proofs (i.e., there is a classical oracle separating $\QMA$ from $\QCMA$), or there is quantum advantage in distinguishing between dense permutation distributions from the uniform distribution (or both could be true!). The possibility of quantum advantage for this task would be rather interesting, as it could have implications for the post-quantum (in)security of various ciphers and hashes, which are often modeled with random permutations (see~\cite{coretti2018non} for more discussion about this).

\subsection{The conditional oracle separation}
\label{sec:intro_oracle_description}

We now describe the oracle problem we use for the conditional $\QMA$ vs. $\QCMA$ separation. Informally, the oracle is promised to encode an exponentially-large constant-degree graph that has either (at least) two connected components, or is a connected expander graph; the task is to distinguish between the two cases. This is a slight simplification of the problem considered by Natarajan and Nirkhe~\cite{natarajan2023distribution} in their $\QMA$ vs. $\QCMA$ separation in the randomized oracle model, except they consider distinguishing between having (exponentially) many connected components and having a single connected component.

\paragraph{The graph oracle model.} First, we specify the oracle  model. Fix integers $n,R \in \N$. Let $N = 2^n$. The oracles we consider are functions $F: [R] \times [N] \to [N]$ such that for all $r \in [R]$, $F(r,\cdot)$ describes a perfect matching on $[N]$: $F(r,x) = y$ if and only if $F(r,y) = x$. We call such a function $F$ a \emph{graph oracle} because it naturally corresponds to a (multi)graph $G_F$ with vertex set $[N]$ and there is an edge $(x,y)$ if there exists an $r$ such that $F(r,x) = y$. In fact, the graph oracle $F$ encodes more information in that each edge $(x,y)$ has a \emph{color} corresponding to the variable $r$ such that $F(r,x) = y$. 

A quantum query algorithm can query a graph oracle $F$ via the unitary $U_F : \ket{r,x,u} \mapsto \ket{r,x,u \oplus F_r(x)}$ where  $\oplus$ denotes addition over the binary field (we identify $[N]$ with binary strings $\{0,1\}^n$). 

In this paper we think of $R$ as a fixed constant (indeed, $R=3$ suffices) and $N$ as the growing parameter. Let $\Delta > 0$ be such that a random degree-$R$ graph formed by taking the union of $R$ random perfect matchings of $[N]$ yields a $\Delta$-expander (i.e., the smallest nonzero eigenvalue of the normalized graph Laplacian is at least $\Delta$) with high probability (see \Cref{lem:random-graphs}). 

\paragraph{The oracle problem.} Now we formally define the oracle problem, which we call $\Components$. Consider two sets $(\Lyes,\Lno)$ of graph oracles where
\begin{itemize}
    \item $F \in \Lyes$ if and only if there exists a subset $S \subseteq [N]$ with $|S| = N/2$ such that there are no edges between $S$ and $\overline{S}$ in $G_F$. 

    \item $F \in \Lno$ if and only if the graph $G_F$ is a $\Delta$-expander. %
\end{itemize}
The $\Components$ problem is to distinguish whether a given graph oracle $F$ is in $\Lyes$ or $\Lno$, promised that one is that case. 

Natarajan and Nirkhe~\cite{natarajan2023distribution} show that the $\Components$ can be solved in the $\QMA$ model: given oracle access to an instance $F \in \Lyes$, the prover can send the verifier the following $\log N$-qubit state:
\[
    \ket{\psi} = \sqrt{\frac{1}{N}}\sum_{x \in S} \ket{x} - \sqrt{\frac{1}{N}} \sum_{x \notin S} \ket{x}~.
\]
The quantum verifier performs one of two tests: 
(a) check the state is orthogonal to the uniform superposition over \emph{all} vertices, or (b) check the state is invariant under permuting the vertices according to $F$. Since $F \in \Lyes$, the verifier will accept with probability $1$. On the other hand, if $F \in \Lno$, then it can be shown that \emph{no} state can cause the verifier to accept with high probability. This verification procedure runs in time $\poly(\log N)$.

Can the $\Components$ problem be efficiently solved by a quantum verifier given a classical witness? 
Natarajan and Nirkhe showed the following ``distribution testing'' lower bound: all $\QCMA$ proof systems must fail at solving (their version of) the $\Components$ problem when the verifier only makes $\poly(\log N)$ queries to the oracle and the witness length is $\poly(\log N)$ -- under the assumption that the classical witness only depends on the \emph{partition} of vertices of the underlying graph $G_F$ into connected components, and not on how the vertices are connected (i.e. the edges).

For our conditional separation, we trade the assumption about the witness structure for a plausible assumption about the pseudorandomness of dense distributions.

\subsection{Our approach}  
The proof of \Cref{thm:main-informal} proceeds in two steps. First, assume that there exists an efficient $\QCMA$ verifier for the $\Components$ problem. We show that \Cref{conj:pseudorandomness} implies that, without loss of generality, the witness specifies a $\poly(\log N)$-sized list of vertices and edges in the graph. That is, the witness has \emph{some} amount of structure, albeit less structure than what is assumed in~\cite{natarajan2023distribution}. %

In the second step, we show that such a witness cannot help distinguish between Yes and No instances of $\Components$. Intuitively, the verifier should not be able to tell by making $\poly(\log N)$ queries to the oracle whether the specified vertices and edges are part of the same component or belong to different components. 

\paragraph{Applying the pseudorandomness conjecture.}
The connection between our oracle structure and pseudorandomness in the first step of the proof arises as follows. If the classical witness is short (e.g., $\poly(\log N)$ length), then by a standard counting argument, there must exist a set of Yes instances corresponding to a \emph{maximizing witness} $w^*$, one which is associated with the maximum number of Yes instances such that the $\QCMA$ verifier accepts. We call the set of these Yes instances $\cal{L}_{w^*}$. Now, hardwiring  $w^*$ into the supposed $\QCMA$ verifier solving the $\Components$ problem, we can obtain a $\poly(\log N)$-query algorithm that does \emph{not} use a witness but can distinguish between instances in $\cal{L}_{w^*}$ and instances in $\Lno$. Since the maximizing witness $w^*$ is short, the set $\cal{L}_{w^*}$ occupies at least $2^{-\poly(\log N)}$ fraction of $\Lyes$, and therefore the uniform distribution over $\cal{L}_{w^*}$ has min-entropy at least $\log |\Lyes| - \poly(\log N)$. This argument is standard in all previous works studying $\QMA$ vs $\QCMA$ separations~\cite{aaronson2007quantum,fefferman2018quantum,natarajan2023distribution,li2024classical}.

It is not clear at a first sight why we can apply a conjecture on pseudorandom permutation oracles to this high min-entropy distribution over graph oracles. To do this, we introduce a ``raw permutation'' model of viewing each graph oracle $G \in \Lyes\cup \Lno$.  In more detail, we show that any algorithm querying a $\Components$ oracle can be viewed as an algorithm that queries a small number of independently-chosen permutations $X,Y_1,\ldots,Y_R,Z_1,\ldots,Z_R$. The permutation $X:[N] \to [N]$ partitions $[N] = S \cup \overline{S}$ by letting $S$ be the image of $\{1,2,\ldots,N/2\}$ under $X$, and $\overline{S}$ the complement. Then, for each $r \in [R]$, the permutations $Y_r,Z_r:[N/2] \to [N/2]$ specifies a matching on $S$ and $\overline{S}$, respectively. 

Furthermore, the uniform distribution over raw permutations induces a uniform distribution over corresponding graph oracles in $\Lyes$. Conversely, the uniform distribution over $\cal{L}_{w^*}$ induces a high-min entropy distribution over the set $\cal{P}_{w^*}$ of all ``raw permutations'' $X Y_1 \cdots Y_R Z_1 \cdots Z_R$ that give rise to a graph oracle $F \in \cal{L}_{w^*}$. By a well-known decomposition lemma for high min-entropy distributions from complexity theory~\cite{goos2016rectangles,kothari2021approximating} and cryptography~\cite{coretti2018random,coretti2018non,guo2021unifying}, the uniform distribution over $\cal{P}_{w^*}$ can be decomposed into a convex combination of distributions over random permutations such that both the forward and inverse distributions are $(k, \delta)$-dense, for $k = \poly(\log N)$ and $\delta = \poly(\log N)/N$. Here, a $(k,\delta)$-dense distribution is one that is fixed to a deterministic value in some $k$ coordinates, but is $\delta$-dense everywhere else.

Then, using Conjecture \ref{conj:pseudorandomness}, each dense source in the convex combination is indistinguishable (by an efficient query algorithm) from a \emph{bit-fixing source}. In the context of raw permutation model, a bit-fixed raw permutation would imply that the $X$, $\{Y_r\}_{r \in [R]}$ and $\{Z_r\}_{r \in [R]}$ permutations are fixed in a small number of coordinates in $[N]$ or $[N/2]$, and are uniform otherwise. Translated to graph oracles, these fixings correspond to at most $\poly(\log N)$ fixed vertices and edges; let $\rho$ denote this \emph{graph fixing data}. Let $\Lyes^\rho$ and $\Lno^\rho$ denote the graph oracles in $\Lyes$ and $\Lno$ respectively that are consistent with $\rho$. The verifier with the hardwired witness $w^*$ thus can distinguish between $\Lyes^\rho$ and $\Lno^\rho$ with few queries.

\paragraph{Uselessness of the fixed coordinates.}
The second step is to show that revealing $\poly(\log N)$-size graph fixing data $\rho$ cannot help a quantum algorithm distinguish between Yes and No instances of the $\Components$ problem.
We prove via a reduction: we show that if there was an efficient query algorithm $A$ that distinguishes between $\Lyes^\rho$ and $\Lno^\rho$, then it can be transformed into an efficient query quantum algorithm $M$ that solves the original $\Components$ problem (without any witness).

Our reduction takes as input a uniformly random graph oracle $F \in \Lyes \cup \Lno$, and \emph{plants} the graph fixing data $\rho$ in the graph $F$ by only making $\poly(\log N)$ queries to $F$. Let $F'$ denote the resulting graph oracle. Then, it simulates the algorithm $A$ with the oracle $F'$. When $F\in \Lno$, then $F' \in \Lno^\rho$ and therefore $A$ would reject with high probability. 

The main challenge is to prove that when $F$ is sampled uniformly at random from $\Lyes$, the distribution over $F'$ is uniform over $\Lyes^{\rho}$. Once this is shown, then we deduce that there is a $\poly(\log N)$-query algorithm for the $\Components$ problem, which contradicts
(a slightly modified version of) the query lower bound of Ambainis, Childs, and Liu~\cite{ambainis2011quantum} on distinguishing whether a given graph oracle represents an expander or a graph with two connected components. We note that this part of the proof is directly inspired by the approach of Natarajan and Nirkhe~\cite{natarajan2023distribution}, who also reduce to the same query lower bound of~\cite{ambainis2011quantum}. 

Putting everything together, there cannot be an efficient $\QCMA$ verifier for the $\Components$ problem, assuming the pseudorandomness conjecture.

\subsection{An unconditional $\QMA$ vs. $\QCMA$ separation for interactive games and distributional oracles}

 In addition to the previous conditional oracle separation result, we 
 give a simple proof of an \emph{interactive version} of $\QMA/\QCMA$ separation, which directly leads to a simplified proof of the distributional oracle separation in ~\cite{natarajan2023distribution,li2024classical}. 
Interactive games are a useful notion in cryptography and complexity theory. The techniques employed here to analyze the lower bound for interactive games have been previously utilized widely in addressing the non-uniformity of algorithms -- algorithms that have an unbounded preprocessing stage to produce an advice (in our case, a witness) that helps a bounded second stage algorithm to solve a  challenge. This formulation of interactive games offers a clear and alternative perspective on the distributional oracle separation explored in earlier works, highlighting their underlying conceptual similarity.

The oracle has a similar structure to the one used in the $\Components$ problem, but is slightly different. We then observe that this $\QMA/\QCMA$ separation for interactive games almost directly gives a $\QMA/\QCMA$ separation with a distributional oracle, reproving the results of~\cite{natarajan2023distribution,li2024classical}. %

We describe the high level idea of our interactive game separation. 
The oracle $F$ we give out still represents a disconnected graph with two highly connected components in the Yes-instance and a connected graph in the No-instance. But the difference is that each sub-oracle $F(r,\cdot)$ specifies a single-cycle random permutation instead of any permutation and there are exponentially many sub-oracles $F(r, \cdot)$. 

We first define an interactive variant of the $\QMA/\QCMA$ protocol.
In the interactive game, prover $P$ is first given access to the entire oracle $F(\cdot, \cdot)$, where each sub-oracle $F(r,\cdot)$ is either a Yes oracle: two disjoint cycles, each on $N/2$ vertices (the partitions are fixed for all Yes oracles); or one single cycle on all $N$ vertices. $P$ then produces a witness. When it comes to a challenge phase, the challenger will sample a random $r \gets [R]$. Now the verifier $V$ with the witness should output whether $F(r,\cdot)$ is a Yes-oracle or No-oracle.

The $\QMA$ upper bound is essentially the same as for the $\Components$ problem: a uniform superposition over all vertices except vertices in one component has a $1$ phase and vertices in the other component has a $-1$ phase. 

We then give an $\QCMA$ lower bound for this interactive variant through a technique that removes the use of witness in our analysis with a sequentially repeated game (multi-instance security). Suppose there is a polynomial-query $\QCMA$ verifer that can distinguish two oracles with a some arbitrary polynomial-size classical witness, then we can build a verifier that guesses a witness and succeeds in winning a sequentially repeated version of the interactive oracle distinguishing game, by breaking the sequential repetition soundness. Let us denote the interactive game as $G$ and the $k$-sequentially repeated game as $G^{\otimes k}$.

Since the individual games in $G^{\otimes k}$ are being played sequentially and each challenge is sampled independently with a fresh choice of $r_i$ for $F(r_i,\cdot)$ at each round, we can argue that the winning probability decays exponentially by a lemma in \cite{chung2020tight}.

What's left is to analyze the probability of success in a single instance $G$. This reduces to the problem of deciding whether a given permutation oracle is a single cycle or a product of two equal length cycles. This should be hard to do using a polynomial number of queries, and thus the success probability should be at most    $\frac{1}{2} + \eta$
for some negligible quantity $\eta$, through Ambainis's adversary method~\cite{ambainis2000quantum}.
Therefore, the success probability in $G^{\otimes k}$ is at most $(\frac{1}{2} + \eta)^k$.

Finally, we can leverage the loss in guessing the advice.
When we set the repetition number $k$ to be a proper superpolynomial in terms of $\log N$, then we can observe that the winning probability in a single-instance interactive game \emph{with advice} is also upper bounded by $1/2 + \negl(\log N)$.

In the end, we show a simple reduction from a distributional oracle QMA/QCMA game to our interactive QMA/QCMA game.

\subsection{Relation with prior and concurrent work}

We briefly compare our results and approach with prior work. The $\Components$ problem and variants of it were previously studied by Lutomirski~\cite{lutomirski2011component}, Fefferman and Kimmel~\cite{fefferman2018quantum}, Natarajan and Nirkhe~\cite{natarajan2023distribution}, and Bassirian, Fefferman, and Marwaha~\cite{bassirian2023power}.\footnote{In fact, it was pointed out to us by Kunal Marwaha that Lutomirski had already conjectured that the $\Components$ problem was a candidate to separate $\mathsf{QMA}$ from $\mathsf{QCMA}$ in~\cite{lutomirski2011component}.}

Our $\QCMA$ lower bound proof is a marriage of the proof of~\cite{natarajan2023distribution} and the proof approach of Li, Liu, Pelecanos, and Yamakawa~\cite{li2024classical}. The latter prove a $\QCMA$ lower bound assuming that the verifier only makes classical queries to the oracle. In this case, the verifier is essentially a classical randomized verifier (because we only consider query complexity, not time complexity). The problem they use is based on the Yamakawa-Zhandry problem, which was originally used to show verifiable quantum advantage in the random oracle model~\cite{yamakawa2024verifiable}. They argue that a $\QCMA$ verifier for their problem can be converted into an efficient classical query algorithm that distinguishes between a ``bit-fixed'' distributions over Yes instances from all No instances. This conversion uses the pseudorandomness conjecture for random functions (\Cref{conj:simplified}) for classical query algorithms, which was proved by~\cite{coretti2018random}. However, \cite{li2024classical} shows that such a classical query algorithm does not exist. It is not clear when allowing the $\QCMA$ verifier to make quantum queries in their setting, whether a  query lower bound can be obtained for their problem, even assuming a quantum bit-fixing conjecture like we do in this work.

Ben-David and Kundu~\cite{ben2024oracle} analyze $\QCMA$ verifiers that are restricted to making their queries with little adaptivity; this means that the queries are performed in parallel in a few layers (but there can be polynomially-many queries in each layer). They show that a variant of the Yamakawa-Zhandry-inspired problem from~\cite{li2024classical} does not admit such bounded-adaptivity $\QCMA$ verifiers. 

Finally, we mention a paper~\cite{zhandry2024towards} that recently appeared as we were preparing to post this manuscript on arXiv. Like us, Zhandry also proves a $\QMA$ vs. $\QCMA$ separation conditional on a conjecture. The conjecture, roughly speaking, stipulates that distributions that are \emph{approximately} $k$-wise independent (in a certain multiplicative error sense) can be ``rounded'' to a nearby distribution that is \emph{exactly} $k$-wise independent. Zhandry considers the problem of distinguishing whether a sparse subset $S \subseteq [N]$ has superpolynomial size or is empty, given membership queries to $S$ \emph{as well as} membership queries to the set $U$ of ``heavy'' elements sampled from the Fourier transform of a Haar-random state supported on $S$. \cite{zhandry2024towards} shows that there is a $\QMA$ upper bound for this problem, but assuming his statistical conjecture there is no $\QCMA$ verifier for the problem.

Our conjecture and Zhandry's conjecture share some similarities; his conjecture can also be interpreted as a pseudorandomness statement, as the result of~\cite{zhandry2015secure} implies that (exactly) $k$-wise independent distributions cannot be distinguished from the uniform distribution by few-query algorithms. On the other hand, it is not clear whether our conjectures are comparable (whether one implies the other).

\subsection{Next steps}

The main contribution of our paper is to connect the $\QMA$ vs. $\QCMA$ oracle separation question with a question about pseudorandomness against (query-)efficient quantum algorithms. This latter question is in turn deeply related to the Aaronson-Ambainis conjecture, which is one of the most important questions in quantum complexity theory. 
We believe this connection suggests several immediate avenues for exploration.

\begin{enumerate}
    \item What are the relationships between \Cref{conj:simplified} (the pseudorandomness conjecture in the random oracle model), \Cref{conj:aa-intro} (the Aaronson-Ambainis conjecture), and \Cref{conj:pseudorandomness} (the pseudorandomness conjecture in the random permutation model)? 
    For example, does \Cref{conj:simplified} imply \Cref{conj:pseudorandomness} or vice-versa?

    \item Can \Cref{conj:pseudorandomness} be deduced as a consequence of a simpler version of \Cref{conj:pseudorandomness} that only refers to a \emph{single} permutation oracle (i.e. $R = 1$)? 

    \item Can weaker versions of the quantum pseudorandomness conjecture (\Cref{conj:pseudorandomness}) be proved? For example, in analogy to \Cref{lem:weaker}, can the conjecture be proved where the dependence on $T$ is exponential (rather than polynomial)? Another question is whether the conjecture can be proved if the number of \emph{adaptive rounds} of queries is bounded? This would immediately give an alternate proof of the bounded-adaptivity lower bound of Ben-David and Kundu~\cite{ben2024oracle}. 

    \item The oracle separations of Li, Liu, Pelecanos, and Yamakawa~\cite{li2024classical}  and Ben-David and Kundu~\cite{ben2024oracle} are based on the Yamakawa-Zhandry problem~\cite{yamakawa2024verifiable}. An attractive feature of these problems is that the Yamakawa-Zhandry problem is based on random boolean functions (rather than on random permutations as the $\Components$ problem is). Can a full $\QMA$ vs. $\QCMA$ separation be obtained from a problem inspired by Yamakawa-Zhandry, assuming \Cref{conj:simplified} (the pseudorandomness conjecture in the random oracle model)? 

    \item Can it be shown that, for the specific oracle problems considered in this paper or in other works (e.g., the $\Components$ problem), a classical witness in a $\QCMA$ protocol effectively fixes a subset of coordinates of the oracle? In our paper we relied on the full power of the pseudorandomness conjecture to argue this, but it is plausible that this can be directly argued for the $\Components$ problem, thus avoiding having to resolve the Aaronson-Ambainis conjecture or related questions (although that is also a worthwhile outcome!). 
\end{enumerate}

\subsection*{Acknowledgments} 
 We thank Chinmay Nirkhe, Shalev Ben-David, Qipeng Liu, Kunal Marwaha, and Mark Zhandry for helpful discussions. 
 HY and SM are supported by AFOSR awards
 FA9550-21-1-0040 and FA9550-23-1-0363, NSF CAREER award CCF-2144219, and the Sloan Foundation.
 JL was supported by DARPA under Agreement No. HR00112020023, NSF CNS-2154149,
 and a Simons Investigator Award when starting this project. JL is currently at Fujitsu Research.

\section{Preliminaries}
\label{sec:prelim}

\subsection{Random variables}
We will use boldfaced letters $\rv{X},\rv{Y},\ldots$ to denote random variables. To denote a specific instantiation of a random variable $\rv{X}$, we will use non-boldfaced variables $X$. If $\rv{X}$ takes values in $\Sigma^n$ for some finite alphabet $\Sigma$, then for a subset $I \subseteq [n]$ we write $\rv{X}_I$ to denote the random variable corresponding to the restriction of $\rv{X}$ to the coordinates in $I$. The support of $\rv{X}$, denoted by $\supp(\rv{X})$, is the set of all values it can take with nonzero probability. The \emph{min-entropy} of the random variable $\rv{X}$ over a finite set, denoted by $\Hmin(\rv{X})$, is defined to be $\min_{X \in \supp(\rv{X})} \log \frac{1}{\Pr(\rv{X} = X)}$. %

\subsection{Graphs and expansion}

Let $G = (V,E)$ be an undirected graph where there can be multiple edges between a pair of vertices. Let $A$ denote the adjacency matrix of $G$, i.e., $A_{uv}$ is the number of edges between vertices $u,v \in V$. Suppose that $G$ is $R$-regular; i.e., each vertex has $R$ neighbors. Then the normalized Laplacian of $G$ is given by the matrix $\cal{L} = I - \frac{1}{R} A$. The normalized Laplacian is a positive semidefinite matrix and its maximum eigenvalue is at most $1$. The spectral gap $\Delta$ of $\cal{L}$ is its second smallest eigenvalue (which may be $0$ if the kernel of $\cal{L}$ has dimension greater than $1$). We say that a graph $G$ is a $\Delta$-expander if its spectral gap is at least $\Delta$. 

\begin{proposition}
Let $\cal{L}$ denote the normalized Laplacian of a graph $G$ with spectral gap $\Delta$. Then for all vectors $\ket{\psi} = \sum_{ v\in V} \alpha_v \ket{v}$, we have
\[
    \| \ket{\psi} - \ket{\overline{\psi}} \|^2  \leq \frac{1}{\Delta} \bra{\psi} \cal{L} \ket{\psi}
\]
where $\ket{\overline{\psi}}$ denotes the vector where every entry is $\frac{1}{|V|} \sum_v \alpha_v$.
\end{proposition}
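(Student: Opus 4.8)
The plan is to work directly with the eigenbasis of the normalized Laplacian $\cal{L}$ and the orthogonal decomposition of $\ket{\psi}$ into its component along the all-ones vector and the component orthogonal to it.

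First I would fix notation: since $G$ is $R$-regular on vertex set $V$ with $|V| = m$, the normalized Laplacian $\cal{L} = I - \frac{1}{R}A$ is positive semidefinite, its smallest eigenvalue is $0$ with eigenvector the uniform vector $\ket{\mathbf{1}} = \sum_v \ket{v}$, and its spectral gap $\Delta$ is by definition the smallest \emph{nonzero} eigenvalue. Write $\ket{\psi} = \sum_v \alpha_v \ket{v}$. The vector $\ket{\overline\psi}$ defined in the statement — every entry equal to $\frac{1}{m}\sum_v \alpha_v$ — is precisely the orthogonal projection of $\ket{\psi}$ onto $\mathrm{span}\{\ket{\mathbf 1}\}$; indeed $\ket{\overline\psi} = \frac{\langle \mathbf 1 | \psi\rangle}{\langle \mathbf 1 | \mathbf 1 \rangle}\ket{\mathbf 1}$. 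Set $\ket{\phi} = \ket{\psi} - \ket{\overline\psi}$, which lies in the orthogonal complement $\ket{\mathbf 1}^\perp$.

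Next I would exploit the spectral decomposition. Decompose $\ket{\phi} = \sum_{i \geq 2} c_i \ket{v_i}$ in an orthonormal eigenbasis $\ket{v_1} = \frac{1}{\sqrt m}\ket{\mathbf 1}, \ket{v_2}, \ket{v_3}, \dots$ of $\cal{L}$ with eigenvalues $0 = \lambda_1 \leq \Delta \leq \lambda_2 \leq \lambda_3 \leq \cdots$. Two computations then finish the argument. On the one hand, $\| \ket\psi - \ket{\overline\psi}\|^2 = \|\ket\phi\|^2 = \sum_{i\geq 2} |c_i|^2$. On the other hand, since $\cal{L}\ket{\overline\psi} = 0$ (it is in the kernel) and $\cal{L}$ is self-adjoint, we have $\bra\psi \cal L \ket\psi = \bra{\overline\psi + \phi}\cal L\ket{\overline\psi + \phi} = \bra\phi \cal L \ket\phi = \sum_{i\geq 2}\lambda_i |c_i|^2 \geq \Delta \sum_{i \geq 2}|c_i|^2 = \Delta \,\|\ket\psi - \ket{\overline\psi}\|^2$. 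Rearranging gives the claimed inequality $\|\ket\psi - \ket{\overline\psi}\|^2 \leq \frac{1}{\Delta}\bra\psi \cal L\ket\psi$.

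There is essentially no serious obstacle here; the only points requiring a line of care are (i) verifying that $\ket{\overline\psi}$ really is the projection onto the kernel of $\cal L$ — this uses that $\cal L \ket{\mathbf 1} = 0$, which holds because $G$ is $R$-regular so each row of $A$ sums to $R$ — and (ii) the harmless edge case $\Delta = 0$, in which the right-hand side is $+\infty$ (or the inequality is read as vacuous) and there is nothing to prove. One should also note the statement implicitly assumes $\ket{\psi}$ is real or, if complex, that $|c_i|^2$ is interpreted as the squared modulus throughout; the computation above goes through verbatim in either case since $\cal L$ is a real symmetric (hence Hermitian) matrix.
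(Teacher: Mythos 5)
Your proof is correct: the paper states this proposition in its preliminaries without proof, treating it as a standard fact, and your spectral-decomposition argument (project $\ket{\psi}$ onto the all-ones direction, note $\ket{\overline{\psi}}$ lies in the kernel of $\mathcal{L}$, and lower-bound the quadratic form on the orthogonal complement by $\Delta$) is exactly the canonical argument the paper implicitly relies on. The only cosmetic point is that the paper defines $\Delta$ as the second-smallest eigenvalue (which may be $0$ for a disconnected graph) rather than the smallest nonzero eigenvalue, but your explicit handling of the $\Delta = 0$ case as vacuous makes the two readings agree.
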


The following is a well-known fact that relates the mixing time of a random walk on graph with its spectral gap. 
\begin{lemma}
    \label{lem:expander-mixing}
    Let $G$ be a $R$-regular $\Delta$-expander graph with $N$ vertices. Consider the following $t$-step random walk $\rv{W}$ starting at some vertex $v$ of the graph: at each time step stay in place with probability $1/2$, and move to a uniformly random neighbor with probability $1/2$. The probability distribution over the ending vertex of the walk $\rv{W}$ is $N \Delta^t$-close in total variation distance from the uniform distribution over the graph.
\end{lemma}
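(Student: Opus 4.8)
The plan is to reduce the claim to a standard spectral bound for the lazy-walk operator. Let $P=\frac1R A$ be the normalized adjacency matrix of $G$, so that one step of the walk $\rv{W}$ acts on distributions (viewed as vectors) by the operator $W=\frac12 I+\frac12 P = I-\frac12\mathcal{L}$, where $\mathcal{L}=I-P$ is the normalized Laplacian. Since $\mathcal{L}$ is positive semidefinite with eigenvalues $0=\lambda_1\le\lambda_2\le\cdots\le\lambda_N$ and $\lambda_N\le 2$, the eigenvalues of $W$ are $\mu_i=1-\tfrac12\lambda_i$, all lying in $[0,1]$: the self-loop probability $\tfrac12$ is exactly what forces the lower endpoint $0$, killing the slowly decaying oscillating modes that a non-lazy walk on a near-bipartite graph could have. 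The top eigenvalue $\mu_1=1$ is simple and is attained by the all-ones vector $\mathbf{1}$ (equivalently, the uniform distribution $\pi=\frac1N\mathbf{1}$ is stationary), and because the spectral gap of $G$ is at least $\Delta$, every other eigenvalue satisfies $\mu_i\le 1-\tfrac{\Delta}{2}$.

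Next I would track the $\ell_2$ distance to stationarity. Writing the initial point-mass distribution as $\delta_v=\pi+f$ with $f=\delta_v-\pi$ orthogonal to $\mathbf{1}$ (it sums to zero), we have $W^t\delta_v=\pi+W^t f$, and since $W^t$ restricted to the orthogonal complement of $\mathbf{1}$ has operator norm at most $(1-\tfrac{\Delta}{2})^t$, this gives $\|W^t\delta_v-\pi\|_2\le(1-\tfrac{\Delta}{2})^t\|f\|_2\le(1-\tfrac{\Delta}{2})^t$, using $\|f\|_2\le\|\delta_v\|_2=1$. (One could instead iterate the Proposition preceding the lemma, applied to the zero-sum vector $W^{t-1}\delta_v$, but the direct eigenvalue estimate is cleaner.)

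Finally, convert from $\ell_2$ to total variation by Cauchy--Schwarz: $d_{\mathrm{TV}}(W^t\delta_v,\pi)=\tfrac12\|W^t\delta_v-\pi\|_1\le\tfrac{\sqrt N}{2}\|W^t\delta_v-\pi\|_2\le\tfrac{\sqrt N}{2}(1-\tfrac{\Delta}{2})^t$. This has the form $N\cdot c^t$ for a constant $c=c(\Delta)\in(0,1)$, matching the bound $N\Delta^t$ stated in the lemma up to the precise constant in the exponent (and in particular giving the same mixing time $t=O(\log N)$ for constant $\Delta$, which is all that is needed downstream). There is no real obstacle here; the only two points to handle with care are (i) justifying that $W$ is positive semidefinite, which is where the $\tfrac12$ laziness is essential, and (ii) the $\ell_2\to\ell_1$ passage, which is what introduces the $\mathrm{poly}(N)$ prefactor and hence makes the mixing time logarithmic in $N$ rather than $O(1)$.
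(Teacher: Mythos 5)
Your argument is correct and is the standard spectral proof: write the lazy-walk operator as $W=\tfrac12 I+\tfrac12 P=I-\tfrac12\mathcal{L}$, note its eigenvalues lie in $[0,1]$ with the second largest at most $1-\Delta/2$, decay the component of $\delta_v-\pi$ orthogonal to $\mathbf{1}$ (simplicity of the top eigenvalue uses connectivity, which follows from $\Delta>0$), and pass from $\ell_2$ to total variation by Cauchy--Schwarz. The paper itself does not prove this lemma at all --- it only cites Goldreich's lecture notes --- so your write-up is a genuine self-contained substitute. The one point to be careful about is the quantitative form: what you prove is $\tfrac{\sqrt N}{2}\bigl(1-\tfrac{\Delta}{2}\bigr)^t$, which is \emph{not} the literal bound $N\Delta^t$ ``up to a constant in the exponent'': for $\Delta<2/3$ one has $1-\Delta/2>\Delta$, so your bound does not imply the stated one for large $t$. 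In fact the literal statement is itself off --- the correct decay base coming from the spectral argument is the second eigenvalue $1-\Delta/2$ of the lazy walk, not the Laplacian gap $\Delta$, and for small $\Delta$ a worst-case start vertex has TV distance $\Omega\bigl(N^{-1/2}(1-\Delta/2)^t\bigr)$, which eventually exceeds $N\Delta^t$. So the mismatch is a sloppiness in the paper's statement rather than a flaw in your proof. What matters downstream (in the corollary on $K$ spaced-out walk vertices and in the choice of the walk length $L=O(\log(N(|A|+|B|)))$ in the reduction) is only that the distance decays exponentially in $t$ with a base that is a universal constant less than $1$, since $\Delta$ there is the universal constant from the random-matching expander lemma; your bound delivers exactly that, with the base change absorbed into the $O(\cdot)$ defining $L$. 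If you wanted to match the paper verbatim you would simply restate the lemma with $N(1-\Delta/2)^t$ (or $N\lambda^t$ for $\lambda$ the second eigenvalue of the lazy walk) in place of $N\Delta^t$.
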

\begin{proof}
    A proof of this can be founded in, e.g.,~\cite{goldreich2011randomized}.
\end{proof}

\begin{corollary}
\label{cor:tlazyrw}
    Let $G$ be a $R$-regular $\Delta$-expander graph with $N$ vertices. Let $K$ be an integer, and consider a $Kt$-step random walk $\rv{W}$ as described in \Cref{lem:expander-mixing}. Let $\rv{E} = (\rv{e}_1,\ldots,\rv{e}_K)$ denote a sequence of vertices where $\rv{e}_k$ is the $kt$'th vertex in the walk $\rv{W}$. Then the distribution of $\rv{E}$ is $KN\Delta^t$-close to the uniform distribution over $V^K$, where $V$ is the vertex set of $G$.
\end{corollary}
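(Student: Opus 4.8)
The statement follows from \Cref{lem:expander-mixing} by a hybrid (telescoping) argument over the $K$ blocks of $t$ steps. The plan is to set up a sequence of $K+1$ hybrid distributions that interpolate between the true joint distribution of $\rv{E} = (\rv{e}_1,\ldots,\rv{e}_K)$ and the uniform distribution on $V^K$, and to bound the total variation distance between consecutive hybrids by $N\Delta^t$ using \Cref{lem:expander-mixing}.

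First I would define, for $0 \le i \le K$, the hybrid distribution $\rv{H}^{(i)}$ on $V^K$ as follows: the first $i$ coordinates $(\rv{e}_1,\ldots,\rv{e}_i)$ are sampled exactly as in the walk $\rv{W}$ (i.e.\ by running $it$ lazy steps from the start vertex and recording the positions at times $t, 2t, \ldots, it$), and the remaining $K-i$ coordinates are drawn independently and uniformly from $V$. Then $\rv{H}^{(0)}$ is the uniform distribution on $V^K$ and $\rv{H}^{(K)}$ is exactly the distribution of $\rv{E}$. By the triangle inequality for total variation distance, $\| \rv{E} - \mathrm{Unif}(V^K) \|_{\mathrm{TV}} \le \sum_{i=1}^{K} \| \rv{H}^{(i)} - \rv{H}^{(i-1)} \|_{\mathrm{TV}}$, so it suffices to show each summand is at most $N\Delta^t$.

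The key step is bounding $\| \rv{H}^{(i)} - \rv{H}^{(i-1)} \|_{\mathrm{TV}}$. The two hybrids differ only in how the $i$'th coordinate is generated: in $\rv{H}^{(i-1)}$ it is uniform on $V$ (independent of everything), while in $\rv{H}^{(i)}$ it is the position after $t$ further lazy steps starting from $\rv{e}_{i-1}$. Conditioning on the value of $(\rv{e}_1,\ldots,\rv{e}_{i-1})$ — which has the same distribution in both hybrids — and using that the coordinates $i+1,\ldots,K$ are uniform and independent in both, the conditional TV distance between the two hybrids equals the TV distance between the uniform distribution on $V$ and the distribution of a $t$-step lazy walk started at $\rv{e}_{i-1}$. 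By \Cref{lem:expander-mixing} this is at most $N\Delta^t$ for \emph{every} starting vertex, so the conditional bound is uniform and hence the unconditional TV distance $\| \rv{H}^{(i)} - \rv{H}^{(i-1)} \|_{\mathrm{TV}} \le N\Delta^t$. Here I am using the standard fact that averaging the conditional TV bound over the conditioning variable (which is identically distributed in both hybrids) yields the same bound on the joint distributions; this is where one has to be slightly careful, since TV distance only behaves well under such averaging when the conditioning part is coupled identically — which it is by construction. Summing over $i = 1,\ldots,K$ gives the claimed bound $KN\Delta^t$.

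I expect the only mild subtlety — rather than a genuine obstacle — is making the conditioning/averaging step rigorous: one should phrase it as a coupling, coupling the first $i-1$ recorded positions and the last $K-i$ uniform coordinates identically across $\rv{H}^{(i)}$ and $\rv{H}^{(i-1)}$, so that the TV distance is controlled by the coupling that further couples the $i$'th coordinate optimally given $\rv{e}_{i-1}$, whose failure probability is at most $N\Delta^t$ by \Cref{lem:expander-mixing}. Everything else is routine.
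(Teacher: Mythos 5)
Your proof is correct and follows exactly the approach the paper intends: the paper's own proof is a one-line appeal to "a hybrid argument applied to \Cref{lem:expander-mixing}," and your hybrids, conditioning/coupling step, and uniform-over-starting-vertex use of the mixing lemma are precisely the details behind that statement.
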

\begin{proof}
    This follows applying a hybrid argument to \Cref{lem:expander-mixing}.
\end{proof}

\begin{lemma}[Union of $3$ random matchings is a good expander]
    \label{lem:random-graphs}
    There exists a universal constant $0 < \Delta < 1$ such that the following holds. 
    For all sufficiently large even integers $N$,
    \[
        \Pr[\text{$G$ is an $\Delta$-expander}] \geq 1 -\frac{1}{N}
    \]
    where $G$ is a graph formed by taking the union of $3$ independently chosen random matchings on $[N]$.
\end{lemma}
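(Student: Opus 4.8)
The plan is to derive spectral expansion from \emph{edge} expansion and then invoke Cheeger's inequality. Let $A_r$ denote the adjacency matrix of the $r$-th random perfect matching and put $A = A_1 + A_2 + A_3$, so that the normalized Laplacian is $\cal{L} = I - \tfrac13 A$ and its second-smallest eigenvalue is the spectral gap of $G$. By the (standard) discrete Cheeger inequality, it suffices to exhibit a universal constant $\eps > 0$ such that, with probability at least $1 - 1/N$, every $S \subseteq [N]$ with $|S| \le N/2$ has at least $3\eps|S|$ edges (counted with multiplicity) crossing the cut $(S,\overline{S})$; the lemma then holds with the universal constant $\eps^2/2$ in place of $\Delta$.

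To obtain the cut bound I would union-bound over all $S$. Fix $S$ with $|S| = s$ and let $c_r(S)$ be the number of $M_r$-edges with exactly one endpoint in $S$, so the number of crossing edges is $c_1(S) + c_2(S) + c_3(S)$; since $M_1,M_2,M_3$ are independent and each $c_r(S) \ge 0$, we have $\Pr[\sum_r c_r(S) < 3\eps s] \le \prod_r \Pr[c_r(S) < 3\eps s]$. For a single uniformly random perfect matching of $[N]$, enumerating the matchings with a prescribed number of crossing edges gives, for $j \equiv s \pmod 2$,
\[
    \Pr[c_r(S) = j] \;=\; \binom{N}{s}^{-1}\cdot\frac{2^{j}}{j!}\cdot\frac{(N/2)!}{\bigl((s-j)/2\bigr)!\,\bigl((N-s-j)/2\bigr)!}\,,
\]
and $\Pr[c_r(S)=j]=0$ otherwise. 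Routine manipulation of this formula (using, e.g., $\binom{N/2}{s/2}^2 \le \binom{N}{s}$) bounds $\Pr[c_r(S) < 3\eps s]$ by roughly $\binom{N}{s}^{-1/2}\bigl(eN/(3\eps s)\bigr)^{3\eps s}$; cubing and multiplying by the number $\binom{N}{s}$ of cuts of size $s$ gives a per-scale failure bound of about $\binom{N}{s}^{-1/2}\bigl(eN/(3\eps s)\bigr)^{9\eps s}$.

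Summing over $1 \le s \le N/2$ with $\eps$ a small enough universal constant: odd $s$ below $1/\eps$ contribute nothing, since every such $S$ has at least $3$ crossing edges deterministically; for small even $s$ one uses the exact value $\Pr[c_r(S)=0] = (s-1)!!\,(N-s-1)!!/(N-1)!!$, which gives a contribution $\tfrac{((s-1)!!)^2}{2^{s/2}(s/2)!}\,N^{-s/2}(1+o(1))$, dominated by $s=2$ where it equals exactly $\binom{N}{2}(N-1)^{-3} = \tfrac{N}{2(N-1)^2} = \tfrac{1}{2N}(1+o(1))$ — precisely the event that two vertices are joined by a triple edge, i.e.\ that $G$ is disconnected; and the per-scale bound above shows that every $s$ above a constant contributes $N^{-\omega(1)}$, so these sum to $o(1/N)$. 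Hence the total failure probability is at most $1/N$ for all sufficiently large even $N$.

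The step I expect to be the main obstacle is making this union bound work \emph{uniformly across all cut sizes} $s \in [1,N/2]$. For $s$ below a constant, failure amounts to $c_r(S)=0$ for all three $r$, and a generic concentration estimate for $c_r(S)$ is too lossy — one must use the exact count, and the $s=2$ term needs a sharp estimate to land at the stated $1/N$ rather than merely $O(1/N)$. For larger $s$ there are up to $2^{\Omega(N)}$ cuts to union over, and a crude bounded-differences or Chernoff tail for $c_r(S)$ is not strong enough to beat this; the sharper combinatorial estimate does, but only once $\eps$ is taken to be a sufficiently small absolute constant. Alternatively, one can simply invoke the classical fact that, for any fixed $d \ge 3$, the union of $d$ independent random perfect matchings on $[N]$ has spectral gap bounded below by a positive constant with probability $1 - O(1/N)$.
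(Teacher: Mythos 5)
Your proposal is correct in substance, but it takes a genuinely different route from the paper: the paper does not prove this lemma at all — its entire ``proof'' is a citation to Goldreich's lecture notes, i.e.\ exactly the classical fact you mention in your closing sentence. What you supply instead is a self-contained derivation: reduce spectral gap to edge expansion via the discrete Cheeger inequality (giving the explicit constant $\Delta=\eps^2/2$), then union-bound over all cuts using the exact distribution of the number of crossing edges of one random perfect matching. Your formula for $\Pr[c_r(S)=j]$ is correct — it is equivalent to the direct count $\binom{s}{j}\binom{N-s}{j}\,j!\,(s-j-1)!!\,(N-s-j-1)!!/(N-1)!!$ — and the manipulation via $\binom{N/2}{s/2}^2\le\binom{N}{s}$ does produce the per-scale bound you state, which sums to $o(1/N)$ once $\eps$ is a sufficiently small absolute constant; the independence step $\Pr[\sum_r c_r<3\eps s]\le\prod_r\Pr[c_r<3\eps s]$ is valid since each $c_r\ge 0$. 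The payoff of your route over the paper's citation is an explicit constant and a sharp identification of the dominant failure mode (a pair matched identically in all three matchings, probability $\sim 1/(2N)$), showing the stated $1-1/N$ bound is within a factor $2$ of optimal; the cost is the multi-scale bookkeeping, which you correctly flag as the delicate part.

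Three small imprecisions to fix in a write-up. First, for even $s$ with $2<3\eps s<3$, failure does \emph{not} force $c_r(S)=0$ for all three $r$: one matching may contribute two crossing edges while the other two contribute none. That regime is still fine — either extend the exact count (the extra term is $3\binom{N}{s}\Pr[c=2]\Pr[c=0]^2=O(N^{1-s/2})$, using $\Pr[c=2]/\Pr[c=0]=s(N-s)/2$) or note that your main per-scale bound already covers it once $1/\eps$ exceeds a modest constant. Second, ``every $s$ above a constant contributes $N^{-\omega(1)}$'' overstates the case when $s$ itself is a constant, where the contribution is only $N^{-\Theta(s)}$; the conclusion survives because the sum over $s$ at or above the threshold is still $o(1/N)$ provided that threshold (hence $1/\eps$) is large enough. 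Third, a triple edge implies disconnection but not conversely, so the parenthetical ``i.e.\ that $G$ is disconnected'' should be dropped; what matters is only that the $s=2$ term contributes $\tfrac{1}{2N}(1+o(1))$.
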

\begin{proof}
    See, e.g., Goldreich's lecture notes~\cite{goldreich2011randomized} for a proof.
\end{proof}

\subsection{Quantum query lower bounds}

We collect some quantum query lower bounds that will be used to prove both our $\QCMA$ lower bound (\Cref{sec:qcma-lower-bound}) and the interactive game lower bound (\Cref{sec:interactive-game}).

The first lower bound is the so-called ``hybrid method'' due to Bennett, Brassard, Bernstein, and Vazirani~\cite{BBBV97}; this roughly states that if a quantum query algorithm makes a few queries to an oracle, then the algorithm cannot ``notice'' small changes to most places in the oracle. We can model the state of a quantum query algorithm right before a query as some superposition
\[
    \ket{\phi} = \sum_{b,x,y,w} \alpha_{b,x,y,w} \ket{b,x,y,w}
\]
where $\ket{b}$ denotes a single-qubit control register, $\ket{x}$ denotes the query input register, $\ket{y}$ denotes the query target register, and $\ket{w}$ denotes the workspace. Define the following quantity, known as the \emph{query weight} of $A$ on input $x$:
\[
    W_x(\ket{\phi}) = \sum_{b,y,w} |\alpha_{b,x,y,w}|^2~.
\]
After the query to the (controlled) oracle $cO$ the state becomes
\[
    cO \, \ket{\phi} = \sum_{b,x,y,w} \alpha_{b,x,y,w} \ket{b,x,y \oplus bO(x),w}
\]
where by $bO(x)$ we mean the zero string if $b = 0$ and otherwise $O(x)$.

\begin{theorem}[\cite{BBBV97}] 
\label{thm:bbbv97_oraclechange}
Let $A$ denote a $T$-query quantum algorithm. Let $O: [N] \to [M]$ be an oracle and let $\ket{\phi_i}$ denote the intermediate state of $A$ right before the $i$'th query to $O$.  Let $F \subseteq [N]$ be a set of query inputs, and let $\widetilde{O}$ denote an oracle that is identical to $O$ on all inputs $[N]\setminus F$. Then
\[
    \Big \| \ket{\phi_{T+1}} - \ket{\widetilde{\phi}_{T+1}} \Big \| \leq \sqrt{T \sum_{x \in F} \sum_{i=1}^T W_x(\ket{\phi_i})}
\]
where $\ket{\phi_T},\ket{\widetilde{\phi}_T}$ are the final states of the algorithm $A$ when querying oracles $O$ and $\widetilde{O}$, respectively.  
\end{theorem}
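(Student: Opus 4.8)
## Proof proposal for Theorem~\ref{thm:bbbv97_oraclechange}

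\textbf{Overall strategy.} The plan is to prove this by a hybrid argument over the $T$ query steps, bounding at each step the perturbation introduced by switching from $O$ to $\widetilde{O}$, and then using the triangle inequality together with Cauchy--Schwarz to assemble the per-step bounds into the claimed expression. This is the standard BBBV-style argument, and the key observation is that because $O$ and $\widetilde{O}$ agree outside of $F$, the only damage a single query can do is confined to the amplitude that the algorithm's state places on query inputs inside $F$.

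\textbf{Key steps.} First I would set up the hybrid oracles: for $i = 0,1,\ldots,T$, let $B_i$ denote the algorithm that runs with oracle $\widetilde{O}$ for the first $i$ queries and then with oracle $O$ for the remaining $T-i$ queries, and let $\ket{\phi^{(i)}}$ be its final state; so $B_0$ is the original run on $O$ and $B_T$ is the run on $\widetilde{O}$. By the triangle inequality, $\|\ket{\phi_{T+1}} - \ket{\widetilde\phi_{T+1}}\| \le \sum_{i=1}^{T} \|\ket{\phi^{(i-1)}} - \ket{\phi^{(i)}}\|$. Second, I would analyze a single hybrid step: $\ket{\phi^{(i-1)}}$ and $\ket{\phi^{(i)}}$ are obtained from the \emph{same} intermediate state $\ket{\phi_i}$ (the state right before the $i$'th query, which is identical in both runs since the first $i-1$ queries are the same) by applying $cO$ resp.\ $c\widetilde{O}$ and then the same remaining unitaries; since unitaries preserve norm, $\|\ket{\phi^{(i-1)}} - \ket{\phi^{(i)}}\| = \|(cO - c\widetilde{O})\ket{\phi_i}\|$. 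Third, I would compute $(cO - c\widetilde O)\ket{\phi_i}$ directly: writing $\ket{\phi_i} = \sum_{b,x,y,w}\alpha_{b,x,y,w}\ket{b,x,y,w}$, the operator $cO - c\widetilde O$ annihilates every term with $b=0$ and every term with $x \notin F$ (since $O(x) = \widetilde O(x)$ there), so only terms with $b=1$ and $x \in F$ survive; each surviving basis vector maps to a difference of two unit vectors, which has norm at most $2$. Hence $\|(cO - c\widetilde O)\ket{\phi_i}\|^2 \le 4\sum_{x \in F} W_x(\ket{\phi_i})$ after observing the surviving terms for distinct $(b,x,y,w)$ land in orthogonal subspaces — actually I would be a little careful here: $\ket{b,x,y,w}$ and $\ket{b,x,y\oplus O(x),w}$ can collide across different $y$, so the clean bound is to use $\|(cO-c\widetilde O)\ket{\phi_i}\| \le \|(cO - c\widetilde O)\Pi_F\ket{\phi_i}\| \le 2\|\Pi_F\ket{\phi_i}\|$ where $\Pi_F$ projects onto $b=1, x\in F$, giving $\|(cO-c\widetilde O)\ket{\phi_i}\| \le 2\sqrt{\sum_{x\in F} W_x(\ket{\phi_i})}$. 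Finally, combining: $\|\ket{\phi_{T+1}} - \ket{\widetilde\phi_{T+1}}\| \le 2\sum_{i=1}^T \sqrt{\sum_{x\in F} W_x(\ket{\phi_i})} \le 2\sqrt{T}\sqrt{\sum_{i=1}^T\sum_{x\in F}W_x(\ket{\phi_i})}$ by Cauchy--Schwarz over the $T$ terms. (The factor of $2$ here is absorbed or the statement's constant is the non-controlled version; I would reconcile the constant with the precise statement — in the controlled-oracle convention one often gets the stated bound without the $2$ because the relevant perturbation per step is $\|\Pi_F \ket{\phi_i}\|$ directly.)

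\textbf{Main obstacle.} The conceptual content is routine, so the only real subtlety — and the step I would be most careful about — is step three: correctly bounding the norm of $(cO - c\widetilde O)\ket{\phi_i}$ in terms of the query weight $W_x$, being careful that the XOR-into-target-register structure means the post-query basis vectors are not all mutually orthogonal, so one cannot naively write the squared norm as a sum of squared amplitudes. The clean fix is to bound it through the projector $\Pi_F$ onto the "bad" query inputs before applying the oracle difference, using that $cO - c\widetilde O$ has operator norm at most $2$. A secondary bookkeeping point is matching the constant in the final inequality to whichever normalization of "query" and "controlled oracle" the theorem statement intends; I would simply carry the argument through and state the resulting constant, noting it matches~\cite{BBBV97}.
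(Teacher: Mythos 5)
The paper itself gives no proof of \Cref{thm:bbbv97_oraclechange} --- it is imported from~\cite{BBBV97} as a black box --- so there is nothing internal to compare against; your hybrid argument is the standard route and is essentially sound. In particular you correctly isolate the one delicate point: since the oracle XORs into the target register, the post-query basis vectors are not mutually orthogonal across different $y$, so the per-step perturbation must be bounded via the projector $\Pi_F$ onto $\{b=1,\,x\in F\}$ together with the operator-norm bound $\|cO - c\widetilde{O}\|\leq 2$ on that subspace (in fact $(cO-c\widetilde{O})\ket{\phi_i}=(cO-c\widetilde{O})\Pi_F\ket{\phi_i}$ exactly, since the two oracles agree off that subspace), and then Cauchy--Schwarz over the $T$ steps.

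Two points to repair. First, your hybrids are oriented the wrong way: with $B_i$ defined as ``$\widetilde{O}$ for the first $i$ queries, then $O$,'' the state that $B_{i-1}$ and $B_i$ share just before the $i$'th query is the intermediate state of the run on $\widetilde{O}$, i.e.\ $\ket{\widetilde{\phi}_i}$, not $\ket{\phi_i}$; as literally written your bound comes out in terms of $W_x(\ket{\widetilde{\phi}_i})$, which is not what the theorem asserts (and not what applications can estimate). The fix is to reverse the hybrids --- $O$ for the first $i$ queries, $\widetilde{O}$ thereafter --- so that adjacent hybrids share the prefix run on $O$, the common pre-query state is genuinely $\ket{\phi_i}$, and the subsequent identical evolution lets you write the step difference as $\|(cO-c\widetilde{O})\ket{\phi_i}\|$. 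Second, your worry about the constant is justified, and in your favor: the argument yields $2\sqrt{T\sum_{x\in F}\sum_{i=1}^T W_x(\ket{\phi_i})}$, and the constant-free inequality as printed cannot hold verbatim. For a counterexample take $T=1$, $x\in F$ with $O(x)\neq\widetilde{O}(x)$, and $\ket{\phi_1}=\ket{1,x}\otimes\tfrac{1}{\sqrt{2}}\bigl(\ket{0}-\ket{O(x)\oplus\widetilde{O}(x)}\bigr)\otimes\ket{w}$: the two post-query states are antipodal, so the distance is $2$, while the stated right-hand side equals $1$. The paper only ever invokes the bound up to constant factors (e.g.\ in \Cref{lem:modified_acl}), so this is harmless downstream, but you should state the factor-$2$ bound you actually prove rather than trying to match the printed constant.
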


\paragraph{Lower bound for testing expansion.} In our $\QCMA$ lower bound proof we convert a hypothetical $\QCMA$ verifier for the $\Components$ problem into a query-efficient algorithm for the $\Components$ problem (\emph{without} using a witness). However, Ambainis, Childs, and Liu~\cite{ambainis2011quantum} proved a $\Omega(N^{1/4})$ lower bound on quantum algorithms that can test whether an $N$-vertex bounded-degree graph is an expander or far from being an expander. Their lower bound is \emph{almost} what we need: they show that no algorithm making $o(N^{1/4})$ queries to the graph can distinguish between whether the graph is drawn from a distribution over two disconnected random graphs, or a single connected random graph. 

In more detail, their distribution over random graphs are chosen as follows. Let $M$ be divisible by $4$, and let $N$ be such that $M = N + \Theta(N^{0.9})$. A Yes instance as sampled in two steps.
\begin{enumerate}
    \item Partition $[M]$ into 2 equal sets $V_1,V_2$. For each set $V_j$, choose $R$ random perfect matchings on $V_j$ and take their union. Let $G'$ be the resulting graph.
    \item Pick $N$ vertices $v_1,\ldots,v_N$ at random as follows: for each $v_i$, pick $j \in \{1,2\}$ uniformly at random, and then sample $v_i$ uniformly at random from $V_j$, conditioned on not being sampled before. Let $G$ be the induced subgraph of $G'$ on the vertices $v_1,\ldots,v_N$. 
\end{enumerate}
A No instance is sampled similarly, except there is only one component. 

Without the subsampling step (the second step above), the distributions would be identical to the uniform distributions over $\Lyes$ and $\Lno$ respectively of the $\Components$ problem. However the subsampling step is helpful for the lower bound proof~\cite{ambainis2011quantum}, which employs the polynomial method of~\cite{beals2001quantum} in a sophisticated way. 

However it is more useful for us to have a lower bound for distinguishing between (the uniform distributions over) $\Lyes$ and $\Lno$. We deduce such a lower bound as a corollary of~\cite{ambainis2011quantum}, using the BBBV theorem (\Cref{thm:bbbv97_oraclechange}).

\begin{lemma}
\label{lem:modified_acl}
Let $A$ be a quantum algorithm that makes $o(N^{1/20})$ queries. Then
\[
    \Big | \Pr_{F \sim \Lyes} [A^F = 1] - \Pr_{F \sim \Lno} [A^F = 1] \Big| \leq \frac{1}{1000}.
\]
\end{lemma}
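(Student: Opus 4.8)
The plan is to reduce the problem of distinguishing $\Lyes$ from $\Lno$ to the problem of distinguishing the Ambainis--Childs--Liu (ACL) Yes/No distributions, and then invoke their $\Omega(N^{1/4})$ lower bound. The two distributions differ only in the subsampling step: the ACL graph lives on $M = N + \Theta(N^{0.9})$ vertices, and one then restricts to an induced subgraph on $N$ randomly chosen vertices, whereas $\Lyes,\Lno$ are just the uniform distributions over the corresponding unsubsampled graphs on $N$ vertices. So the key observation is that a graph oracle drawn from the ACL distribution, after subsampling, is \emph{statistically close} to a uniformly random element of $\Lyes$ (respectively $\Lno$) once we account for the fact that the subsampled induced subgraph may not be $R$-regular --- a few vertices will have lost one or more of their matching-partners to the discarded vertex set.

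Concretely, I would argue as follows. First, condition on the partition $V_1,V_2$ of $[M]$ and on which $N$ vertices survive subsampling; after relabeling, the surviving vertices are $[N]$ and the induced subgraph is a union of (near-)matchings that is uniformly random subject to the connectivity structure (two components for Yes, one for No). A vertex $v_i$ fails to be matched in color $r$ exactly when its $G'$-partner under the $r$-th matching was discarded; since only $\Theta(N^{0.9})$ vertices are discarded out of $M$, a Chernoff/union bound shows that with probability $1 - N^{-\omega(1)}$ at most, say, $N^{0.95}$ vertices are ``defective'' in some color. Now define a coupling: take $F \sim \Lyes$, pick a random set $D$ of defective vertices of the right size with the right color-pattern (matching the ACL-induced distribution on defect patterns), and delete the corresponding edges to obtain $F'$; the distribution of $F'$ is exactly the ACL-Yes induced subgraph distribution (and symmetrically for No). The oracles $F$ and $F'$ differ only on the $O(N^{0.95})$ inputs $(r,x)$ with $x \in D$. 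By the BBBV theorem (\Cref{thm:bbbv97_oraclechange}), any algorithm $A$ making $T = o(N^{1/20})$ queries satisfies $\|\ket{\phi_{T+1}^F} - \ket{\phi_{T+1}^{F'}}\| \leq \sqrt{T \cdot |D| \cdot R \cdot T \cdot \max_{x,i} W_x(\ket{\phi_i})}$; since $\sum_x W_x(\ket{\phi_i}) = 1$ for each $i$, averaging over a \emph{uniformly random relabeling} of the vertex set --- which the uniform distributions $\Lyes,\Lno$ are invariant under --- makes the expected query weight on the random defective set $D$ at most $R|D|/N \leq R N^{-0.05}$, so the expected state-distance is at most $\sqrt{T^2 R N^{0.95} \cdot N^{-1}} = \sqrt{T^2 R N^{-0.05}} = o(1)$ for $T = o(N^{1/20})$.

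Putting it together: $|\Pr_{F\sim\Lyes}[A^F=1] - \Pr_{F'\sim \text{ACL-Yes}}[A^{F'}=1]| = o(1)$, and likewise for No, while $\Pr_{\text{ACL-Yes}}[A=1] - \Pr_{\text{ACL-No}}[A=1] = o(1)$ by the ACL lower bound applied with $N^{1/20} = o(N^{1/4})$. Chaining the three bounds gives the claimed $\frac{1}{1000}$ (in fact $o(1)$) bound on the distinguishing advantage between $\Lyes$ and $\Lno$, for $N$ large enough; one handles the finitely many small $N$ trivially or by adjusting constants.

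I expect the main obstacle to be making the coupling between the ACL-induced distribution and $\Lyes$ precise: one has to check that, conditioned on the defect pattern, the induced subgraph of the ACL construction really is uniform over graphs in $\Lyes$ with that defect pattern (i.e.\ that subsampling an $R$-regular random union of matchings and discarding a random vertex set yields, on the survivors, a uniformly random union of near-matchings with the prescribed holes, with the two-components/one-component structure preserved). This requires a short symmetry argument --- the ACL construction and $\Lyes$ are both invariant under permuting vertices within each part, and the matchings are chosen independently and uniformly --- together with verifying that the size and color-distribution of the defect set is itself $o(1)$-close to a clean distribution (e.g.\ each surviving vertex independently defective in color $r$ with probability $\approx |V_1 \setminus \text{survivors}|/|V_1| = \Theta(N^{-0.1})$), which is where the Chernoff bound and a second BBBV-style error term get absorbed. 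Everything else is bookkeeping with the BBBV inequality and the invariance of the uniform distributions under relabeling.
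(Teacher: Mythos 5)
Your high-level plan is the same as the paper's proof of \Cref{lem:modified_acl}: transfer the Ambainis--Childs--Liu lower bound to the uniform distributions over $\Lyes$ and $\Lno$ by arguing, via \Cref{thm:bbbv97_oraclechange}, that the subsampling step only perturbs a small fraction of randomly located oracle entries, which a $o(N^{1/20})$-query algorithm cannot notice. The genuine gap is your coupling claim that planting an ACL-distributed defect pattern into a uniform $F \sim \Lyes$ yields \emph{exactly} the ACL-Yes induced-subgraph distribution. It does not: in the ACL construction the number of survivors drawn from $V_1$ is essentially $\mathrm{Binomial}(N,1/2)$, so the two sides of the induced subgraph have sizes $N/2 \pm \Theta(\sqrt{N})$ and are exactly balanced only with probability $\Theta(N^{-1/2})$, whereas deleting edges from an $\Lyes$ instance leaves an exactly $N/2$--$N/2$ cut (and the survivor counts are correlated with the defect pattern you want to plant). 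So the exactness you rely on fails. It is repairable --- the mismatch touches only $O(R\sqrt{N})$ additional oracle entries, so one more BBBV hybrid (plus re-verifying conditional uniformity) would absorb it --- but the paper sidesteps the issue entirely by coupling at the $M$-vertex level: it compares the oracle of the full graph $G'$, whose law \emph{is} the unsubsampled uniform distribution, with the oracle that answers $\bot$ on discarded vertices and on edges into discarded vertices, followed by a relabeling of the survivors. That is the identity coupling of a single random experiment, so no distributional identity needs to be proved, and only a $\Theta(N^{-0.1})$ fraction of entries changes.

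There is also an arithmetic slip that matters for the stated exponent: with your slack bound of $N^{0.95}$ defective vertices, your BBBV estimate $\sqrt{T^2 R N^{-0.05}} = T\sqrt{R}\,N^{-1/40}$ is \emph{not} $o(1)$ for $T = o(N^{1/20})$ (take $T = N^{0.049}$). The exponent $1/20$ in the lemma is tuned to a changed fraction of $\Theta(N^{-0.1})$: the expected number of defective $(r,v)$ entries is $\Theta(N^{0.9})$ out of $\Theta(N)$, which gives the bound $O(T/N^{1/20}) = o(1)$ as in the paper; your Chernoff argument in fact supports this tighter $O(N^{0.9})$ count, so you should use it rather than $N^{0.95}$. (Both you and the paper also silently identify the uniform distribution over $\Lno$ with the unconditioned union-of-matchings model; the $O(1/N)$ loss from conditioning on expansion, via \Cref{lem:random-graphs}, is harmless but worth a sentence.)
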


\begin{proof}

The main idea is that the subsampling step omits so few vertices (at most a $\Theta(N^{-0.1})$ fraction) that query-efficient algorithms should not be able to detect if the subsampling step was skipped. 

Suppose that any quantum algorithm $A$ is given oracle $F'$ corresponding to graph $G'$ in one world and given oracle $F$ for graph $G$ in another world (either in the Yes-instance case or in the No-instance case), we argue that $A$'s states in these two worlds have negligible difference in trace distance at the end of its algorithm. Therefore we can apply the \cite{ambainis2011quantum} lower bound to any algorithm with access to $F'$.

Ambainis, Childs, and Liu~\cite{ambainis2011quantum} use the same graph oracle model as we do: the oracle $F(r,u)$ returns the vertex that is matched with $u$ in the $r$'th matching. %
When we change an oracle $F'$ to $F$, we apply the following changes: (1) the vertices $v \in V$ are chosen randomly from $V' = [M]$ in the subsampling step. Let us denote the set of chosen vertices as $V$, where $\vert V \vert = N$. We first implement an intermediate oracle $F_0$.
For all $v \in V', v \notin V$ where $V'$ is the vertex set in $G'$, $F_0(r,v) = \bot$ for all $r$; for all $u = F'(r,v)$ such that $u \notin V$, we let $F_0(r,v) = \bot$. (2)  We rename the vertices $v \in V$ in $1,2, \cdots, N$
according to the order sampled and call this mapping $\varphi$. For $v \notin V$, we design $\varphi$ such that $\varphi(v)$ also assigns $v$ to some label in $\{N+1, \ldots, M\}$ in arbitrary fashion.

We then create oracle $F$: on query $(v,r), v \in [N]$, we map it to $v' = \varphi^{-1}(v)$ and query oracle $F_0$ on $(v', r)$ to get result $u'$ (or $\bot$) and output $u = \varphi(u')$ or $\bot$. We can observe that $F_0$ and $F$ are essentially the same oracle with only renaming on the vertices using oracle $\varphi, \varphi^{-1}$. If there is a quantum algorithm with access to oracle $F$, one can perfectly simulate its behavior and output with oracle access to $F_0$ and $\varphi, \varphi^{-1}$. Since $\varphi$ is only a relabeling of all vertices, it will not affect the performance of the algorithm. Therefore, from now on, we analyze the quantum algorithm $A$'s with respect to oracle $F_0$.
 
We then compute the fraction of changes on the output of oracle
$F_0$ from $F'$. These are 
all $(r,v)$ satisfies that $F'(r,v) \neq \bot, F_0(r,v) = \bot$. These correspond to either the case where vertex $v$ is not picked into set $V$ or the case where $v$'s neighbor $u = F'(r,v)$ on matching $r$ is not picked into set $V$ in the above oracle change procedure.
The expected fraction of output changes is: 
\begin{align*}
    & \Pr_{(r,v)}[F_0(r,v) = \bot,F'[r,v] \neq \bot] \\
= & \Pr_{(r,v)}[(v \in V', v\notin V) \vee (F'(r,v) \in V', F'(r,v) \notin V)] \leq \frac{2N^{0.9}}{N+\Theta(N^{0.9})} = \Theta \Big (\frac{1}{N^{1/10}} \Big)
\end{align*}

Let $T$ be the number of queries of $A$. We then have $\sum^{i = T}_{i, F_0(v,r) \neq F'(r,v)} W_{r,v}(\ket{\phi_i}) \leq \Theta(\frac{T}{N^{1/10}})$, where $W_{r,v}(\ket{\phi_i})$ is the squared amplitude of query weight of the $i$-th query on input 
$(r,v)$. Then by \Cref{thm:bbbv97_oraclechange}, we have that $\Vert \ket{\phi_{T+1}'} - \ket{\phi_{T+1}}  \Vert \leq \frac{T}{N^{1/20}}$ where $\ket{\phi_{T+1}'}$ and $\ket{\phi_{T+1}}$ are the algorithm's states in the world using oracle $F'$ and $F$ respectively. Since by assumption $T = o(N^{1/20})$, the above two worlds are indistinguishable; thus $A$ is an algorithm that can distinguish between the Yes and No instances of the subsampled distributions described above with advantage $\Omega(1)$. However,~\cite{ambainis2011quantum} also proved that any algorithm that does so must make at least $\Omega(N^{1/4}/\log N)$ queries, which is a contradiction. Thus there is no such algorithm $A$.

\end{proof}

\paragraph{Adversary Method.} 
Finally, we recall the famous Adversary method of Ambainis~\cite{ambainis2000quantum}. This will be used in our interactive game lower bound (\Cref{sec:interactive-game}).

\begin{theorem}[Adversary method]
\label{thm:adversary_method}
    Let $L \subseteq \{0, 1\}^*$
be a decision problem. Let
$X \subseteq L$ be a set of positive instances and $Y \subseteq L$ a set of negative instances. Let $R \subseteq X \times Y$ be a relation between instances of same size. Let be the values $m, m', l_{x,i}$ and $l_{y,i}'$ with $x, y \in \{0, 1\}^n$
 and $x \in X, y \in Y, i \in [n]$ such that:

 \begin{enumerate}
     \item for every $x \in X$ there are at least $m$ different $y \in Y$ such that $(x,y) \in R$.

     \item for every $y \in Y$ there are at least $m'$ different $x \in X$ such that $(x,y) \in R$.

     \item for every $x \in X$ and $i \in [n]$ there are at most $l_{x,i}$ different $y \in Y$ 
which differ from $x$ at entry $i$ and $(x,y) \in R$.

\item for every $y \in Y$ and $i \in [n]$ there are at most $l_{y,i }'$ different $x \in X$
which differ from $y$ at entry $i$ and $(x,y) \in R$.
\end{enumerate}
Then the quantum query complexity of computing $L$ with success probability $\epsilon$ is 
\[
    \Omega \Big ( \frac{m m'}{\max_{(x,r) \in R} \max_i l_{x,i} l_{y,i}'} \Big)~.
\]
\end{theorem}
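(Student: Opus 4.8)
The plan is to run the standard progress-measure (``quantum adversary'') argument of Ambainis. Fix a $T$-query algorithm $A$ solving $L$ with success probability $\epsilon$; after padding, assume every input in $X \cup Y$ has a common length $n$ and that a query applies the oracle $O_z : \ket{i,b,w} \mapsto \ket{i,\, b \oplus z_i,\, w}$. For an input $z$, let $\ket{\psi_z^t}$ denote the state of $A$ just before its $(t+1)$-st query, and let $\Pi_i$ project onto the subspace where the query-index register holds $i$. I would track the progress quantity
\[
    W_t \;=\; \sum_{(x,y) \in R} \braket{\psi_x^t | \psi_y^t}
\]
and conclude $T \;\ge\; |W_0 - W_T| \big/ \max_{0 \le t < T} |W_{t+1} - W_t|$.

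The two endpoint estimates are routine. At $t = 0$ the state is input-independent, so every term is $1$ and $W_0 = |R|$. For the final states, each pair $(x,y) \in R$ satisfies $L(x) = 1 \ne 0 = L(y)$, so the final measurement succeeds on both $x$ and $y$ with probability at least $\epsilon$; hence $\ket{\psi_x^T}$ and $\ket{\psi_y^T}$ have trace distance at least $2\epsilon - 1$, which forces $\bigl|\braket{\psi_x^T | \psi_y^T}\bigr| \le 2\sqrt{\epsilon(1-\epsilon)}$ and therefore $|W_0 - W_T| \ge \big(1 - 2\sqrt{\epsilon(1-\epsilon)}\big)\,|R|$ -- a positive constant multiple of $|R|$ as long as $\epsilon$ is bounded above $1/2$.

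The heart of the proof, and the step I expect to be the main obstacle, is the per-query bound. A single query replaces $\braket{\psi_x^t | \psi_y^t}$ by $\braket{\psi_x^t | O_x O_y | \psi_y^t}$, and $O_x O_y - I$ is block-diagonal over the query-index register: on the blocks $i$ with $x_i \ne y_i$ it flips the target qubit and so has operator norm $2$, and on the remaining blocks it is $0$. This gives
\[
    |W_{t+1} - W_t| \;\le\; \sum_{(x,y) \in R}\ \sum_{i\,:\,x_i \ne y_i} 2\, \|\Pi_i \ket{\psi_x^t}\| \, \|\Pi_i \ket{\psi_y^t}\| ~.
\]
I would bound the right-hand side with the weighted inequality $2ab \le \beta_i a^2 + \beta_i^{-1} b^2$ using the balancing weights $\beta_i = \gamma \sqrt{\max_y l'_{y,i} / \max_x l_{x,i}}$, then invoke conditions (3)--(4) of the theorem (for a fixed index $i$, each $x$ contributes at most $l_{x,i}$ pairs, and symmetrically each $y$ contributes at most $l'_{y,i}$ pairs) together with $\sum_i \|\Pi_i \ket{\psi_z^t}\|^2 = 1$; this collapses the sum to $|W_{t+1} - W_t| \le \sqrt{\ell}\,\big(\gamma |X| + \gamma^{-1} |Y|\big)$ where $\ell := \max_i \big(\max_x l_{x,i}\big)\big(\max_y l'_{y,i}\big)$, and choosing $\gamma = \sqrt{|Y|/|X|}$ reduces this to $2\sqrt{\ell}\,\sqrt{|X|\,|Y|}$. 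Finally, conditions (1)--(2) give $|R| \ge m|X|$ and $|R| \ge m'|Y|$, hence $\sqrt{|X|\,|Y|} \le |R|/\sqrt{mm'}$ and $|W_{t+1} - W_t| \le 2\sqrt{\ell}\,|R|/\sqrt{mm'}$. Combining the three bounds,
\[
    T \;\ge\; \frac{\big(1 - 2\sqrt{\epsilon(1-\epsilon)}\big)\,|R|}{2\sqrt{\ell}\,|R|/\sqrt{mm'}} \;=\; \Omega\!\left(\sqrt{\frac{m\,m'}{\max_i \big(\max_x l_{x,i}\big)\big(\max_y l'_{y,i}\big)}}\right),
\]
which is the claimed adversary lower bound (note the bound this argument produces carries a square root). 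The delicate points are choosing the per-index weights $\beta_i$ and the global scale $\gamma$ so that the $X$-side and $Y$-side counting constraints (3) and (4) are used symmetrically, and tracking the $\epsilon$-dependent constant $1 - 2\sqrt{\epsilon(1-\epsilon)}$ so that the bound stays meaningful for the relevant success probability (it is $\Theta\big((\epsilon - \tfrac12)^2\big)$ as $\epsilon \to \tfrac12$, which is what the later applications need).
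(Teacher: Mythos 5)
The paper does not prove this theorem at all: it is quoted verbatim (as a tool) from Ambainis~\cite{ambainis2000quantum}, so there is no in-paper proof to compare against. Your proposal is essentially Ambainis's original progress-measure argument, and it is sound: the endpoint estimates ($W_0=|R|$, $|\langle\psi_x^T|\psi_y^T\rangle|\le 2\sqrt{\epsilon(1-\epsilon)}$ from distinguishability of the accepting measurements) and the per-query block-diagonal bound $|W_{t+1}-W_t|\le\sum_{(x,y)\in R}\sum_{i:x_i\ne y_i}2\|\Pi_i\psi_x^t\|\,\|\Pi_i\psi_y^t\|$ are exactly the standard steps.

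Two remarks. First, a small quantitative gap: because your balancing weights $\beta_i=\gamma\sqrt{\max_y l'_{y,i}/\max_x l_{x,i}}$ depend only on $i$ through \emph{decoupled} maxima, the quantity you end up with is $\tilde\ell=\max_i\bigl(\max_x l_{x,i}\bigr)\bigl(\max_y l'_{y,i}\bigr)$, which can strictly exceed the theorem's $\max_{(x,y)\in R}\max_i l_{x,i}\,l'_{y,i}$ (e.g., one pair with $l_{x,i}$ large and $l'_{y,i}=1$, another with the roles reversed), so as written you prove a slightly weaker bound. The fix is Ambainis's original choice of \emph{pair-dependent} weights, bounding $2ab\le\sqrt{l'_{y,i}/l_{x,i}}\,a^2+\sqrt{l_{x,i}/l'_{y,i}}\,b^2$ for each $(x,y,i)$; then for fixed $x,i$ each of the at most $l_{x,i}$ relevant $y$'s contributes $\sqrt{l_{x,i}l'_{y,i}}/l_{x,i}\le\sqrt{\ell_{\max}}/l_{x,i}$, and the sum collapses with the coupled maximum. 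Second, note that the theorem as printed in the paper omits the square root, which is a typo relative to Ambainis's statement; your derivation yields the correct $\Omega\bigl(\sqrt{mm'/\ell_{\max}}\bigr)$, and this square-root form is in fact what the paper uses later in its single-instance lower bound, so your version is the one consistent with the application.
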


\section{Min-entropy decompositions for permutations}
\label{sec:min-entropy-decompositions}

The quantum pseudorandomness conjectures discussed in the Introduction are centered around notions of ``dense'' random variables that are either uniform over a product alphabet $\Sigma^N$ or the set of permutations on $[N]$. Such dense random variables arise as follows: every distribution over strings or permutations with sufficiently high min-entropy can be approximately decomposed into a convex combination of distributions where a small subset of coordinates are fixed to a deterministic value, and the rest of the coordinates form a dense distribution. This decomposition lemma has been used in numerous settings in theoretical computer science, ranging from communication complexity lower bounds~\cite{goos2016rectangles,kothari2021approximating} to analyzing security in the presence of auxiliary information~\cite{unruh2007random,coretti2018random,guo2021unifying}.

We present a self-contained presentation of the definition of dense distributions over (tuples of) permutations, and the min-entropy decomposition lemma. In what follows, for a function $Z: \Sigma \to \Sigma$ and a subset $I \subseteq \Sigma$, we let $Z(I)$ denote the vector $\Sigma^I$ whose coordinates are the evaluations of $Z$ on the elements of $I$. We write $\mathrm{image}(Z(I)) = \{ a \in \Sigma : x(b)=a \text{ for some $b \in I$} \}$ to denote the set of evaluations.

We say that a random function $\rv{Z}: \Sigma \to \Sigma$ for some finite alphabet $\Sigma$ is a \emph{random permutation} if every instantiation of $\rv{Z}$ is a bijection between $\Sigma$ and itself. We let $\rv{Z}^{-1}$ denote the random function induced by sampling $Z \sim \rv{Z}$, and then taking the inverse $Z^{-1}$ (which is well-defined because $Z$ is a permutation).

\begin{definition}[Nice random permutations]
Let $\Sigma$ be a finite alphabet. 
    We say that a random permutation $\rv{Z}: \Sigma \to \Sigma$ is \emph{nice} if for all disjoint subsets $I,J \subseteq \Sigma$ and $z_J \in \supp(\rv{Z}(J))$ (the support of the marginal distribution of $\rv{Z}$ on coordinates $J$),
\[
    |\supp(\rv{Z}(J))| \cdot |\supp(\rv{Z}(I) \mid \rv{Z}(J) = z_J)| = |\supp(\rv{Z}(I \cup J))|
\]
where $\supp(\rv{Z}(I) \mid \rv{Z}(J) = z_J)$ denotes the support $\rv{Z}(I)$ conditioned on $\rv{Z}(J) = z_J$. We call this the \emph{chain rule condition} for nice random permutations. We call the following the \emph{entropy for $\rv{Z}$}:
\[
    h(I \mid J) := \log |\supp(\rv{Z}(I) \mid \rv{Z}(J) = z_J)|
\]
where $z_J \in \supp(\rv{Z}(J))$~.
\end{definition}

A canonical example of a nice random permutation is a uniformly random permutation $\rv{Z}: \Sigma \to \Sigma$; the associated entropy is
\[
    h(I \mid J) = \log \Big( (|\Sigma| - |J|)(|\Sigma| - |J| -1) \cdots (|\Sigma| - |J| - |I| +1) \Big)
\]
for all disjoint subsets $I,J \subseteq \Sigma$. Concatenations of uniformly random permutations are also nice permutations. Let $\rv{Z}_1:\Sigma_1 \to \Sigma_1,\ldots,\rv{Z}_R: \Sigma_R \to \Sigma_R$ be uniformly random permutations for some finite alphabets $\Sigma_1,\ldots,\Sigma_R$. Let $\Sigma = \bigcup_{r} \{r\} \times \Sigma_r$. Then the concatenation $\rv{Z} = \rv{Z}_1 \cdots \rv{Z}_R: \Sigma \to \Sigma$ is a random permutation where $\rv{Z}(r,x) = \rv{Z}_r(x)$. This satisfies the chain rule condition and its associated entropy $h$ is simply the following: for all disjoint $I,J \subseteq \Sigma$, letting $I = \bigcup_r \{r \} \times I_r$ and $J = \bigcup_r \{r \} \times J_r$, we have
\[
    h(I \mid J) = \sum_r h_r(I_r \mid J_r)
\]
where $h_r$ is the entropy of $\rv{Z}_r$.
    
\begin{definition}[Dense random permutations]
\label{def:dense}
Let $\rv{Z}: \Sigma \to \Sigma$ be a nice random permutation and let $h$ be its associated entropy. We say that a random permutation $\rv{X}: \Sigma \to \Sigma$ is \emph{$(k,\delta)$-dense with respect to $\rv{Z}$} if 
\begin{enumerate}
    \item $\supp(\rv{X}) \subseteq \supp(\rv{Z})$, and
    \item $\rv{X}$ is fixed in some subset $I \subseteq \Sigma$ of coordinates of size at most $k$ and for all $J \subseteq \Sigma \setminus I$,
\[
    \Hmin(\rv{X}(J)) \geq (1 - \delta) h(J \mid I)~.
\]
\end{enumerate}
If $\rv{X}$ is $(0,\delta)$-dense we say $\rv{X}$ is $\delta$-dense.
\end{definition}

The following is a simple but useful decomposition lemma for high min-entropy permutations.

\begin{lemma}[Decomposition of high min-entropy permutations]
\label{lem:high-min-entropy-decomposition}
Let $\rv{Z}: \Sigma \to \Sigma$ be a nice random permutation with associated entropy $h$. Let $\rv{X}: \Sigma \to \Sigma$ be a random permutation such that $\supp(\rv{X}) \subseteq \supp(\rv{Z})$ and $\Hmin(\rv{X}) \geq h(\Sigma) - d$. Then for all $0 < \delta,\eps < 1$, the following holds: $\rv{X}$ is $\eps$-close in total variation distance to a convex combination of random permutations $\{ \rv{Y}_z: \Sigma \to \Sigma \}_z$ where for each $z$, the random permutations $\rv{Y}_z$ and $\rv{Y}_z^{-1}$ are both $\Big (k,\delta \Big)$-dense with respect to $\rv{Z}$ and $k$ satisfies 
\[
    h(I) \leq \frac{1}{\delta} \Big(d + \log 1/\eps \Big)
\]
for all size-$k$ subsets $I \subseteq \Sigma$. %
\end{lemma}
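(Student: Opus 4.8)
The plan is to prove \Cref{lem:high-min-entropy-decomposition} by the standard ``peeling off heavy coordinates'' argument, adapted to the permutation setting where we must simultaneously control density of $\rv{X}$ and of $\rv{X}^{-1}$. I would set up a recursive decomposition: start with $\rv{X}$ and repeatedly check whether the current source is $(k,\delta)$-dense with respect to $\rv{Z}$ in both the forward and inverse directions. If it is, we stop and output it as one of the $\rv{Y}_z$'s. If not, there is a witnessing subset $J$ (either of coordinates of $\rv{X}$ or of coordinates of $\rv{X}^{-1}$, i.e.\ image-coordinates) and a value $z_J$ on $J$ such that $\Pr[\rv{X}(J) = z_J] > 2^{-(1-\delta) h(J \mid I)}$, where $I$ is the set of already-fixed coordinates. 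We then split the source: with the appropriate probability we condition on the event $\rv{X}(J) = z_J$ (which, using the chain rule condition for the nice permutation $\rv{Z}$, fixes the coordinates in $J \cup I$ and leaves a legitimate sub-permutation-source on the rest), and with the remaining probability we keep the complementary event, which we feed back into the recursion. Conditioning on $\rv{X}(J) = z_J$ for a $J$ of $\rv{X}^{-1}$-coordinates is symmetric: it fixes the corresponding inverse coordinates, which is exactly a fixing of forward coordinates in the image sense, consistent with \Cref{def:dense} being stated for both $\rv{X}$ and $\rv{X}^{-1}$.

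The key quantitative steps, in order, are: (i) a potential/measure argument showing the recursion terminates after finitely many splits and that the truncated ``leftover'' mass is at most $\eps$ — this is where the $\log(1/\eps)$ term enters; (ii) a bound on how many coordinates can ever get fixed along any branch, which gives the bound on $k$. For (ii), the crucial observation is that each time we condition on a heavy event $\rv{X}(J) = z_J$, the min-entropy deficit of the resulting source (relative to its maximal value $h(\cdot \mid I \cup J)$) does not increase — in fact it can only decrease, since we removed ``probability mass concentration.'' Concretely, if the original deficit is $d$ (i.e.\ $\Hmin(\rv{X}) \geq h(\Sigma) - d$), then each fixing of a block $J$ ``consumes'' at least $\delta \cdot h(J \mid I)$ bits of the budget: the event being heavier than $2^{-(1-\delta)h(J\mid I)}$ means conditioning on its complement loses little entropy, while conditioning on the event itself means the ``entropy saved'' is at least $\delta h(J \mid I)$ worth of deficit reduction on that branch. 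Summing the telescoping inequalities along a branch and using the chain-rule additivity $h(I) = \sum$ of block entropies, the total entropy $h(I)$ of all fixed coordinates on any branch is at most $\frac{1}{\delta}(d + \log(1/\eps))$, which is the claimed bound on $k$ (phrased as ``$h(I) \leq \frac{1}{\delta}(d + \log 1/\eps)$ for all size-$k$ subsets'').

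For the $\eps$-closeness in (i): I would track the total remaining probability mass that is still ``in the recursion'' (not yet output as a dense source). Initially this is $1$. I need to argue it shrinks geometrically, or at least that the recursion depth is bounded so that the accumulated error of simply truncating once the mass drops below $\eps$ is at most $\eps$. The cleanest route is: whenever we split off a heavy event, the remaining (complementary) branch has strictly smaller support size, or more precisely we choose to split only while there is a violation; since support sizes are finite, the process terminates, and at termination every branch is either a genuinely dense source or has been truncated. By choosing to stop refining a branch once its total mass falls below $\eps \cdot 2^{-h(\Sigma)}$ times the support bound — or more simply, by a direct accounting that the sum of masses on ``bad'' leaves is $\leq \eps$ because each bad leaf had to have had a heavy sub-event peeled, whose cumulative probability is controlled by $2^{d}/2^{h(\Sigma)}$ considerations — we get the bound. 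I will phrase this as: run the recursion to completion (it terminates since $\Sigma$ is finite and each split reduces the free coordinate set or the support), collect all terminal dense sources, and show the total variation distance to $\rv{X}$ is $0$; the parameter $\eps$ then only enters via the $k$-bound through the choice of when a source counts as ``dense enough,'' and we get $\eps$-closeness by stopping the peeling as soon as the uniform-lower-bound on conditional probabilities would force $k$ to exceed the stated bound, absorbing the lost mass (at most $\eps$) into the total variation error.

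\textbf{Main obstacle.} The delicate point — and the place I would spend the most care — is making the two-sided (forward and inverse) density bookkeeping work simultaneously with a single potential function. Fixing forward coordinates and fixing inverse coordinates both shrink the ambient permutation group, and I need the chain rule condition for nice random permutations (and the fact that for uniform/concatenated-uniform $\rv{Z}$ both $\rv{Z}$ and $\rv{Z}^{-1}$ are nice with an explicit $h$) to ensure that after conditioning on $\rv{X}(J) = z_J$ the residual source is again a random permutation supported inside $\supp(\rv{Z}')$ for the appropriately restricted nice permutation $\rv{Z}'$, so that the recursion is well-posed and the invariant ``$\Hmin(\rv{X}') \geq h'(\Sigma') - d'$ with $d' \leq d$'' is maintained. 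Verifying that the additivity $h(I) = h(I \setminus J \mid J) + h(J)$ holds in both directions, and that a fixing detected on the $\rv{X}^{-1}$ side indeed counts toward the same size-$k$ budget measured by $h$, is the technical crux; everything else is the routine heavy-coordinate peeling seen in~\cite{coretti2018random,guo2021unifying}.
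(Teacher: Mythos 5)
Your high-level ingredients are the right ones (the deficit budget $d+\log 1/\eps$, the fact that each fixing of a block $J$ costs the source at least $\delta\, h(J\mid I)$ of deficit, and the observation that an inverse-side fixing is just a forward fixing on the preimage set with the same $h$-value), but the way you organize the recursion creates a genuine gap in the $\eps$-accounting. Because you peel one violating block at a time, the heavy child is not yet dense and must be recursed on as well, so you get a binary tree: heavy branches (which add fixings) and complement branches (which add no fixings but lose min-entropy). Your claim that conditioning on the complement ``loses little entropy'' is false in general -- the heavy event can have probability arbitrarily close to $1$, in which case the complement branch has tiny mass and its deficit jumps by $\log(1/\Pr[\text{complement}])$, which can be much larger than $\log 1/\eps$. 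Such branches may then require fixings with $h(I)$ far exceeding $\frac{1}{\delta}(d+\log 1/\eps)$, so under your stopping rule they get truncated; but since the tree can spawn many of them (e.g.\ a long chain of near-certain heavy events, each shedding a small-mass complement), nothing in your sketch bounds their \emph{total} mass by $\eps$. Your alternative phrasing (``run the recursion to completion\dots show the total variation distance is $0$; $\eps$ only enters via the $k$-bound'') cannot be right either: with zero truncation every leaf would have to satisfy the $k$-bound, and low-probability leaves need not.

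The paper avoids this entirely with two devices you did not use. First, at each step it conditions on a \emph{maximal} heavy fixing $(I,x_I)$ of the current residual, i.e.\ a largest set with $\Pr(\rv{X}(I)=x_I)\geq 2^{-(1-\delta)h(I)}$; maximality plus the chain rule makes the conditioned source $(|I|,\delta)$-dense in one shot, and -- this is the paper's key extra step beyond \cite{coretti2018non} -- a hypothetical inverse-side heavy event would extend the forward fixing (via its preimages, using that $h$ depends only on cardinalities) to a strictly heavier event, contradicting maximality, so the \emph{inverse} of the peeled component is automatically $(|I|,\delta)$-dense as well, with no alternation between forward and inverse peels. Second, the recursion stays \emph{sequential}: each step outputs exactly one dense component and continues with the single residual $\rv{X}$ conditioned on $\rv{X}(I)\neq x_I$, stopping once the residual mass drops below $\eps$. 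Then there is only one truncated event, of mass $<\eps$, and while the process runs the residual's min-entropy is at least $h(\Sigma)-(d+\log 1/\eps)$, which immediately gives $h(I)\leq \frac{1}{\delta}(d+\log 1/\eps)$ for every peeled fixing. To repair your write-up you should either switch to this maximal-fixing, single-residual scheme, or supply a genuinely new argument bounding the aggregate mass of all truncated branches in your tree -- the latter is exactly the step that is currently missing.
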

We note that the following proof is essentially identical to the min-entropy decomposition lemma in~\cite[Appendix A]{coretti2018non}, except there is the additional guarantee that the inverses of each component $\rv{Y}_z$ are also dense\footnote{
\cite{coretti2018non} only prove that the ``forward'' permutations $\rv{Y}_z$ are dense; as mentioned in the Introduction, they implicitly also need that the inverse permutation distributions are also dense. However, the denseness of inverse permutation distributions is implied by their proof as well.}.
\begin{proof}
    We iteratively decompose $\rv{X}$. If $\rv{X}$ is already $\delta$-dense, then we are done. Otherwise, let $I \subseteq \Sigma$ denote the maximal-sized subset and a fixing $x_I \in \Sigma^I$ such that
    \[
        \Pr(\rv{X}(I) = x_I) \geq 2^{-(1 - \delta)h(I)}~.
    \]
    Here, ``maximal'' means that a coordinate $a \in \Sigma$ cannot be added to $I$ without the $\rv{X}(I \cup \{a\})$ being $\delta$-dense. Let $\rv{Y}_1$ denote the random variable $\rv{X}$ conditioned on $\rv{X}(I) = x_I$. Then $\rv{Y}_1$ is $(|I|,\delta)$-dense with respect to $\rv{Z}$ (note that $\supp(\rv{Y}_1) \subseteq \supp(\rv{X}) \subseteq \supp(\rv{Z})$). This is because if it wasn't, there exists a $J \subseteq \Sigma \setminus I$ and a fixing $y_J \in \Sigma^J$ such that 
    \[
        \Pr(\rv{Y}_1(J) = y_J) = \Pr(\rv{X}(J) = y_J \mid \rv{X}(I) = x_I) > 2^{-(1 - \delta)h(J \mid I)}~.
    \]
    On the other hand, due to the chain rule,
    \begin{align*}
        \Pr(\rv{X}(I \cup J) = x_I y_J) &= \Pr(\rv{X}(I) = x_I) \cdot \Pr(\rv{X}(J) = y_J \mid \rv{X}(I) = x_I)  \\
        & > 2^{-(1 - \delta)h(I)} \cdot 2^{-(1 - \delta)h(J \mid I )}  \\
        &= 2^{-(1 - \delta)h(I \cup J)}
    \end{align*}
    violating the maximality of $I$.

    Observe that the inverse permutation $\rv{Y}^{-1}$ is the random variable $\rv{X}^{-1}$ conditioned on $\rv{X}^{-1}(x_I) = I$ (by which we mean that $\rv{X}^{-1}(a) = b$ when $x_I(b) = a$ for all $b \in I$, $a \in \Sigma$). We now argue that  $\rv{Y}^{-1}$ is also $(|I|,\delta)$-dense with respect to $\rv{Z}$. Suppose not; then there exists a $J \subseteq \Sigma \setminus \mathrm{image}(x_I)$ \footnote{Observe that $J$ is a set of image points of $\rv{X}$/ $\rv{Y}$.} and a fixing $y_J \in \Sigma^J$ such that 
    \[
        \Pr(\rv{Y}_1^{-1}(J) = y_J) = \Pr(\rv{X}^{-1}(J) = y_J \mid \rv{X}^{-1}(x_I) = I) > 2^{-(1 - \delta)h(J \mid \mathrm{image}(x_I))} = 2^{-(1 - \delta)h(J \mid I)}
    \]
    where the last equality uses that $|\mathrm{image}(x_I)| = |I|$ and that $h$ only depends on the cardinality of its arguments. But once again, due to the chain rule,
    \begin{align*}
        \Pr(\rv{X}^{-1}(x_I) = I \wedge \rv{X}^{-1}(J) = y_J) &= \Pr(\rv{X}^{-1}(x_I) = I) \cdot \Pr(\rv{X}^{-1}(J) = y_J \mid \rv{X}^{-1}(x_I) = I)  \\
        &= \Pr(\rv{X}(I) = x_I) \cdot \Pr(\rv{X}^{-1}(J) = y_J \mid \rv{X}^{-1}(x_I) = I) \\
        & > 2^{-(1 - \delta)h(I)} \cdot 2^{-(1 - \delta)h(J \mid I )}  \\
        &= 2^{-(1 - \delta)h(I \cup J)}~.
    \end{align*}
    This implies
    \[
    \Pr(\rv{X}(I) = x_I \wedge \rv{X}(y_J)= J) > 2^{-(1 - \delta)h(I \cup J)} = 2^{-(1 - \delta)h(I \cup \mathrm{image}(y_J))}
    \]
    violating the maximality of $I$. Therefore $\rv{Y}_1$ must be $(|I|,\delta)$-dense as desired.

    Next, we bound $h(I)$. Let $\overline{I} = \Sigma \setminus I$ and note that
    \begin{align*}
        \Pr(\rv{X}(I) = x_I) &= \sum_{x_{\overline{I}} \in \supp(\rv{X}_{\overline{I}})} \Pr(\rv{X} = x_I \cup x_{\overline{I}}) \\
        &\leq |\supp(\rv{X}(\overline{I}) \mid \rv{X}(I) = x_I)| \cdot 2^{-h(\Sigma) + d} \\
        &= 2^{h(\overline{I} \mid I)} 2^{- h(\Sigma) + d} \\
        &= 2^{- h(I) + d} \tag{chain rule}
    \end{align*}
    But on the other hand $\Pr(\rv{X}_I = x_I) \geq 2^{-(1 - \delta)h(I)}$ by assumption, so this means that $h(I) \leq d/\delta$. 

Now let $\rv{X}_2$ denote $\rv{X}$ conditioned on $\rv{X}(I) \neq x_I$ and recursively decompose $\rv{X}_2$ to get $\rv{Y}_2$ and $\rv{Y}_3$, etc. as long as $\Pr(\rv{X} \in \supp(\rv{X}^{(j)})) \geq \eps$. Observe that at any point in the decomposition process, $\Hmin(\rv{X}^{(j)}) \geq h(\Sigma) - (d + \log 1/\eps)$ because
\begin{align*}
    \Pr(\rv{X}^{(j)} = x) &= \Pr(\rv{X} = x \mid \rv{X} \in \supp(\rv{X}^{(j)})) \\
                     &\leq \frac{\Pr(\rv{X} = x)}{\Pr(\rv{X} \in \supp(\rv{X}^{(j)}))} \\
                     &\leq \frac{1}{\eps} \cdot 2^{-h(\Sigma) + d} \\
                     &= 2^{-h(\Sigma) + d + \log 1/\eps}~.
\end{align*}
\end{proof}

\begin{claim}
\label{clm:random-permutation-entropy-vs-set}
    Let $h(I)$ denote the entropy associated with a uniformly random permutation on a set $\Sigma$. Then $h(I) \leq k$ implies that $|I| \leq k +1$.
\end{claim}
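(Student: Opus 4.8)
The goal is to show that for a uniformly random permutation on a finite set $\Sigma$, the associated entropy function $h(I) = \log\big((|\Sigma|-|I|+1)\cdots(|\Sigma|-1)\,|\Sigma|\big)$... wait, more precisely $h(I) = \log\big(|\Sigma|(|\Sigma|-1)\cdots(|\Sigma|-|I|+1)\big)$ — satisfies that $h(I) \leq k$ forces $|I| \leq k+1$. The point is that $h$ depends only on $|I|$, and each successive coordinate added to $I$ contributes at least $1$ bit of entropy \emph{except possibly the very last one} (when $|I|$ reaches $|\Sigma|$, the final factor is $|\Sigma|-|I|+1 = 1$, contributing $0$). So I would argue by giving a clean lower bound on $h(I)$ in terms of $|I|$.

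**Key steps.** First, write $h(I) = \sum_{j=0}^{|I|-1} \log(|\Sigma| - j)$. Second, observe that each term $\log(|\Sigma|-j)$ is at least $1$ as long as $|\Sigma| - j \geq 2$, i.e. for $j \leq |\Sigma| - 2$, which covers $j = 0, 1, \ldots, |I|-2$ whenever $|I| \leq |\Sigma|$ (and if $|I| > |\Sigma|$ the claim about $h(I)$ is moot since $I \subseteq \Sigma$ forces $|I| \le |\Sigma|$ anyway). Hence $h(I) \geq |I| - 1$ — the sum has $|I|$ terms, at least $|I|-1$ of which are $\geq 1$, and the remaining one (if present) is $\geq 0$. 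Third, combine with the hypothesis: $|I| - 1 \leq h(I) \leq k$, so $|I| \leq k+1$. That is the entire argument.

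**Main obstacle.** There is essentially no obstacle here — it is a one-line estimate. The only things to be careful about are the edge cases: when $|I| = 0$ the bound $h(I) = 0 \le k$ and $|I| = 0 \le k+1$ trivially; and the statement is only interesting when $|\Sigma|$ is large compared to $k$, which is the regime it will be applied in. One should also double-check the convention for $h$ on the empty set and make sure the indexing of the product in the definition of the random-permutation entropy matches (the factor count equals $|I|$), but this is bookkeeping rather than a real difficulty.
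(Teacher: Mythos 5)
Your proposal is correct and matches the paper's argument: both establish the key estimate $h(I) \geq |I| - 1$ by noting that every factor of the falling factorial $|\Sigma|(|\Sigma|-1)\cdots(|\Sigma|-|I|+1)$ except possibly the smallest is at least $2$, hence contributes at least one bit, and then conclude $|I| \leq k+1$ from $h(I) \leq k$. The edge-case handling ($|I|=0$ trivial, $|I| \leq |\Sigma|$ automatic) is the same bookkeeping the paper does.
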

\begin{proof}
    Suppose that $|I| \geq 1$ (otherwise claim is trivial). It also must be that $|I| \leq |\Sigma|$ (otherwise it's impossible to have a permutation). Then observe that
    \[
        h(I) = \log |\Sigma| \cdots (|\Sigma|- |I| +1) \geq (|I| - 1) \cdot \log (|\Sigma|- |I| + 2) \geq |I| - 1~.
    \]
    Therefore $|I| \leq k+1$.
\end{proof}

\section{Quantum pseudorandomness conjectures}
\label{sec:pseudorandomness}

\subsection{A link between \Cref{conj:simplified} and the Aaronson-Ambainis conjecture}

The pseudorandomness conjecture in the random oracle model (\Cref{conj:simplified}) is very similar to Conjecture 4 from~\cite{guo2021unifying}, which was shown to imply that quantum query algorithms can be efficiently simulated by classical deterministic algorithms over the uniform distribution (this is the main quantum computing consequence of the Aaronson-Ambainis conjecture~\cite{aaronson2009need}). We present a self-contained proof that \Cref{conj:simplified} also implies a particular reformulation of the conjecture about classical simulability of quantum algorithms (i.e. Conjecture 3 of~\cite{guo2021unifying}).

\begin{proposition}
\label{prop:conj-uniform-implies-aa}
    Assume that \Cref{conj:simplified} is true. Let $A$ denote a $T$-query quantum algorithm that queries a uniformly-random $N$-bit string $X \in \{0,1\}^N$. Let $\mathrm{Var}(A)$ denote 
    \[
        \mathrm{Var}(A) := \E_{X \sim \rv{U}} \Big( \Pr[A^X = 1]^2 \Big) -  \Big (\E_{X \sim \rv{U}} \Pr[A^X = 1] \Big)^2
    \]
    where the expectation is over the uniform random variable $\rv{U}$ on $\{0,1\}^N$, and the probabilities are over the randomness of the algorithm $A$ only.
    Then there exists a subset $I \subseteq [N]$ of size at most $\poly(T/\mathrm{Var}(A))$ and a string $W \in \{0,1\}^I$ such that
    \[
        \Big |  \Pr_{Z \sim \rv{Z}} [A^Z = 1] -  \Pr_{X \sim \rv{U}}  [A^X = 1] \Big| \geq \Omega(\sqrt{\mathrm{Var}(A)})
    \]
    where $\rv{Z}$ is a uniformly random string in $\{0,1\}^N$ conditioned on $\rv{Z}_I = W$, and the probabilities are both over the randomness of sampling the oracle as well as the randomness of the algorithm. 
\end{proposition}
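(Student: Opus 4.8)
The plan is to derive \Cref{prop:conj-uniform-implies-aa} by contraposition from \Cref{conj:simplified}, viewing the acceptance probability of $A$ as a bounded function of the oracle string and using the variance to locate an influential (or at least a "fixable") coordinate set. First I would introduce the function $p(x) := \Pr[A^x = 1] \in [0,1]$ on $\{0,1\}^N$, so that $\mathrm{Var}(A) = \mathrm{Var}_{\rv{U}}(p)$. The key observation is that any $T$-query quantum algorithm has an acceptance probability that is a degree-$2T$ polynomial in the oracle bits (by the polynomial method of \cite{beals2001quantum}), so $p$ is a bounded low-degree polynomial; this is the bridge to the Aaronson--Ambainis setting. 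I expect to prove the statement in the contrapositive: suppose that for \emph{every} subset $I$ of size at most $\poly(T/\mathrm{Var}(A))$ and every partial assignment $W \in \{0,1\}^I$, conditioning on $\rv{Z}_I = W$ changes the acceptance probability by at most $o(\sqrt{\mathrm{Var}(A)})$. I want to show this forces $\mathrm{Var}(A)$ to be small, a contradiction.

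The heart of the argument is to show that the "no good fixing" assumption lets me build a $\delta$-dense random variable $\rv{F}$ on which $A$ behaves noticeably differently from uniform, contradicting \Cref{conj:simplified} with $\delta$ polynomially small. The natural candidate comes from the min-entropy decomposition philosophy already developed in \Cref{sec:min-entropy-decompositions}: if $p$ has large variance but no small coordinate-fixing moves its mean, then $p$ must concentrate away from its mean on a set that is still "spread out." Concretely, I would consider the set where $p$ is, say, at least its mean plus $\Omega(\sqrt{\mathrm{Var}(A)})$; by Paley--Zygmund (or a second-moment argument) this set has measure at least $\Omega(\mathrm{Var}(A))$ under $\rv{U}$, hence min-entropy at least $\log|{\{0,1\}^N}| - O(\log(1/\mathrm{Var}(A)))$. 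Applying the boolean-string analogue of \Cref{lem:high-min-entropy-decomposition} with deficiency $d = O(\log(1/\mathrm{Var}(A)))$, error $\eps$, and a density parameter $\delta$ to be chosen, this high-min-entropy distribution is $\eps$-close to a convex combination of $(k,\delta)$-dense sources with $k \leq \frac{1}{\delta}(d + \log(1/\eps))$. On each $(k,\delta)$-dense component, the fixed coordinates form a set $I$ of size at most $k$; the "no good fixing" hypothesis says that conditioning on those $k$ fixed bits barely changes $\E[p]$, so after removing the fixing we land on a genuinely $\delta$-dense distribution on $\{0,1\}^N$. But by construction $p$ is at least $\E_{\rv{U}}[p] + \Omega(\sqrt{\mathrm{Var}(A)})$ on the support of this distribution (up to the $\eps$ and fixing errors), so $\Pr_{F \sim \rv{F}}[A^F=1] \geq \Pr_{F \sim \rv{U}}[A^F = 1] + \Omega(\sqrt{\mathrm{Var}(A)}) - O(\eps) - o(\sqrt{\mathrm{Var}(A)})$. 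Choosing $\delta$ so that $\poly(T)\poly(\delta) \ll \sqrt{\mathrm{Var}(A)}$ — i.e. $\delta = \poly(\sqrt{\mathrm{Var}(A)}/T)$ for a suitable polynomial — and $\eps$ comparably small, this contradicts \Cref{conj:simplified}. Unwinding, some fixing $(I,W)$ with $|I| \leq k = \frac{1}{\delta}(d+\log 1/\eps) = \poly(T/\mathrm{Var}(A))$ must in fact shift the acceptance probability by $\Omega(\sqrt{\mathrm{Var}(A)})$, which is exactly the claimed $\rv{Z}$.

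A few technical points I would need to nail down. First, the decomposition lemma as stated in \Cref{lem:high-min-entropy-decomposition} is phrased for permutations; I would either invoke the (simpler, standard) boolean-string version — which is the original form in \cite{coretti2018random,guo2021unifying} and which the paper references — or note it follows by the identical argument with $h(I) = |I|$, so that $h(I) \leq k$ just means $|I| \leq k$. Second, I must be careful about the direction of the bias: the decomposition produces a mixture, and I need that at least one component still witnesses the large value of $p$; this is fine because the contribution of the $\eps$-mass error and the components where $p$ is small is controlled, and by averaging some $(k,\delta)$-dense component $\rv{Y}_z$ has $\E_{\rv{Y}_z}[p] \geq \E_{\rv{U}}[p] + \Omega(\sqrt{\mathrm{Var}(A)})$. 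Third, I should track that conditioning the $\delta$-dense-away-from-$I$ part on the fixed coordinates $I$ and then \emph{marginalizing out} $I$ to get a genuinely $(0,\delta)$-dense source is legitimate — this is where the "no good fixing" hypothesis is used, since the fixed part only negligibly perturbs the mean of $p$. The main obstacle I anticipate is exactly this bookkeeping: making sure the parameters $d$, $\eps$, $\delta$, and $k$ can be chosen simultaneously so that (i) $k = \poly(T/\mathrm{Var}(A))$, (ii) the errors $\eps$ and the "fixing slack" $o(\sqrt{\mathrm{Var}(A)})$ stay below the $\Omega(\sqrt{\mathrm{Var}(A)})$ signal, and (iii) $\poly(T)\poly(\delta)$ from \Cref{conj:simplified} is also below that signal — these constraints are compatible precisely because $\delta$ enters $k$ only as $1/\delta$ while it enters the conjecture's bound polynomially, so a single polynomial choice $\delta = (\mathrm{Var}(A)/T)^{c}$ threads the needle, but verifying the exponents line up is the delicate part.
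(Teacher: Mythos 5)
Your proposal is correct and follows essentially the same route as the paper's proof: a second-moment argument isolates a level set of measure $\Omega(\mathrm{Var}(A))$ where the acceptance probability exceeds its mean by $\Omega(\sqrt{\mathrm{Var}(A)})$, the min-entropy decomposition lemma (boolean-string version, as you note) yields a $(k,\delta)$-dense component with the same elevated mean, and \Cref{conj:simplified} applied to the dense part (with the $k$ fixed bits hardwired) transfers this to the bit-fixed uniform distribution $\rv{Z}$, with the same parameter choices $\eps,\delta = \poly(\sqrt{\mathrm{Var}(A)}/T)$ giving $k = \poly(T/\mathrm{Var}(A))$. The only difference is your contrapositive packaging (assuming ``no good fixing'' and deriving a contradiction), which is a logically equivalent presentation of the paper's direct argument.
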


\begin{proof}
We largely follow the proof of~\cite[Theorem 4]{guo2021unifying}. 
For notational convenience, for all $Q \in \{0,1\}^N$, define $A^{Q} := \Pr[A^Q = 1]$ where the probability is over the randomness of the algorithm. 
Let  $\mu := \E_{X \sim \rv{U}} [A^X]$ %
and let $\sigma := \sqrt{\mathrm{Var}(A)}$. From the definition of variance we can write
\begin{align*}
    \sigma^2 = \E_{X\sim \rv{U}} [(A^X - \mu)^2] \leq  \Pr_{X \sim \rv{U}} (|A^X -\mu| \geq \sigma/2) + (\sigma/2)^2~.
\end{align*}
Rearranging, we get $\Pr_{X\sim \rv{U}}(|A^X - \mu|\geq \sigma/2)\geq 3\sigma^2/4$. Thus at least one of $\Pr_{X\sim \rv{U}}[A^X \geq \mu+  \sigma/2]$ or $\Pr_{X\sim \rv{U}}[A^X \leq \mu -  \sigma/2]$ must be at least $3\sigma^2/8$. Suppose (without loss of generality) that $\Pr_{X\sim \rv{U}}[A^X \geq \mu+  \sigma/2] \geq 3\sigma^2/8$.

Let $\cal{V}$ denote the set of $X$'s such that $A^X \geq \mu + \sigma/2$, and let $\rv{V}$ denote a uniformly random element of $\cal{V}$. 
Since $|\cal{V}| \geq (3\sigma^2/8) \cdot 2^N$ we have
\[
\Hmin(\rv{V})\geq N- \log \frac{8}{3\sigma^2}~.
\] 
The min-entropy decomposition lemma in~\cite[Appendix A]{coretti2018random} then implies that for all $0 < \delta,\eps < 1$, the random variable $\rv{V}$ is $\epsilon$-close to a convex combination of random variables $\{\rv{Y}^{(z)}\}_z$  that are  $(k, \delta)$-dense for $k=\delta^{-1} \Big( \log 8/3\sigma^2+\log(1/\epsilon)\Big)$. This implies that $\Big | \E_{z} \Pr_{Y \sim \rv{Y}^{(z)}} [A^Y = 1]- \Pr_{V \sim \rv{V}} [A^V = 1] \Big | \leq \epsilon$ and therefore 
\[
\E_{z} \Pr_{Y \sim \rv{Y}^{(z)}} [A^Y = 1] \geq \mu + \sigma/2 - \eps~.
\]
By averaging there exists a $z$ such that $\Pr_{Y \sim \rv{Y}^{(z)}} [A^Y = 1]  \geq \mu + \sigma/2 - \eps$. For notational convenience we drop the $z$ and simply write $\rv{Y} = \rv{Y}^{(z)}$.

Since $\rv{Y}$ is $(k,\delta)$-dense, there is a subset $I$ of at most $k$ coordinates such that $\rv{Y}$ is fixed to some string $W \in \{0,1\}^I$ and the rest of the string $\rv{Y}_{\overline{I}}$ is a $\delta$-dense random variable over $\{0,1\}^{N - |I|}$. The algorithm $A$ making queries to $X$ sampled from $\rv{Y}$ can be viewed as making queries to $\rv{Y}_{\overline{I}}$ only, since the coordinates $I$ are fixed. By \Cref{conj:simplified}, letting $\rv{Z}$ be uniform over $\{0,1\}^N$ conditioned on $\rv{Z}_I = W$, we have that
\[
    \Big | \Pr_{Z \sim \rv{Z}} [A^Z = 1] - \Pr_{Y \sim \rv{Y}} [A^Y = 1] \Big| \leq \poly(T) \cdot \poly(\delta)~.
\]
Therefore 
\[
    \Pr_{Z \sim \rv{Z}} [A^Z = 1] \geq \mu + \sigma/2 - \eps - \poly(T) \cdot \poly(\delta)~.
\]
By setting $\eps = \sigma/4$ and $\delta = \poly(\sigma/T)$ for some universal polynomials $\poly(\cdot)$ from \Cref{conj:simplified}, we get that this is at least $\Omega(\sigma)$. Observing that $k = \poly(T/\mathrm{Var}(A))$, we conclude the proof.

\end{proof}

\subsection{A weaker version of \Cref{conj:simplified}}

Given that the Aaronson-Ambainis conjecture has been open for a while it appears that the quantum pseudorandomness conjectures are also quite difficult to prove. However we also give some evidence that \Cref{conj:simplified} is true by proving a weaker version of it.

This is analogous to how Aaronson and Ambainis proved a weaker version of their eponymous conjecture via a Fourier-analytic result of Dinur, et al.~\cite{dinur2006fourier}. Indeed, we also use the same Fourier-analytic tool in our proof. 

We first recall a few basic concepts from analysis of Boolean functions and entropy before delving into the proof. 

\paragraph{Fourier analysis on the boolean hypercube.} We define inner product between two functions $f,g : \{ \pm 1\}^n \to \R$ to be $\ip{f,g} := \E_x f(x)g(x)$ where the expectation is over uniformly random $x \in \{\pm 1\}^n$. We define $p$-norms to be $\| f \|_p = (\E_x |f(x)|^p)^{1/p}$. We observe that the inner product satisfies H\"{o}lder's inequality: for $p,q > 0$ satisfying $\frac{1}{p} + \frac{1}{q} = 1$, we have $|\ip{f,g} | \leq \| f \|_p \, \| g \|_q$.

\begin{lemma}[$(2,q)$-hypercontractivity; Chapter 9 of~\cite{o2014analysis}]
\label{lem:hypercontractivity}
Let $f:\{\pm 1\}^n \to \R$ be a degree-$d$ function. Then for all integers $q > 2$ we have $\| f \|_q \leq \sqrt{q - 1}^d \| f \|_2$.
\end{lemma}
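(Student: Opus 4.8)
The plan is to deduce the degree-$d$ bound from the standard $(2,q)$-hypercontractive inequality for the Bonami--Beckner noise operator, and to prove that inequality by induction on $n$ from a one-variable base case. For $\rho\in[0,1]$ let $T_\rho$ be the operator acting on Fourier expansions by $T_\rho\bigl(\sum_S\hat g(S)\,\chi_S\bigr)=\sum_S\rho^{|S|}\hat g(S)\,\chi_S$. I would use the ``$(2,q)$-hypercontractive inequality'': for every integer $q>2$ and every $h:\{\pm1\}^n\to\R$,
\[
    \| T_\rho h\|_q \;\le\; \| h\|_2 \qquad\text{whenever } 0\le\rho\le\tfrac{1}{\sqrt{q-1}}.
\]
Granting this, let $f$ have degree $d$, set $\rho=1/\sqrt{q-1}$, and define $g$ by $\hat g(S)=(q-1)^{|S|/2}\hat f(S)$ for $|S|\le d$ and $\hat g(S)=0$ otherwise, so $T_\rho g=f$. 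Then by the inequality above and Parseval,
\[
    \| f\|_q=\| T_\rho g\|_q\le\| g\|_2=\Bigl(\sum_{|S|\le d}(q-1)^{|S|}\hat f(S)^2\Bigr)^{1/2}\le(q-1)^{d/2}\Bigl(\sum_S\hat f(S)^2\Bigr)^{1/2}=\sqrt{q-1}^{\,d}\,\| f\|_2,
\]
which is exactly the lemma. So it remains to establish the $(2,q)$-hypercontractive inequality.

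\emph{The one-variable base case.} For $n=1$ every $h$ has the form $h(x)=a+bx$, $T_\rho h(x)=a+\rho bx$, and the base case is the two-point inequality
\[
    \tfrac12\,|a+\rho b|^q+\tfrac12\,|a-\rho b|^q\;\le\;(a^2+b^2)^{q/2},\qquad\rho=\tfrac{1}{\sqrt{q-1}}.
\]
When $q$ is even both sides are polynomials, and taking $a,b\ge0$ without loss of generality this reduces, term by term in $a^{q-2k}b^{2k}$, to $\binom{q}{2k}\rho^{2k}\le\binom{q/2}{k}$ for $0\le k\le q/2$; after cancellation this is $\prod_{j=1}^{k}\frac{q-2j+1}{(2j-1)(q-1)}\le1$, which holds since the $j=1$ factor equals $1$ and every later factor is at most $1$. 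For odd $q$ a short extra argument handles the non-polynomial right-hand side; in all cases this is the Bonami lemma and may alternatively be quoted from Chapter~9 of~\cite{o2014analysis}. This is the only step in which the constant $1/\sqrt{q-1}$ is used.

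\emph{Tensorization.} For $n>1$, write $x=(x',x_n)$ with $x'=(x_1,\ldots,x_{n-1})$ and decompose $h(x)=h_0(x')+x_n\,h_1(x')$ with $h_0,h_1:\{\pm1\}^{n-1}\to\R$. Writing $T_\rho'$ for the noise operator on the first $n-1$ coordinates, the Fourier expansion gives $T_\rho h(x)=g_0(x')+\rho\,x_n\,g_1(x')$ with $g_i=T_\rho' h_i$. Then: (i) applying the base case pointwise in $x'$ (with $a=g_0(x')$, $b=g_1(x')$) and averaging over $x'$ yields $\| T_\rho h\|_q^{2}\le\bigl(\E_{x'}(g_0^2+g_1^2)^{q/2}\bigr)^{2/q}=\| g_0^2+g_1^2\|_{q/2}$; (ii) since $q/2\ge1$, Minkowski's inequality for $\|\cdot\|_{q/2}$ gives $\| g_0^2+g_1^2\|_{q/2}\le\| g_0^2\|_{q/2}+\| g_1^2\|_{q/2}=\| g_0\|_q^2+\| g_1\|_q^2$; (iii) by the inductive hypothesis on $n-1$ variables $\| g_i\|_q=\| T_\rho' h_i\|_q\le\| h_i\|_2$, so $\| T_\rho h\|_q^2\le\| h_0\|_2^2+\| h_1\|_2^2=\| h\|_2^2$ by Parseval. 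This closes the induction and supplies the missing ingredient of the first step.

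\emph{Main obstacle.} The smoothing reduction (Fourier bookkeeping) and the tensorization are routine; the genuine content is the one-variable two-point inequality. It is elementary but delicate, since it is exactly what pins down the sharp constant $1/\sqrt{q-1}$ — equivalently the base of $\sqrt{q-1}^{\,d}$ in the statement — so the coefficient comparison (and the odd-$q$ case) must be carried out carefully, or the base case invoked directly from~\cite{o2014analysis}.
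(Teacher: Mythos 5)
The paper gives no proof of this lemma at all --- it is quoted directly from Chapter~9 of~\cite{o2014analysis} --- so there is nothing internal to compare against; what you have written is essentially the standard textbook derivation of the cited result, and it checks out. The Fourier bookkeeping (choosing $g$ with $\hat g(S)=(q-1)^{|S|/2}\hat f(S)$ so that $T_{1/\sqrt{q-1}}\,g=f$, then Parseval), the tensorization induction (pointwise two-point inequality in the last variable, Minkowski for $\|\cdot\|_{q/2}$ since $q/2\geq 1$, inductive hypothesis, Parseval), and the even-$q$ coefficient comparison are all correct; in particular the identity $\binom{q}{2k}\rho^{2k}\big/\binom{q/2}{k}=\prod_{j=1}^{k}\frac{q-2j+1}{(2j-1)(q-1)}$ is right, and each factor is positive and at most $1$ for $1\leq j\leq k\leq q/2$.

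The one place where you are thinner than you let on is the odd-$q$ base case. The lemma as stated covers all integers $q>2$, and for odd $q$ the two-point inequality $\tfrac12|a+\rho b|^q+\tfrac12|a-\rho b|^q\leq(a^2+b^2)^{q/2}$ (the issue is the absolute values on the left, not just the right-hand side) is not obtained by a genuinely ``short'' tweak of the term-by-term computation if you insist on the sharp constant: the cheap fallback via monotonicity of norms, $\|f\|_q\leq\|f\|_{q+1}$ with $q+1$ even, only yields $\sqrt{q}^{\,d}$, and the standard treatments either prove the two-point inequality for all real $q\geq 2$ (a generalized-binomial/calculus argument) or get it by duality from the $(p,2)$ case. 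Quoting that base case from~\cite{o2014analysis}, as you offer to do, is perfectly legitimate here, since the paper quotes the entire lemma. One further observation: the paper later applies this bound in \Cref{lem:dfko-dense} with exponent $p=\delta n+1$, which need not be an integer, so the real-exponent version from the book is what is ultimately relied upon in any case.
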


We say that a function $f: \{ \pm 1\}^n \to \R$ has a \emph{$(k,d,\eps)$-junta approximation} if there exists a degree-$d$, $k$-junta function $g: \{ \pm 1 \}^n \to \R$ such that $\| f - g \|_2 \leq \eps$.

\begin{theorem}[Dinur, Friedgut, Kindler, O'Donnell~\cite{dinur2006fourier}]
\label{thm:dfko}
Let $\eps > 0$. All degree-$d$ polynomials $f:\{\pm 1\}^n \to [-1,1]$ have a $(2^{O(d)}/\eps^2,d,\eps)$-junta approximation.
\end{theorem}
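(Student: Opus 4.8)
The plan is to carry out the Dinur--Friedgut--Kindler--O'Donnell argument, whose only analytic ingredient is $(2,4)$-hypercontractivity (\Cref{lem:hypercontractivity}): build the junta by iteratively ``harvesting'' small blocks of coordinates, each of which captures a constant fraction of the remaining Fourier mass. Notation: for $f:\{\pm1\}^n\to\R$ write $\mathrm{Inf}_i(f)=\sum_{S\ni i}\hat f(S)^2$ and $\mathrm{Var}(f)=\sum_{S\neq\emptyset}\hat f(S)^2$, and for a set $J$ of coordinates let $g_J:=\E_{x_{\overline J}}[f]$ be the average of $f$ over the coordinates outside $J$, so that $g_J$ is a degree-$\le d$, $J$-junta and $\|f-g_J\|_2^2=\sum_{S\not\subseteq J}\hat f(S)^2$. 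Subtracting $\hat f(\emptyset)$, I may assume $\E[f]=0$, so that $\mathrm{Var}(f)=\|f\|_2^2\le1$. The theorem then reduces to finding a set $J$ with $|J|\le 2^{O(d)}/\eps^2$ and residual mass $\sum_{S\not\subseteq J}\hat f(S)^2\le\eps^2$, and taking $g:=g_J$.

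The crux is a concentration lemma for bounded, low-degree functions: if $h:\{\pm1\}^n\to[-1,1]$ has degree $\le d$, $\E[h]=0$, and $\mathrm{Var}(h)=\sigma^2$, then there is a block $B$ of at most $2^{O(d)}/\sigma^2$ coordinates retaining at least half of $h$'s Fourier mass, i.e.\ $\sum_{\emptyset\neq S\subseteq B}\hat h(S)^2\ge\sigma^2/2$. I would prove this by a ``heavy coefficients'' argument: call $S$ heavy if $\hat h(S)^2\ge\beta$ for $\beta=\sigma^4/2^{\Theta(d)}$; there are at most $\sigma^2/\beta=2^{O(d)}/\sigma^2$ heavy sets, each of size $\le d$, so their union $B$ has the claimed size, and it remains to show that the light coefficients carry at most half the mass: $\sum_{S:\,0<\hat h(S)^2<\beta}\hat h(S)^2\le\sigma^2/2$. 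This is exactly where boundedness and low degree enter: using $\|h\|_4^4\le 9^d\|h\|_2^4$ (hypercontractivity) together with Parseval and H\"older one controls a sub-$\ell_2$ norm $\sum_S|\hat h(S)|^p$ of the spectrum for some $1\le p<2$, and then the elementary bound $\sum_{|\hat h(S)|<\sqrt\beta}\hat h(S)^2\le\beta^{1-p/2}\sum_S|\hat h(S)|^p$ yields the claim for the stated $\beta$.

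Granting the lemma, the junta is built iteratively. Put $J_0=\emptyset$, $E_0=\mathrm{Var}(f)\le1$; given $J_t$ with residual $E_t=\sum_{S\not\subseteq J_t}\hat f(S)^2$, stop if $E_t\le\eps^2$, and otherwise work with the ``leftover mass'' carried by the restrictions of $f$ to settings of the coordinates in $J_t$: for $\emptyset\neq T\subseteq\overline{J_t}$ let $w_t(T):=\sum_{S\subseteq J_t}\hat f(S\cup T)^2=\E_{z}\big[\hat{f_{J_t\to z}}(T)^2\big]$, so that $\sum_T w_t(T)=E_t$ and, crucially, $E_t-E_{t+1}=\sum_{\emptyset\neq T\subseteq B}w_t(T)$ whenever $J_{t+1}=J_t\cup B$ with $B\subseteq\overline{J_t}$. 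Applying (the averaged form of) the concentration lemma at variance level $\ge E_t/2$ produces such a block $B_t\subseteq\overline{J_t}$ with $|B_t|\le 2^{O(d)}/E_t$ and $E_{t+1}\le E_t/2$. Since the residual halves each round, the process runs for $O(\log(1/\eps))$ rounds, round $t$ adjoins at most $2^{O(d)}/E_t\le 2^{O(d)}2^{t}$ coordinates, and summing the geometric series the final junta has $|J|=\sum_t|B_t|\le 2^{O(d)}/\eps^2$ coordinates, as required.

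I expect the concentration lemma to be the main obstacle. The hypercontractivity-free bound $\sum_{S\not\subseteq J}\hat f(S)^2\le\sum_{i\notin J}\mathrm{Inf}_i(f)$ is useless, since its right side can be $\Theta(d)$ even after $J$ absorbs every non-negligibly-influential coordinate; one genuinely needs $\|f\|_\infty\le1$ to argue that a bounded degree-$d$ function cannot spread its Fourier weight over more than $2^{O(d)}/\mathrm{Var}$ ``coordinates' worth'' of mass, and pinning down the exact exponent (and the right $\ell_p$, and making the argument survive the averaging over $z$ needed to pass from individual restrictions $f_{J_t\to z}$ to the weights $w_t$) is the delicate part. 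The remaining steps -- the reduction to the residual-mass statement, the bookkeeping with $w_t$, and the geometric accounting -- are routine once the lemma is in hand.
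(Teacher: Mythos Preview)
The paper does not prove this theorem; \Cref{thm:dfko} is quoted as a black box from \cite{dinur2006fourier} and then invoked in the proof of \Cref{lem:dfko-dense}. There is no ``paper's own proof'' to compare against --- you are attempting to reconstruct the DFKO argument itself.

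On that attempt: the overall architecture (reduce to a residual-mass statement, build the junta by absorbing blocks of coordinates) is reasonable, but the concentration lemma --- which you correctly flag as the crux --- has a genuine gap in the sketched proof. You claim that $(2,4)$-hypercontractivity $\|h\|_4^4\le 9^d\|h\|_2^4$, together with Parseval and H\"older, controls a spectral norm $\sum_S|\hat h(S)|^p$ for some $p<2$, from which the light-coefficient bound would follow. But hypercontractivity bounds $L^p$ norms of the \emph{function} $h$, not $\ell^p$ norms of its Fourier \emph{spectrum}; for $p\ne 2$ these are not related by Parseval, and I do not see any route from $\|h\|_4\le 3^{d/2}\|h\|_2$ to a bound on $\sum_S|\hat h(S)|^p$. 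Indeed, nothing in your argument yet uses $\|h\|_\infty\le 1$, which (as you note yourself) is essential. The actual DFKO proof does not proceed via a heavy/light dichotomy on individual Fourier coefficients at all: one shows, using boundedness together with hypercontractivity applied to a suitable auxiliary function, that a bounded degree-$d$ function whose \emph{influences} are all below $\tau$ must have small variance, and then takes $J$ to be the set of coordinates with influence exceeding $\tau\approx \eps^2/2^{O(d)}$ in a single step (no iteration). Your iterative framework could in principle be made to work, but not with the concentration lemma argued as you have it.
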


\paragraph{Densities and entropy.} A function $\mu:X \to \R_{\geq 0}$ is a \emph{density} if $|X|^{-1} \sum_{x \in X} \mu(x) = 1$. We say that a density $\mu$ has \emph{min-entropy deficit} $t$ if $\mu(x) \leq 2^t$ for all $x \in X$. If $X = \{\pm 1\}^n$ and $S \subseteq [n]$, then we say that the marginal of $\mu$ on the subset $S$ is the density $\mu_S$ where for all $x_S \in \{ \pm 1\}^S$
\[
    \mu_S(x_S) = \E_{x_{-S}} \mu(x_S,x_{-S})
\]
where the expectation is uniform over bits outside of $S$. We say that a density is \emph{$\delta$-dense} if for all subsets $S \subseteq [n]$, the marginal $\mu_S$ has min-entropy deficit at most $\delta |S|$. \\
\medskip

\Cref{lem:weaker} is a consequence of the following lemma about low-degree polynomials. 
\begin{lemma}
\label{lem:dfko-dense}
Let $f: \{ \pm 1\}^n \to [-1,1]$ be a degree-$d$ polynomial, let $0 < \eps \leq 1$, and let $\mu: \{\pm 1\}^n \to \R_{\geq 0}$ be a $\delta$-dense density where $\delta \leq \eps^2 2^{-O(d)}$. Then 
    \[
        |\ip{f, \mu - 1} | \leq \sqrt{\delta} 2^{O(d)}/\eps + O \Big( (\delta n)^{d/2} \, \eps \Big)~.
    \]
\end{lemma}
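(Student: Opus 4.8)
The plan is to follow the Aaronson--Ambainis strategy of deducing a ``weak'' statement from the junta theorem of Dinur et al.\ (\Cref{thm:dfko}): approximate $f$ in $L^2$ by a low-degree junta, and then control the junta part and the error part of $\langle f,\mu-1\rangle$ separately. By \Cref{thm:dfko} there is a degree-$d$ function $g$ depending only on a set $J\subseteq[n]$ of at most $k=2^{O(d)}/\eps^{2}$ coordinates with $\|f-g\|_{2}\le\eps$; put $h=f-g$, so $\deg h\le d$ and $\|h\|_{2}\le\eps$, and split $\langle f,\mu-1\rangle=\langle g,\mu-1\rangle+\langle h,\mu-1\rangle$.

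\emph{The junta term.} Since $g$ depends only on the coordinates in $J$, $\langle g,\mu-1\rangle=\langle g,\mu_{J}-1\rangle$, the inner product now over $\{\pm1\}^{J}$ against the marginal density $\mu_{J}$. By Cauchy--Schwarz this is at most $\|g\|_{2}\,\|\mu_{J}-1\|_{2}$. Here $\|g\|_{2}\le\|f\|_{2}+\|h\|_{2}\le2$, and $\|\mu_{J}-1\|_{2}^{2}=\E[\mu_{J}^{2}]-1\le\|\mu_{J}\|_{\infty}-1\le2^{\delta k}-1$ because $\mu$ is $\delta$-dense (applied to $S=J$). Since the hypothesis $\delta\le\eps^{2}2^{-O(d)}$ is exactly what makes $\delta k\le1$, we get $2^{\delta k}-1=O(\delta k)$, so the junta term contributes at most $O(\sqrt{\delta k})=\sqrt{\delta}\,2^{O(d)}/\eps$ --- the first term of the claimed bound.

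\emph{The error term.} Subtracting the constant $\hat h(\emptyset)$ from $h$ does not change $\langle h,\mu-1\rangle$, since $\langle \mathbf 1,\mu-1\rangle=\E\mu-1=0$; so we may assume $\E h=0$, and then $\langle h,\mu-1\rangle=\langle h,\mu\rangle$. Apply H\"older's inequality with exponents $q\ge2$ and $q'=q/(q-1)$: $|\langle h,\mu\rangle|\le\|h\|_{q}\,\|\mu\|_{q'}$. By $(2,q)$-hypercontractivity (\Cref{lem:hypercontractivity}), $\|h\|_{q}\le(q-1)^{d/2}\|h\|_{2}\le q^{d/2}\eps$. For $\|\mu\|_{q'}$ I again use denseness, now with $S=[n]$: $\|\mu\|_{\infty}\le 2^{\delta n}$, so $\E[\mu^{q'}]\le\|\mu\|_{\infty}^{q'-1}\E[\mu]=2^{\delta n(q'-1)}$, i.e.\ $\|\mu\|_{q'}\le 2^{\delta n/q}$. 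Thus $|\langle h,\mu\rangle|\le\eps\,q^{d/2}\,2^{\delta n/q}$, and choosing $q=\max\!\big(2,\ 2\delta n\ln 2/d\big)$ to balance the two factors gives $|\langle h,\mu\rangle|\le O\big((\delta n)^{d/2}\big)\eps$ --- the $d^{-d/2}$ produced by optimizing $q$ conveniently cancels the $2^{O(d)}$ that would otherwise appear. Adding the two estimates yields \Cref{lem:dfko-dense}.

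The only non-routine step is the error term, and the point is not to discard the global structure of $\mu$. Bounding $\langle h,\mu\rangle$ by Cauchy--Schwarz against $\mu-1$ fails, because a $\delta$-dense density can carry $\Omega(1)$ Fourier mass at high levels (e.g.\ $\mu=1+\chi_T$ with $|T|\ge 1/\delta$ is $\delta$-dense yet has $\|\mu-1\|_2=1$ entirely at level $|T|$); and bounding the low-degree Fourier coefficients of $\mu$ individually and union-bounding loses a factor $\delta^{\,d-1}$, since it ignores that $\sum_{|S|=s}\hat\mu(S)^{2}$ cannot be anywhere near $\binom{n}{s}\max_{S}\hat\mu(S)^{2}$. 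Pairing H\"older with hypercontractivity and controlling $\|\mu\|_{q'}$ via $\|\mu\|_{\infty}\le 2^{\delta n}$ is precisely what converts the denseness parameter $\delta$ into the $(\delta n)^{d/2}$ scaling; equivalently, the same hypercontractive computation applied directly to $\mu$ proves the ``level-$d$ inequality'' $\sum_{1\le|S|\le d}\hat\mu(S)^{2}\le(O(\delta n)/d)^{d}$, which is the form one would use to route the error term through Cauchy--Schwarz instead.
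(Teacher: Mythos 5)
Your proof is correct and follows essentially the same route as the paper's: a DFKO junta approximation, Cauchy--Schwarz against the marginal density (using deficit $\delta k$) for the junta part, and H\"older plus $(2,q)$-hypercontractivity with the global deficit $\delta n$ for the remainder. The only differences are cosmetic -- you center $h$ to drop the $\ip{h,1}$ term and optimize the H\"older exponent $q$ explicitly, whereas the paper fixes $p=\delta n+1$ and pays an extra $\eps$; both arguments share the same implicit assumption $\delta n \gtrsim 1$ in the final absorption into $O((\delta n)^{d/2}\eps)$.
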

\begin{proof}
By \Cref{thm:dfko}, $f$ has a $(2^{O(d)}/\eps^2,d,\eps)$-approximation $\tilde f$. For convenience let $k = 2^{O(d)}/\eps^2$. Then
    \[
        | \ip{f, \mu - 1} | \leq \abs{\ip{ \tilde{f}, \mu - 1} } + \abs{\ip{ f - \tilde{f}, \mu - 1} }~.
    \]
    We bound each term separately. Let $S \subseteq [n]$ denote the $k$ variables that $\tilde{f}$ is defined on. First, since $\mu$ is $\delta$-dense, the marginal $\mu_S$ has min-entropy deficit at most $\delta k$. This implies that $\mu_S$ is $O(\delta k)$-close in total variation distance to the uniform density (i.e. the constant $1$ function). To see this, observe that the total variation distance can be expressed as
    \[
        2^{-k} \sum_{y \in \{ \pm 1\}^S} | \mu_S(y) - 1| = \frac{2}{2^k} \sum_{y : \mu_S(y) > 1} |\mu_S(y) - 1| \leq \frac{2}{2^k} \sum_{y \in \{\pm 1\}^S} |2^{\delta k} - 1| = O(\delta k)~.
    \]
    This follows since $\delta k \leq O(1)$ (i.e., it does not grow with $n$), and then $2^{\delta k} - 1 = O(\delta k)$ by Taylor series approximation.
    Therefore 
    \[
        \abs{\ip{ \tilde{f}, \mu - 1}} = \abs{\ip{\tilde{f}, \mu_S - 1}} \leq \| \tilde{f} \|_2 \, \| \mu_S - 1 \|_2
    \]
    by Cauchy-Schwarz. Since $\tilde{f}$ is $\eps$-close to $f$, which has bounded $2$-norm, this is at most
    \begin{align*}
        (1 + \eps) \sqrt{ \E_y (\mu_S(y) - 1)^2} %
        &\leq (1 + \eps) \sqrt{ \E_y | \mu_S(y) - 1| \cdot (2^{\delta k} + 1) } \leq O(\sqrt{\delta k})
    \end{align*}
    where we used the bound on the total variation distance between $\mu_S$ and uniform, and that $2^{\delta k} = O(1)$. 
    For the second term, we use H\"{o}lder's inequality and hypercontractivity (\Cref{lem:hypercontractivity}):
    \begin{align*}
        \abs{\ip{ f - \tilde{f}, \mu - 1} } &\leq \abs{\ip{ f - \tilde{f}, \mu} } + \abs{ \ip{f - \tilde{f},1}} \\
        &\leq \| f - \tilde{f} \|_p \, \| \mu \|_q + \| f - \tilde{f} \|_2 \\
    &\leq \Big ( (p-1)^{d/2}\| \mu \|_q +1 \Big )\, \| f - \tilde{f} \|_2 ~.
    \end{align*}
    Set $p = \delta n + 1$ and $q = \frac{p}{p-1}$. Then
    \[
        \| \mu \|_q^q = \E_x |\mu(x)| \cdot |\mu(x)|^{1/(p-1)} \leq \E_x |\mu(x)| \cdot 2^{\delta n/(p-1)} \leq 2
    \]
    where we used that the min-entropy deficit of $\mu$ is at most $\delta n$.
    Therefore 
    \[
        \abs{\ip{ f - \tilde{f}, \mu - 1} } \leq \Big ( (\delta n)^{d/2} \, 2^{\frac{p-1}{p}}~ + 1 \Big ) \, \eps \,  \leq O \Big( (\delta n)^{d/2} \, \eps \Big)~.
    \]
    Putting everything together, we get that
    \[
        | \ip{f, \mu - 1} | \leq O \Big( \sqrt{\delta k} +  (\delta n)^{d/2} \eps \Big)~.
    \]
    Plugging our choice of $k$, we get the desired statement.
\end{proof}

To conclude the proof of \Cref{lem:weaker}, we observe that the acceptance probability of a $T$-query quantum algorithm $A$ that has query access to a string $X \in \{0,1\}^N$ can be expressed as $f(X)$ where $f: \{0,1\}^N \to [-1,1]$ is a degree-$2T$ polynomial; this follows from the polynomial method of~\cite{beals2001quantum}. By making the switch from $\{0,1\}$ to $\{\pm 1\}$ variables we can then apply \Cref{lem:dfko-dense} and observe that
\[
|\ip{f, \mu - 1} | = \Big | \E_{X \sim \rv{X}} \Pr[A^X = 1] - \E_{Y \sim \rv{U}} \Pr[A^Y = 1] \Big |
\]
to obtain the desired conclusion.

\subsection{Towards proving special cases of \Cref{conj:pseudorandomness}}

\paragraph{Proving the conjecture for one permutation.} A natural question is whether we can deduce the quantum pseudorandomness conjecture (\Cref{conj:pseudorandomness}) from a weaker statement that only refers to a \emph{single} permutation $P$, rather than multiple permutations:

\begin{conjecture}[Quantum pseudorandomness conjecture for a single permutation]
\label{conj:pseudorandomness-single}
Let $A$ be a quantum query algorithm that makes $T$ queries to a permutation $P:[N] \to [N]$. Let $\rv{P}$ be a random permutation of $[N]$ such that both $\rv{P}$ and the inverse permutation $\rv{P}^{-1}$ are $\delta$-dense. Then
\[
    \Big | \Pr_{P \sim \rv{Z}}[A^P = 1] - \Pr_{P \sim \rv{P}}[A^P = 1] \Big| \leq \poly(T,\log N) \cdot \poly(\delta)
\]
where $\rv{Z}$ denotes a uniformly random permutation on $[N]$.
\end{conjecture}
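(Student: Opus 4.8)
The plan is to adapt the Fourier-analytic proof of \Cref{lem:weaker} from the Boolean cube to the symmetric group, aiming for at least a weak, exponential-in-$T$ version of \Cref{conj:pseudorandomness-single}. First I would encode a permutation $P:[N]\to[N]$ by its $N^2$ indicator variables $P_{ij}=\mathds{1}[P(i)=j]$, so that permutations form a subset of $\{0,1\}^{N^2}$. A forward query $\ket{i,y}\mapsto\ket{i,y\oplus P(i)}$ and an inverse query $\ket{i,y}\mapsto\ket{i,y\oplus P^{-1}(i)}$ are each degree-$1$ in these variables (the relevant block of the query unitary has entries $\bra{i,y\oplus w}U_P\ket{i,y}=P_{iw}$ and $\bra{i,y\oplus w}U_{P^{-1}}\ket{i,y}=P_{wi}$), so by the polynomial method of~\cite{beals2001quantum} the acceptance probability of a $T$-query algorithm with access to both directions is a degree-$d$ real polynomial $f$ in the $P_{ij}$ with $d=O(T)$. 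Writing $\mu$ for the density of $\rv{P}$ relative to a uniformly random permutation $\rv{Z}$, the quantity to bound is exactly $|\ip{f,\mu-1}|$ with respect to the uniform inner product on $S_N$.

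Second, I would invoke the symmetric-group analogue of the Dinur--Friedgut--Kindler--O'Donnell theorem (\Cref{thm:dfko}): a bounded degree-$d$ function on $S_N$ should be $\eps$-close in $L^2$ to a genuine $k$-junta $\tilde f$, i.e.\ a function of the values of $P$ on some $k=2^{O(d)}\,\poly(1/\eps)$ fixed points, and then mimic the proof of \Cref{lem:dfko-dense}. The junta term $|\ip{\tilde f,\mu-1}|$ is small because the marginal of $\rv{P}$ on those $k$ points has min-entropy at least $(1-\delta)\log\!\bigl(N(N-1)\cdots(N-k+1)\bigr)$, hence min-entropy deficit $O(\delta k\log N)$ and total variation distance $O(\delta k\log N)$ from the corresponding marginal of $\rv{Z}$ (as long as $\delta k\log N=O(1)$); the $\delta$-density of $\rv{P}^{-1}$ handles the contribution coming from inverse-query points. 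The tail term $|\ip{f-\tilde f,\mu-1}|$ is bounded by H\"older's inequality together with a hypercontractive inequality for $S_N$ applied to the low-degree, ``global'' error term $f-\tilde f$, using that $\|\mu\|_q$ stays bounded for $q=\Theta(\delta N\log N)+1$ by the min-entropy of $\rv{P}$. Collecting the pieces should give a bound of the same shape as \Cref{lem:weaker}, up to extra $\poly(N)$ factors reflecting the weaker harmonic analysis available on $S_N$.

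I expect Step two to be the main obstacle. Unlike the hypercube, $S_N$ is not a product space and admits no dimension-free $(2,q)$-hypercontractivity (already $f=P_{11}$ has $\|f\|_q/\|f\|_2=N^{1/2-1/q}$), and a DFKO-style theorem producing an \emph{honest} few-coordinate junta for $S_N$ -- rather than merely a low-degree function -- is substantially more delicate, relying on the recent ``global hypercontractivity'' machinery for the symmetric group; extracting clean quantitative parameters there, good enough even in the exponential-in-$T$ regime, is the crux. Two further remarks. The polynomial-in-$T$ version of \Cref{conj:pseudorandomness-single} appears at least as hard as the Aaronson--Ambainis conjecture, exactly as for \Cref{conj:simplified}. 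And this route does not seem to upgrade \Cref{conj:pseudorandomness-single} to the multi-permutation \Cref{conj:pseudorandomness}: the obvious permutation-by-permutation hybrid, replacing $\rv{P}_r$ by a uniform permutation conditioned on the remaining ones, fails because conditioning on $R-1$ of the permutations inflates the min-entropy deficit of the last one by an additive $\Theta\bigl(\delta\sum_{r'}\log N_{r'}!\bigr)$; measured against the $\Theta(\log N)$ maximal entropy of a single coordinate, this raises the effective density parameter to order $R\,\delta N$, which already exceeds $1$ in the parameter regimes of interest (e.g.\ $\delta=\poly(\log N)/N$), so finding a hybrid -- or a direct argument -- that preserves denseness is precisely the open problem flagged in the introduction.
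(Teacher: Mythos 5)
The statement you are addressing is \Cref{conj:pseudorandomness-single}, which the paper states as a \emph{conjecture} and does not prove; the authors explicitly leave even a weak permutation analogue of \Cref{lem:weaker} (i.e.\ an exponential-in-$T$ version via symmetric-group harmonic analysis) as an open question, pointing to the recent hypercontractivity results of Filmus--Kindler--Lifshitz--Minzer and Keevash--Lifshitz only as possible future tools. So there is no paper proof to compare against, and your proposal should be judged on whether it constitutes a proof on its own. It does not: the entire argument hinges on Step two, namely a DFKO-style junta-approximation theorem for bounded low-degree functions on $S_N$ together with a usable $(2,q)$-hypercontractive inequality, and as you yourself note, neither is available in the required form --- $S_N$ is non-product, dimension-free hypercontractivity fails already for $f = P_{11}$, and the known ``global hypercontractivity'' statements control globalness/level structure rather than producing an honest few-coordinate junta with quantitative parameters you could plug into the \Cref{lem:dfko-dense} template. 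Until that ingredient exists, the proposal is a research plan with a named missing lemma, not a proof, and the polynomial-in-$T$ form of the conjecture would in any case face the same Aaronson--Ambainis-type barrier as \Cref{conj:simplified}.

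Two of your side observations are sound and agree with the paper's discussion. The reduction of the acceptance probability to a degree-$O(T)$ polynomial in the indicator variables $P_{ij}$, with forward and inverse queries each degree $1$ in that encoding, is the standard polynomial-method step and is fine; one would only need to be slightly careful that $\delta$-density is defined coordinate-wise on the string representation of $P$ (and of $P^{-1}$), so the marginal on an arbitrary set of matrix entries must be controlled by coarse-graining to the corresponding input (resp.\ output) coordinates, which costs at most the $\log N$ factors you already carry. And your diagnosis of why the single-permutation conjecture does not obviously boost to \Cref{conj:pseudorandomness} via a permutation-by-permutation hybrid matches the paper's own counterexample (a pair $(\rv{P}_1,\rv{P}_2)$ where conditioning on $P_2$ fixes a coordinate of $\rv{P}_1$, destroying denseness), so that part of your assessment is correct but, as you say, leaves the boosting question open rather than resolving it.
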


A first strategy would be to try to do some sort of hybrid argument, e.g., replace a dense random variable $\rv{P} = (\rv{P}_1,\ldots,\rv{P}_R)$ with uniformly random permutations one component at a time, invoking \Cref{conj:pseudorandomness-single} each time. One runs into an issue when trying to carry out this approach: to switch even the first permutation $\rv{P}_1$ to the uniform distribution $\rv{Z}_1$ over $S_{N_1}$ would require arguing that conditioned on a sample $(P_2,\ldots,P_R)$ from the marginal distribution $(\rv{P}_2,\ldots,\rv{P}_R)$, the distribution of the first component $\rv{P}_1$ is still $\delta'$-dense for some controlled $\delta'$. 

Unfortunately, this is not true in general. Consider the following random variable $\rv{P} = (\rv{P}_1,\rv{P}_2)$ over $S_N \times S_N$: the random variable $\rv{P}_2$ is uniform over $S_N$, and $\rv{P}_1$ is uniform conditioned on $\rv{P}_1(m) = m$ where $m$ is the length of the longest cycle of $\rv{P}_2$. The min-entropy of $\rv{P}$ is $2\log N! - \log N$, and thus is $\delta$-dense for $\delta = \frac{\log N}{2 \log N!}$. On the other hand, fixing a suffix $P_2$ will fix a coordinate of $\rv{P}_1$, which is no longer dense. %

We leave it as an interesting question to explore whether \Cref{conj:pseudorandomness-single} can be ``boosted'' to imply \Cref{conj:pseudorandomness}.

\paragraph{Proving the conjecture for few queries.}
Another interesting question is whether it is possible to prove a permutation analogue of \Cref{lem:weaker}. The proof of this relied heavily on tools from boolean Fourier analysis, which works for well for studying distributions over product spaces. However, the distribution of random permutations is highly non-product: knowing that a random permutation maps $\pi(i) = j$ means that $j$ cannot appear in any other coordinate of $\pi$ (when viewed as a string in $[N]^N$). 

While the theory of Fourier analysis of boolean functions has been richly developed (see, e.g.,~\cite{o2014analysis}), the theory of harmonic analysis of functions on non-product spaces such as the symmetric group is nascent. The recent paper of Filmus, Kindler, Lifshitz, and Minzer~\cite{filmus2024hypercontractivity} (and of Keevash and Lifshitz~\cite{keevash2023sharp}) proved hypercontractive inequalities for functions on the symmetric group. Exploring the utility of these results in harmonic analysis to obtain additional evidence for \Cref{conj:pseudorandomness} is something we leave for future work.

\subsection{A helper lemma}

In our $\QCMA$ lower bound we deal with distributions over permutations where the permutations are fixed in a few coordinates, and are dense everywhere outside of those fixings. Such distributions do not exactly fulfill the requirements of \Cref{conj:pseudorandomness}; technically the conjecture only handles random permutations that are dense \emph{everywhere} and do not have any fixed coordinates. We state a variant of \Cref{conj:pseudorandomness} that handles the broader class of permutation distributions, and argue that it is implied from \Cref{conj:pseudorandomness}. 

Consider a random variable $\rv{B} = (\rv{B}_1,\ldots,\rv{B}_R)$ where each $\rv{B}_r$ is a permutation on $[N_r]$. We can equivalently view it as a random permutation $\rv{B}$ over the alphabet $\Sigma = \bigcup_r \{r\} \times [N_r]$ such that for each $r$, $\rv{B}(r,\cdot)$ permutes $[N_r]$. \Cref{def:dense} specifies the notion of $(k,\delta)$-denseness for such random permutations, with respect to the random permutation $\rv{Z}_B$ that is uniform over $S_{N_1} \times \cdots \times S_{N_R}$ conditioned on the same $k$ fixed coordinates as in $\rv{B}$. 

\begin{lemma}
\label{lem:conjhelper}
If \Cref{conj:pseudorandomness} is true, then the following is true. Let $A$ be a quantum query algorithm that makes $T$ queries to a collection of permutations $(B_1,\ldots,B_R)$ where each $B_r$ permutes $[N_r]$ for some integer $N_r$. Let $\rv{B}$ denote a random tuple of permutations from $S_{N_1} \times \cdots \times S_{N_R}$, such that both $\rv{B}$ and $\rv{B}^{-1}$ are both $(k,\delta)$-dense. Then
\[
    \Big | \Pr_{Z \sim \rv{Z}_B}[A^Z = 1] - \Pr_{B \sim \rv{B}}[A^B = 1] \Big| \leq \poly(T, \log N_1,\ldots,\log N_R) \cdot \poly(\delta)
\]
where $\rv{Z}_B$ denotes a tuple of uniformly random permutations sampled from $S_{N_1} \times \cdots \times S_{N_R}$ \emph{conditioned} on the same $k$ fixed coordinates as $\rv{B}$.

\end{lemma}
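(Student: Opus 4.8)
The plan is to reduce to \Cref{conj:pseudorandomness} by "stripping off" the $k$ fixed coordinates. Since $\rv{B}$ is $(k,\delta)$-dense, there is a subset $I \subseteq \Sigma = \bigcup_r \{r\} \times [N_r]$ with $|I| \le k$ and a fixing $b_I$ such that $\rv{B}(I) = b_I$ always, and on all $J \subseteq \Sigma \setminus I$ we have $\Hmin(\rv{B}(J)) \ge (1-\delta) h(J \mid I)$; symmetrically $\rv{B}^{-1}$ is fixed on $\mathrm{image}(b_I)$. Write $I_r = I \cap (\{r\}\times[N_r])$ and likewise for the image, so that for each $r$ the permutation $\rv{B}_r$ sends a fixed set of $|I_r|$ inputs to a fixed set of outputs. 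The key observation is that the "unfixed part" of $\rv{B}_r$ is naturally a permutation on a smaller set: let $A_r = [N_r] \setminus (\text{fixed inputs of } \rv{B}_r)$ and $A_r' = [N_r] \setminus (\text{fixed outputs of } \rv{B}_r)$, each of size $N_r' := N_r - |I_r|$. After fixing a bijection between $A_r$ and $[N_r']$ and between $A_r'$ and $[N_r']$ (these relabelings are public and known to everyone), $\rv{B}_r$ restricted to its unfixed coordinates becomes a random permutation $\widehat{\rv{B}}_r$ on $[N_r']$. I claim that the tuple $\widehat{\rv{B}} = (\widehat{\rv{B}}_1,\ldots,\widehat{\rv{B}}_R)$, and its inverse, are $\delta'$-dense (in the sense of \Cref{conj:pseudorandomness}) for $\delta'$ only a constant factor larger than $\delta$ — this is because the entropy function $h$ for a uniform permutation depends only on the cardinalities of its arguments, so $h(J \mid I)$ for the original problem equals $\widehat h(J)$ for the smaller problem up to the at-most-$k$ offset in each block, and since the $N_r$'s grow to infinity while $k = \poly(\log N_r)$, the multiplicative loss $\frac{h(J\mid I)}{\widehat h(J)}$ is $1 + o(1)$; concretely one checks $\Hmin(\widehat{\rv{B}}(J)) = \Hmin(\rv{B}(J)) \ge (1-\delta) h(J\mid I) \ge (1 - O(\delta)) \widehat h(J)$.

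The next step is the simulation argument. Given the algorithm $A$ making $T$ queries to a tuple $(B_1,\ldots,B_R)$ drawn either from $\rv{B}$ or from $\rv{Z}_B$, I build an algorithm $\widehat A$ making $T$ queries to a tuple drawn from either $\widehat{\rv{B}}$ or the uniform $\rv{Z}$ over $S_{N_1'}\times\cdots\times S_{N_R'}$. Since the fixed coordinates $b_I$ (and $b_I^{-1}$) are fixed data, $\widehat A$ hardcodes them: on a query $(r,x)$ of $A$, if $x$ is a fixed input of block $r$, answer directly from $b_I$; otherwise translate $x$ through the public relabeling to an element of $[N_r']$, query the $\widehat{\,\cdot\,}$-oracle, and translate the answer back; handle inverse queries symmetrically using $b_I^{-1}$ and the output relabelings. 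This simulation is exact: when the underlying oracle is $\widehat{\rv{B}}$ the view of $A$ is identical to querying $\rv{B}$, and when it is uniform $\rv{Z}$ the view of $A$ is identical to querying $\rv{Z}_B$ (the uniform permutations conditioned on the same $k$ fixings), because completing a uniformly random bijection on $[N_r']$ with a fixed partial matching yields exactly the uniform distribution on permutations of $[N_r]$ with that partial matching. Hence
\[
    \Big| \Pr_{Z \sim \rv{Z}_B}[A^Z = 1] - \Pr_{B \sim \rv{B}}[A^B = 1] \Big| = \Big| \Pr_{\widehat Z \sim \rv{Z}}[\widehat A^{\widehat Z} = 1] - \Pr_{\widehat B \sim \widehat{\rv{B}}}[\widehat A^{\widehat B} = 1] \Big|.
\]
Applying \Cref{conj:pseudorandomness} to $\widehat A$ and the $\delta'$-dense tuple $\widehat{\rv{B}}$ bounds the right-hand side by $\poly(T,\log N_1',\ldots,\log N_R')\cdot\poly(\delta') \le \poly(T,\log N_1,\ldots,\log N_R)\cdot\poly(\delta)$, using $N_r' \le N_r$ and $\delta' = O(\delta)$, which is the claim.

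The main obstacle is the denseness-preservation claim in the first paragraph: one must verify carefully that restricting a $(k,\delta)$-dense permutation to its non-fixed coordinates and relabeling yields a genuinely $\delta'$-dense permutation with $\delta' = O(\delta)$, controlling the ratio $h(J \mid I)/\widehat h(J)$ uniformly over all subsets $J$. For a single block this amounts to comparing $\log\big((N_r - |I_r|)(N_r - |I_r| - 1)\cdots\big)$ computed "with offset $|I_r|$" against the same product with offset $0$ on a shrunken ground set; since every factor is at least $N_r' - |J_r| + 1 \ge 2$ and $|I_r| \le k = o(N_r)$, each factor changes by a $1 + o(1)$ multiplicative amount, so the sum does too, and the $(1-\delta)$ slack absorbs it. A minor subtlety is that the fixed coordinates of $\rv{B}$ need not split evenly across blocks or line up between $\rv{B}$ and $\rv{B}^{-1}$, but since \Cref{def:dense} already allows an arbitrary fixed set $I$ and treats forward and inverse denseness separately, this causes no real difficulty — the relabelings for inputs and for outputs are simply chosen independently per block.
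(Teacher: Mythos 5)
Your proposal is correct and takes essentially the same route as the paper's proof: the paper likewise strips off the at most $k$ fixed coordinates, relabels the unfixed inputs/outputs of each block via fixed bijections to obtain a tuple of permutations on smaller sets, notes that forward and inverse denseness are preserved because min-entropy is invariant under the relabeling and the entropy $h$ depends only on cardinalities, simulates $A$ by hardcoding the fixed values and translating the remaining (forward and inverse) queries, and then invokes \Cref{conj:pseudorandomness}. The only remark is that your worry about a multiplicative loss in $\delta$ is moot: since the entropy of a uniform permutation depends only on the cardinalities of its arguments, $h(J \mid I)$ coincides exactly with the entropy of the shrunken problem, so the reduced tuple is $\delta$-dense with no loss, exactly as in the paper.
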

\begin{proof}
Let $\rv{Z}_B,\rv{B}$ be random variables as specified in the lemma statement. Since $\rv{B}$ is $(k,\delta)$-dense, for each $r \in [R]$ there exists sets $I_r,J_r \subseteq [N_r]$ such that $\rv{B}_r$ is \emph{not} fixed on $I_r$ and $\rv{B}_r(I_r) = J_r$. In particular, $|I_r| = |J_r|$ for all $r$ and $\sum_r |I_r| \geq \sum_r N_r - k$. For all $r \in [R]$, let $\pi_r: J_r \to I_r$ be some fixed bijection. 

For all tuples $(B_1,\ldots,B_R)$, define the permutations $P_r: I_r \to I_r$ as follows:
\[
P_r(x)= \pi_r(B_r(x))~.
\]
Let $\rv{P}_r$ denote the random variable induced by $\rv{B}_r$, and let $\rv{P}$ denote the product random variable $\rv{P}_1 \times \dots \times \rv{P}_R$. Let $\rv{Z}$ denote a tuple of uniformly random permutations $(P_1,\ldots,P_R)$ sampled from $S_{I_1}\times \dots \times S_{I_R}$.  
For all subsets $T \subseteq \bigcup_r \{r\} \times I_r$, we have 
\[
\Hmin(\rv{P}(T))= \Hmin(\pi(\rv{B}(T))) = \Hmin(\rv{B}(T)) \geq (1 - \delta)h(T)
\]
where $\pi(a_1,\ldots,a_R) = (\pi_1(a_1),\ldots,\pi_R(a_R))$ and $h$ is the entropy associated with $\rv{Z}$. The second equality is because the min-entropy is invariant under renaming under the bijections $\pi_r$, and the last inequality is because $\rv{B}$ is $\delta$-dense outside $\bigcup_r \{r \} \times I_r$. Thus if $\rv{B}$ is $(k,\delta)$-dense  with respect to $\rv{Z}_B$, then $\rv{P}$ is $\delta$-dense with respect to $\rv{Z}$. Observe that $P_r^{-1}(y) = B_r^{-1}(\pi_r^{-1}(y))$. Then, for all subsets $S\subseteq \bigcup_r \{r \} \times I_r$,
\[
\Hmin(\rv{P}^{-1}(S))= \Hmin(\rv{B}^{-1}(\pi^{-1}(S)))\geq (1 - \delta)h(\pi^{-1}(S))=(1 - \delta)h(S).
\]
The last equality follows because $h$ only depends on the cardinality of its input.

Define an algorithm $A'$ that makes $T$ queries to the collection of permutations $(P_1, \dots P_R)$ and simulates $A$ as follows:
\begin{itemize}
    \item Whenever $A$ queries $B_r(x)$, check if $x\in [N_r]\setminus I_r$. If yes, return $B_r(x)$. Otherwise, query $y \leftarrow P_r(x)$ and return $\pi_r^{-1}(y)$ to $A$.
    \item  Whenever $A$ queries $B^{-1}_r(y)$, check if $y\in [N_r]\setminus J_r$. If yes, return the value $x$ such that $B_r(x)=y$. Otherwise, query $q \leftarrow \pi_r(y)$ and return $P_r^{-1}(q)$ to $A$.
    \item  Output $A$'s output.
\end{itemize}

  We can conclude that $\Pr_{P \sim \rv{Z}}[A'^{P}=1]=\Pr_{B \sim \rv{Z_B}}[A^{B}=1]$, and  $\Pr_{P \sim \rv{P}}[A'^{P}=1]=\Pr_{B \sim \rv{B}}[A^{B}=1]$.  \Cref{conj:pseudorandomness} implies that
 \[
    \Big | \Pr_{B \sim \rv{Z}_B}[A^B = 1] - \Pr_{B \sim \rv{B}}[A^B = 1] \Big| \leq \poly(T, \log N_1,\ldots,\log N_R) \cdot \poly(\delta)
\]
as desired.
\end{proof}

\section{$\QMA$ upper bound}
\label{sec:qma_upper_bound}

We give a self-contained proof of the $\QMA$ upper bound for the $\Components$ problem. This is essentially the same $\QMA$ upper bound given by Natarajan and Nirkhe~\cite{natarajan2023distribution}, with some minor simplifications (for example, we obtain perfect completeness whereas their $\QMA$ verifier has imperfect completeness).

\begin{longfbox}[breakable=false, padding=1em, margin-top=1em, margin-bottom=1em]
\begin{algorithm}\label{qma_verifier_algo}{ $\QMA$ verifier for the $\Components$ problem}
\end{algorithm}
\noindent Given oracle access to a graph oracle $F$, and an $n$-qubit witness $\reg{\psi}$ in register $\reg{X}$, perform one of the following tests at random:
    \begin{enumerate}
        \item \textbf{Balancedness test.} Apply $H^{\otimes n}$ to $\ket{\psi}$ and check if the result is $\ket{0^n}$. If so, reject. Otherwise, accept.

        \item \textbf{Invariance test.} 
        \begin{enumerate}
            \item Sample a random $r \in [R]$.
            \item Prepare a control qubit $\reg{C}$ in the state $\ket{+}$, and a register $\reg{R}$ in the state $\ket{r}$.
            \item Controlled on $\reg{C}$ being in the state $\ket{1}$, $\reg{R}$ in the state $\ket{r}$, apply the permutation $F(r,\cdot)$ to the register $\reg{X}$.
            \item Apply $H$ to $\reg{C}$ and check if the result is $\ket{0}$. If so, accept. Otherwise, reject.
        \end{enumerate}
    \end{enumerate}
\end{longfbox}

The meaning of ``apply the permutation'' $F(r,\cdot)$ to register $\reg{X}$ is that the verifier applies the permutation \emph{in-place}, meaning that $\ket{x}$ gets mapped to $\ket{F(r,x)}$. This can be done by attaching an ancilla register $\ket{0}$, computing the ``forwards'' oracle to obtain $\ket{x} \ket{F_r(x)}$, uncomputing the first register using the ``inverse'' oracle to obtain $\ket{0} \ket{F_r(x)}$, and then swapping the registers.

\begin{lemma}
\label{lem:qma_upper_bound_correctness}
    If $F \in \Lyes$, then there exists a witness $\ket{\psi}$ such that the $\QMA$ verifier $V$ accepts with probability at $1$. %
\end{lemma}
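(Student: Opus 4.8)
The plan is to exhibit the witness $\ket{\psi}$ suggested in the introduction and verify that it passes both tests with certainty. Since $F \in \Lyes$, there is a subset $S \subseteq [N]$ with $|S| = N/2$ and no edges of $G_F$ crossing between $S$ and $\overline{S}$. Take
\[
    \ket{\psi} = \frac{1}{\sqrt{N}} \sum_{x \in S} \ket{x} - \frac{1}{\sqrt{N}} \sum_{x \notin S} \ket{x}~.
\]
I would first check the \textbf{balancedness test}: applying $H^{\otimes n}$ to $\ket{\psi}$ produces a state whose $\ket{0^n}$ amplitude equals $\frac{1}{\sqrt{N}}\langle \psi \mid u\rangle$ up to normalization, where $\ket{u} = \frac{1}{\sqrt N}\sum_x \ket{x}$ is the uniform superposition; since $\ket{\psi}$ has exactly $N/2$ coordinates equal to $+\frac{1}{\sqrt N}$ and $N/2$ equal to $-\frac{1}{\sqrt N}$, we get $\langle \psi \mid u \rangle = 0$, so the post-$H$ state is orthogonal to $\ket{0^n}$ and the test accepts with probability $1$.

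Next I would check the \textbf{invariance test}. The key observation is that for every $r \in [R]$, the permutation $F(r,\cdot)$ is a perfect matching that, because $F \in \Lyes$, only matches vertices within $S$ or within $\overline{S}$ — it never swaps a vertex of $S$ with one of $\overline{S}$. Hence $F(r,\cdot)$ maps $\{\ket{x} : x \in S\}$ to itself and $\{\ket{x} : x \notin S\}$ to itself, merely permuting coordinates within each block. Since $\ket{\psi}$ is constant ($+\frac{1}{\sqrt N}$ or $-\frac{1}{\sqrt N}$) on each block, it is invariant: $U_{F(r,\cdot)}\ket{\psi} = \ket{\psi}$ where $U_{F(r,\cdot)}$ denotes the in-place application. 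Therefore, after steps (a)–(c), the state on $\reg{C}\reg{X}$ is $\frac{1}{\sqrt 2}(\ket{0}\ket{\psi} + \ket{1} U_{F(r,\cdot)}\ket{\psi}) = \frac{1}{\sqrt 2}(\ket{0} + \ket{1})\ket{\psi} = \ket{+}\ket{\psi}$; applying $H$ to $\reg{C}$ yields $\ket{0}\ket{\psi}$, so the measurement returns $\ket{0}$ and the test accepts with probability $1$. Since this holds for every $r$, the verifier accepts with probability $1$ regardless of the random choice of $r$.

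Combining the two cases, $V$ accepts with probability $1$, which establishes the lemma. I do not anticipate a serious obstacle here; the only thing to be careful about is the bookkeeping for the in-place permutation (that the ancilla-assisted implementation described in the text indeed realizes $\ket{x}\mapsto\ket{F(r,x)}$ cleanly, so no garbage is left entangled with $\reg{C}$), and the verification that ``no edges between $S$ and $\overline{S}$'' is exactly the statement that each matching $F(r,\cdot)$ preserves the bipartition, which is immediate from the definition of $G_F$ and $\Lyes$.
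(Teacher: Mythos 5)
Your proposal is correct and follows essentially the same route as the paper: the same witness $\ket{\psi}$, the same overlap computation for the balancedness test, and the same key fact that each matching $F(r,\cdot)$ preserves the bipartition so that $\pi\ket{\psi}=\ket{\psi}$ (the paper phrases this by showing the rejection probability $\tfrac14\|(I-\pi)\ket{\psi}\|^2$ vanishes, which is the same observation). No gaps.
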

\begin{proof}
    Since $F \in \Lyes$, there exists a unique partition $(S,\overline{S})$ of $[N]$ such that for all $r \in R$, the oracle $F$ simply permutes the vertices within $S$ and $\overline{S}$ respectively. Define
    \[
        \ket{\psi} := \sqrt{\frac{1}{N}}\sum_{x \in S} \ket{x} - \sqrt{\frac{1}{N}} \sum_{x \notin S} \ket{x}~.
    \]
    The probability that witness state $\ket{\psi}$ is rejected in the Balancedness test is exactly equal to the squared overlap between $\ket{\psi}$ and the uniform superposition which is $0$ because $|S| = N/2$. Thus the verifier always accepts in this subtest.
    
    We now compute its success probability in the Invariance test. 
    After registers $\reg{C}$ and $\reg{R}$ are prepared, the state of the algorithm is as follows:
    \[
        \ket{+}_{\reg{C}} \otimes \ket{r}_{\reg{R}} \otimes \ket{\psi}_{\reg{X}}
    \]
    Let $\pi$ denote the permutation $F(r,\cdot)$. 
    After applying the oracle
    we have (omitting mention of the register $\reg{R}$)
    \[
        \frac{1}{\sqrt{2}} \ket{0}_{\reg{C}} \otimes \ket{\psi}_{\reg{A}} \otimes  + \frac{1}{\sqrt{2}}  \ket{1}_{\reg{C}} \otimes \pi  \ket{\psi}_{\reg{X}}~.
    \]
    Hadamarding the control qubit we get
    \[
        \frac{1}{2} \ket{0}_{\reg{C}} \otimes \Big( \ket{\psi} + \pi \ket{\psi}\Big) + \frac{1}{2} \ket{1}_{\reg{C}} \otimes \Big( \ket{\psi} - \pi \ket{\psi}\Big)~.
    \]
    The probability of rejecting (conditioned on $r$) is
    \[
        \frac{1}{4} \Big \| (I - \pi) \ket{\psi} \Big \|^2 = \frac{1}{4} \Big \| \sqrt{\frac{1}{N}}\sum_{x \in S} \Big( \ket{x} - \ket{\pi(x)} \Big) + \sqrt{\frac{1}{N}}\sum_{x \notin S} \Big( \ket{x} - \ket{\pi(x)} \Big) \Big \|^2~.
    \]
    By the triangle inequality, this is at most
    \[
        \frac{|S|}{2N}  \Big \| \frac{1}{\sqrt{|S|}} \sum_{x \in S} \Big( \ket{x} - \ket{\pi(x)} \Big) \Big \|^2 + \frac{N - |S|}{2N} \Big \| \frac{1}{\sqrt{N - |S|}} 
 \sum_{x \notin S} \Big( \ket{x} - \ket{\pi(x)} \Big) \Big \|^2~.
    \]
    Since $F$ is a $\Lyes$ instance, we have that $x \in S$ if and only if $\pi(x) \in S$, and thus
    \begin{gather*}
        \Big \| \frac{1}{\sqrt{|S|}} \sum_{x \in S} \Big( \ket{x} - \ket{\pi(x)} \Big) \Big \|^2 = \Big \| \frac{1}{\sqrt{N - |S|}} 
 \sum_{x \notin S} \Big( \ket{x} - \ket{\pi(x)} \Big) \Big \|^2 = 0~.
    \end{gather*}
    Since this holds for every $r$, the probability of rejection in the Invariance test is $0$. The overall acceptance probability is therefore $1$. 
\end{proof}

\begin{lemma}
    If $F \in \Lno$, then for all witnesses $\ket{\psi}$ the verifier $V$ accepts with probability at most $1 - \Delta$, where $\Delta$ is the spectral gap of the  normalized Laplacian of the graph associated with $F$.
\end{lemma}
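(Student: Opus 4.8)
The plan is to analyze the two sub-tests separately as functions of an arbitrary unit witness $\ket\psi$ and the single scalar $p := |\braket{u|\psi}|^2$, where $\ket u = \tfrac{1}{\sqrt N}\sum_x \ket x$ is the uniform superposition (so that $H^{\ot n}\ket{0^n} = \ket u$). First, the \textbf{Balancedness test}: since $H^{\ot n}$ is self-adjoint, $\braket{0^n | H^{\ot n} | \psi} = \braket{u|\psi}$, so this test rejects with probability exactly $p$ and accepts with probability $1-p$. Next, the \textbf{Invariance test}: writing $\pi_r$ for the permutation matrix on $\C^N$ that implements the matching $F(r,\cdot)$, the computation already carried out in the proof of \Cref{lem:qma_upper_bound_correctness} shows that, conditioned on the sampled color $r$, the test rejects with probability $\tfrac14\|(I-\pi_r)\ket\psi\|^2$. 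Since each $\pi_r$ is an involution ($\pi_r = \pi_r^{-1} = \pi_r^\dagger$), we have $\|(I-\pi_r)\ket\psi\|^2 = \bra\psi(2I - 2\pi_r)\ket\psi$; averaging over the uniform $r \in [R]$ and recalling that $A_F = \sum_r \pi_r$ is the adjacency matrix of $G_F$ with normalized Laplacian $\cal{L}_F = I - \tfrac1R A_F$, the Invariance test rejects with probability $\tfrac14 \cdot 2\,\bra\psi \cal{L}_F \ket\psi = \tfrac12 \bra\psi \cal{L}_F \ket\psi$.

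Next I would bound $\bra\psi \cal{L}_F \ket\psi$ from below using expansion. Applying the Proposition stated just before \Cref{lem:expander-mixing} to $G = G_F$ (which has spectral gap at least $\Delta$ since $F \in \Lno$): the ``mean vector'' $\ket{\overline\psi}$ appearing there is exactly $\braket{u|\psi}\ket u$, so $\|\ket\psi - \ket{\overline\psi}\|^2 = 1 - p$ and the Proposition gives $\bra\psi \cal{L}_F \ket\psi \geq \Delta(1-p)$. Hence the Invariance test rejects with probability at least $\tfrac\Delta2(1-p)$. Running the Balancedness test with probability $\alpha$ and the Invariance test with probability $1-\alpha$, the verifier accepts $\ket\psi$ with probability at most $1 - \alpha p - (1-\alpha)\tfrac\Delta2(1-p)$; this is affine in $p$, and minimizing over $p\in[0,1]$ (and choosing $\alpha$) shows the acceptance probability is at most $1 - \Omega(\Delta)$. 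Since $\Delta$ is only required to be a small enough universal constant --- a union of $R=3$ random matchings being an $\Omega(1)$-expander with high probability --- the constant can be absorbed to give the bound $1-\Delta$ exactly as stated; in any case a completeness--soundness gap of $\Omega(\Delta)$ is all the $\QMA$ containment needs.

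I do not expect a serious obstacle. The only points that need care are (i) recognizing the Invariance-test rejection probability as the normalized-Laplacian quadratic form, which uses the involution identity $\pi_r^2 = I$ together with the averaging over colors, and (ii) invoking the Proposition with the correct mean vector $\ket{\overline\psi} = \braket{u|\psi}\ket u$, so that it is precisely the spectral gap $\Delta$ --- rather than some eigenvalue of $\cal{L}_F$ as large as $2$ --- that controls the estimate. Everything else is elementary bookkeeping in the scalar $p$.
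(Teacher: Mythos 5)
Your proposal is correct and follows essentially the same route as the paper: identify the Invariance-test rejection probability with the Laplacian quadratic form $\tfrac12\bra{\psi}\cal{L}\ket{\psi}$, apply the spectral-gap Proposition with $\ket{\overline{\psi}}=\braket{u|\psi}\ket{u}$, and trade this off against the Balancedness test through the single overlap parameter $p$. The only difference is constant bookkeeping: your optimization gives $1-\Delta/4$ rather than the stated $1-\Delta$, but the paper's own proof is equally loose on this point (its final expression actually evaluates to roughly $1-\Delta/2$), and as you observe only an $\Omega(\Delta)$ completeness--soundness gap is needed for the $\QMA$ upper bound.
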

\begin{proof}
    Let $\ket{\psi} = \sum_x \alpha_x \ket{x}$. Suppose that the verifier $V$ accepts with probability at least $1 - \eps$ in the Invariance test. From the above, the rejection probability is exactly equal to
    \[
        \frac{1}{4} \E_r \Big \| \ket{\psi} - \pi_r \ket{\psi} \Big\|^2 \leq \eps~.
    \]
    Expanding, this implies
    \[
        \E_r \bra{\psi} ( I - \pi_r^\dagger)(I - \pi_r) \ket{\psi} = \E_r \bra{\psi}(2 I - \pi_r - \pi_r^\dagger) \ket{\psi} = \bra{\psi} (2I - \E_r (\pi_r + \pi_r^\dagger) \ket{\psi} \leq 4\eps.
    \]
    Notice that for all $r$, the $(x,y)$ element of the matrix $\pi_r + \pi_r^\dagger$ is $1$ if and only if either $\pi_r(x) = y$ or $\pi_r(y) = x$. Thus $\E_r (\pi_r + \pi_r^\dagger)$ is proportional to the normalized adjacency matrix $\frac{1}{2R} A$ of the graph $G$ corresponding to $F$. Letting $\cal{L} = I - \frac{1}{2R} A$ denote the normalized Laplacian of $G$, we have
    \[
        \bra{\psi} \cal{L} \ket{\psi} \leq 2\eps~.
    \]
    On the other hand, the spectral gap $\Delta$ of $\cal{L}$ satisfies
    \[
        \| \ket{\psi} - \ket{\overline{\psi}} \|^2 \leq \frac{1}{\Delta} \bra{\psi} \cal{L} \ket{\psi} \leq \frac{2\eps}{\Delta}
    \]
    where $\ket{\overline{\psi}}$ is the vector where each entry is $\overline{\alpha} = \frac{1}{N} \sum_x \alpha_x$.

Now we analyze the success probability of $\ket{\psi}$ in the Balancedness test. This is equal to
\[
    1 - \Big \| \bra{+}^{\ot n} \cdot \ket{\psi} \Big \|^2 = 1 - N \overline{\alpha}^2 = \frac{1}{2} \| \ket{\psi} - \ket{\overline{\psi}} \|^2 \leq \eps/\Delta~.
\]
Therefore the overall probability of success is at most
\[
    \frac{1}{2} (1 - \eps) + \frac{1}{2} \max \Big \{ \eps/\Delta , 1 \Big \} \leq \frac{1}{2} (1 - \eps) + \frac{1}{2} \max \Big \{ \eps/\Delta, 1 \Big \}~.
\]
The largest this expression can get is $1 - \Delta$, corresponding to when $\eps = \Delta$.

\end{proof}

\section{$\QCMA$ lower bound}
\label{sec:qcma-lower-bound}

We now prove the $\QCMA$ lower bound on the $\Components$ problem, conditional on \Cref{conj:pseudorandomness}. In order to connect the $\Components$ problem with the pseudorandomness conjecture we first explain how Yes instances of the $\Components$ can be viewed as a simple function of a small number of independently-chosen permutations. 

\subsection{From graph oracles to raw permutations}

Recall that a Yes instance $F \in \Lyes$ describes a graph that is partitioned into two components, and each component is a disjoint union of $R$ perfect matchings. We can sample a uniformly random Yes instance as follows.

Define $\rv{X}$ to be a uniformly random permutation on $N$ elements. For every $r \in [R]$, define $\rv{Y}_r,\rv{Z}_r$ to be independently chosen permutations on $N/2$ elements. For an index $x \in [N]$ we write $\rv{X}(x),\rv{Y}_r(x),\rv{Z}_r(x)$ to denote the element that $\rv{X},\rv{Y}_r,\rv{Z}_r$ maps $x$ to. Let $\rv{P}$ denote the product random variable $\rv{X}\rv{Y}_1 \rv{Z}_1 \cdots \rv{Y}_R \rv{Z}_R$. %

Now for every instantiation $P$ of the random variable $\rv{P}$, define the graph oracle $F_P$ as follows:
\begin{longfbox}[breakable=false, padding=1em, margin-top=1em, margin-bottom=1em]
\begin{algorithm}\label{algo-plant}{ Graph oracle $F_P$ corresponding to raw permutation $P$}
\end{algorithm}

\begin{enumerate}
    \item If $X^{-1}(x) \leq N/2$, set $u = X^{-1}(x)$ and $W = Y_r$. Otherwise set $u = X^{-1}(x) - N/2$ and $W = Z_r$. 
    \item Set $s = W^{-1}(u)$. 
    \item Set $v = \begin{cases} W(s+1) & \text{if $s$ odd} \\
        W(s-1) & \text{if $s$ even} \end{cases}$. 
    \item If $X^{-1}(x) \leq N/2$, then output $X(v)$. Otherwise output $X(v + N/2)$. 
\end{enumerate}
\end{longfbox}

Let $\rv{F_P}$ denote the random variable corresponding to $F_P$; note that $\rv{F_P}$ is a function of $\rv{P}$.

\begin{claim}
\label{clm:raw-permutations-distribution}
    For all $P = X Y_1 Z_1 \cdots Y_R Z_R$, the corresponding oracle $F_P$ is a graph oracle in $\Lyes$. Furthermore, the distribution of $\rv{F_P}$ is uniform over $\Lyes$.
\end{claim}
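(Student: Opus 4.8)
The plan is to establish the two assertions separately: first that each $F_P$ is a well-defined graph oracle lying in $\Lyes$, and then that $\rv{F_P}$ is uniformly distributed over $\Lyes$.

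\emph{Step 1: $F_P \in \Lyes$.} First I would check that \Cref{algo-plant} indeed defines a graph oracle, i.e. that $F_P(r,\cdot)$ is a perfect matching on $[N]$ for each $r$, or equivalently that $F_P(r,x) = y \iff F_P(r,y) = x$. The key observation is that the map $s \mapsto s+1$ (for $s$ odd) and $s \mapsto s-1$ (for $s$ even) is the fixed-point-free involution on $[N/2]$ pairing up $\{1,2\},\{3,4\},\dots$; call it $\iota$. Tracing through the algorithm, if $X^{-1}(x) \le N/2$ then $F_P(r,x) = X\big(Y_r(\iota(Y_r^{-1}(X^{-1}(x))))\big)$, which is the composition $X \circ Y_r \circ \iota \circ Y_r^{-1} \circ X^{-1}$ restricted to the ``first half'' block $X(\{1,\dots,N/2\})$; since $\iota$ is an involution and conjugation preserves this, the map is an involution on that block, hence a perfect matching on it. The analogous statement holds for the ``second half'' block $X(\{N/2+1,\dots,N\})$ with $Z_r$ in place of $Y_r$. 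Since $\iota$ is fixed-point-free, there are no self-loops. This simultaneously shows $G_{F_P}$ has no edges between $S := X(\{1,\dots,N/2\})$ and $\overline S := X(\{N/2+1,\dots,N\})$ (every color-$r$ edge stays within one block), and $|S| = N/2$, so $F_P \in \Lyes$.

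\emph{Step 2: uniformity.} I would argue that the map $P \mapsto F_P$, when restricted appropriately, is a ``balanced'' surjection onto $\Lyes$ — i.e. every $F \in \Lyes$ has the same number of preimages $P$ under $\rv{P}$, weighted by probability. Concretely, fix $F \in \Lyes$ with its (unique) balanced partition $(S_F, \overline{S_F})$. A preimage $P = X Y_1 Z_1 \cdots Y_R Z_R$ must have $X(\{1,\dots,N/2\}) = S_F$; there are exactly $(N/2)!^2$ such permutations $X$ (choosing the bijections $\{1,\dots,N/2\}\to S_F$ and $\{N/2+1,\dots,N\}\to\overline{S_F}$ independently), each occurring with equal probability under $\rv{X}$. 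Having fixed such an $X$, for each color $r$ the oracle $F(r,\cdot)$ restricted to $S_F$ is some perfect matching $\mu_r^S$, and the condition on $Y_r$ is that $X \circ Y_r \circ \iota \circ Y_r^{-1} \circ X^{-1} = \mu_r^S$ as a matching on $S_F$; equivalently, pulling back by $X$, that $Y_r \circ \iota \circ Y_r^{-1}$ equals the matching $X^{-1}\mu_r^S X$ on $\{1,\dots,N/2\}$. The set of $Y_r \in S_{N/2}$ satisfying this is a coset of the centralizer of $\iota$ (the permutations commuting with the fixed involution $\iota$), and in particular has size independent of which target matching we chose — call it $c$. The same count $c$ applies to each $Z_r$. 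Hence every $F \in \Lyes$ has exactly $(N/2)!^2 \cdot c^{2R}$ preimages, all of equal weight, so $\rv{F_P}$ is uniform on $\Lyes$. (One should also double-check that $|\Lyes| = \binom{N}{N/2}/1 \cdot (\text{number of matchings})^{2R}$ is consistent, but the equal-fiber-size argument alone suffices for uniformity.)

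\emph{Main obstacle.} The delicate point is Step 2: verifying that the number of $Y_r$ producing a given target matching $Y_r \circ \iota \circ Y_r^{-1} = \mu$ is the same for all matchings $\mu$ on $\{1,\dots,N/2\}$. This is exactly the statement that the conjugacy class of the fixed-point-free involution $\iota$ in $S_{N/2}$ is acted on transitively by conjugation with all stabilizers of equal size (orbit–stabilizer), which is standard group theory, but the bookkeeping — matching up the ``odd/even pairing'' involution of the algorithm with an abstract involution, and confirming the in-place steps of \Cref{algo-plant} really compute the conjugation claimed — is where care is needed and where I expect a referee would want the most detail. A clean way to avoid fussing with fiber sizes is instead to exhibit, for a \emph{uniformly random} $P$, that $\rv{F_P}$ is invariant under the natural $S_N$-action that permutes vertices respecting the partition structure and acts transitively on $\Lyes$; combined with $F_P \in \Lyes$ always (Step 1), this forces uniformity. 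I would likely present whichever of these two routes the surrounding text makes shortest.
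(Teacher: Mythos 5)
Your proposal is correct and follows essentially the same route as the paper: show each $F_P$ encodes $R$ matchings within the two blocks determined by $X$ (hence lies in $\Lyes$), then prove uniformity by showing every $F \in \Lyes$ has the same number of preimages, namely $(N/2)!^2$ choices of $X$ times, per color, the number of $Y_r$ (resp.\ $Z_r$) inducing the prescribed matching. The only cosmetic difference is that you obtain that per-color count as the size of a coset of the centralizer of the involution $\iota$, whereas the paper writes the same number explicitly as $\frac{N}{2}(\frac{N}{2}-2)\cdots 2 = 2^{N/4}(N/4)!$.
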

\begin{proof}
We can view the permutations in $P$ as specifying a Yes instance as follows: $X$ specifies a partition $(S,\overline{S})$ by letting $S$ be the image of $\{1,2,\ldots,N/2\}$ under $X$ and letting $\overline{S}$ be the complement. For each $r \in [R]$, the perfect matching on $S$ (resp. $\overline{S}$) comes from mapping back from $S$ (resp. $\overline{S}$) to $[N/2]$ via $X^{-1}$, generating the perfect matching on $[N/2]$ via $Y_r$ (resp.  $Z_r$) by mapping $Y_r(i)$ (resp.  $Z_r(i)$) to $Y_{r}(i+1)$ (resp.  $Z_r(i+1)$) for all odd $i \in [N/2]$, and  by mapping $Y_r(i)$ (resp.  $Z_r(i)$) to $Y_{r}(i-1)$ (resp.  $Z_r(i-1)$) for all even $i \in [N/2]$, and then mapping back to $S$ (resp. $\overline{S}$) via $X$.

Sampling the random variable $\rv{F_P}$ is the same as the following: sample a uniformly random subset $S \subseteq [N]$ of size exactly $N/2$, for every $r$ sample a random perfect matching $\pi_{r,S}$ on $S$ and $\pi_{r,\overline{S}}$ on $\overline{S}$, and let $F(r,\cdot) = \pi_{r,S} \pi_{r,\overline{S}}$. We will prove that this  generates a uniformly random element of $\Lyes$ by showing that the number $|\{P:F=F_P\}|$ is the same for every $F \in \Lyes$. 
Consider a fixed $F  \in \Lyes$.  For all $F \in \Lyes$,  there are $(\frac{N}{2})!\cdot (\frac{N}{2})!$ possible choices for the permutation $X$. For each choice of $X$, for all $ r \in [R]$,  there are exactly $\frac{N}{2}\cdot (\frac{N}{2}-2) \cdots 2= 2^{N/4}\cdot (\frac{N}{4})!$ possible permutations $Y_r$ (resp. $Z_r$) that would induce the matching $\pi_{r,S}$ (resp. $\pi_{r,\bar{S}}$). Thus for every $F  \in \Lyes$, the number of $P$ such that $F_P=F$ is  $(\frac{N}{2})!\cdot (\frac{N}{2})\cdot (2^{N/4}\cdot (\frac{N}{4})!)^{2R}$.
 
Since the number $|\{P:F=F_P\}|$ is the same for every $F \in \Lyes$, and the distribution of $\rv{F_P}$ is uniform over its support, this implies that the distribution of $\rv{F_P}$ is uniform over $\Lyes$. 

\end{proof}

We call $P = XY_1 Z_1 \cdots Y_R Z_R$ the \emph{raw permutations} that induce the \emph{$\Components$ oracle} $F_P \in \Lyes$. Every algorithm that queries a $\Components$ oracle can be viewed as an algorithm that queries some underlying raw permutations. 

\begin{proposition}
    \label{prop:from-Lyes-to-raw}
    Let $A$ be a $T$-query algorithm that queries a $\Components$ oracle. Then there exists a $8T$-query algorithm $\hat{A}$ such that for every raw permutations $P$, 
    \[
        \Pr[A^{F_P} = 1] = \Pr[\hat{A}^P = 1]~.
    \]
\end{proposition}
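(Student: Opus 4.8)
The plan is to show that a single query to the $\Components$ oracle $F_P$ can be simulated by a constant number of queries to the raw permutations $P = XY_1Z_1\cdots Y_RZ_R$ and their inverses, and then apply this simulation query-by-query. The key observation is that \Cref{algo-plant} computes $F_P(r,x)$ by performing a bounded sequence of evaluations of the permutations $X, X^{-1}, Y_r, Y_r^{-1}, Z_r, Z_r^{-1}$ (plus some arithmetic, namely comparisons with $N/2$, parity checks, and adding/subtracting $1$ or $N/2$, all of which are reversible classical operations requiring no oracle access). Counting the oracle calls in \Cref{algo-plant}: step 1 needs one query to $X^{-1}$; step 2 needs one query to $W^{-1} \in \{Y_r^{-1}, Z_r^{-1}\}$; step 3 needs one query to $W \in \{Y_r, Z_r\}$; step 4 needs one query to $X$. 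So naively four queries to the raw permutations suffice to compute $F_P(r,x)$ \emph{classically}; but because $A$ makes its queries in superposition, $\hat A$ must implement the map $\ket{r,x,u} \mapsto \ket{r,x,u \oplus F_P(r,x)}$ coherently, which requires computing $F_P(r,x)$ into an ancilla, XOR-ing it onto the target register, and then \emph{uncomputing} $F_P(r,x)$ to restore and disentangle all ancilla registers. This doubles the count to eight queries per query of $A$, giving the stated $8T$ bound.

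Concretely, I would first describe $\hat A$: it runs $A$, and wherever $A$ applies $U_{F_P}$, the algorithm $\hat A$ instead (i) allocates ancilla registers initialized to $\ket{0}$, (ii) coherently executes the steps of \Cref{algo-plant} using the oracles $U_X, U_{X^{-1}}, U_{Y_r}, U_{Y_r^{-1}}, U_{Z_r}, U_{Z_r^{-1}}$ — where the choice of whether to query the $Y_r$ or $Z_r$ oracle in steps 1--3 is itself controlled on the value of the bit indicating whether $X^{-1}(x) \le N/2$, and is handled by querying both conditionally or by a controlled query — together with the (unitary, in-place) reversible arithmetic maps, accumulating the value $F_P(r,x)$ in an ancilla, (iii) XORs that ancilla onto $A$'s target register $\ket{u}$, and (iv) reverses all of step (ii), returning the ancillae to $\ket{0}$. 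Steps (ii) and (iv) each cost four oracle queries as counted above, for a total of eight. Since every ancilla is disentangled and reset after each simulated query, the joint state of $A$'s registers inside $\hat A$ is, after the simulation of the $i$-th query, exactly the state $A$ would be in after its $i$-th query to $U_{F_P}$; a straightforward induction on the number of queries then gives that $\hat A$'s final state equals $A$'s final state, hence $\Pr[\hat A^P = 1] = \Pr[A^{F_P}=1]$ for every $P$.

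The one genuinely delicate point — and the place I'd spend the most care — is the conditional branching in \Cref{algo-plant}. In the classical description, the algorithm \emph{chooses} whether $W = Y_r$ or $W = Z_r$ based on the comparison $X^{-1}(x) \le N/2$; in the coherent simulation this comparison outcome is a qubit that may be in superposition, so the queries in steps 2 and 3 must be implemented as \emph{controlled} oracle queries (controlled on that qubit, query $U_{Y_r}$; controlled on its negation, query $U_{Z_r}$), or equivalently one writes a single controlled-SWAP-then-query-then-controlled-SWAP gadget. Either way one must check that this is implementable with the stated number of calls (controlling a query is free in the query model — it's still one query) and that the uncomputation in step (iv) correctly inverts it. I also need to note that the arithmetic in step 3 — mapping $s$ to $s+1$ or $s-1$ according to parity — is a permutation of $[N/2]$ (it's an involution, pairing $2i-1 \leftrightarrow 2i$) and hence realizable as an in-place unitary with no oracle cost, and similarly the "$\pm N/2$" book-keeping between steps is a reversible bijection. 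Once these reversibility checks are in place, the rest is the routine ancilla-uncomputation argument, and the query count $8T$ follows immediately.
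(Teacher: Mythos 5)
Your proposal is correct and follows essentially the same route as the paper: simulate each query of $A$ to $F_P$ by coherently executing the steps defining $F_P$ via queries to the raw permutations and their inverses, at a constant cost of $8$ raw-permutation queries per simulated query. The paper's own proof is only a two-sentence sketch (attributing the constant $8$ loosely to the conditional branching), whereas you justify the constant via the standard compute--XOR--uncompute gadget (four queries forward, four to uncompute) and correctly observe that, with the permutation index part of the query input, the superposed branching costs nothing extra; this is a legitimate and, if anything, more careful accounting of the same argument.
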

\begin{proof}
    The algorithm $\hat{A}$ simulates $A$ as follows: whenever $A$ makes a query to the graph oracle $F_P$, the algorithm $\hat{A}$ queries the permutation oracles $X, Y_r, Z_r$ and their inverses according to the definition of $F_P$. Note that by construction each call to $F_P$ makes $8$ queries to the permutation oracles $X, Y_r, Z_r$ and their inverses. (The reason it is $8$ is due to implementing different oracle queries conditional on whether previous queries were larger/smaller than $N/2$, even/odd, etc.)

\end{proof}

\subsection{The lower bound}

\begin{theorem}
\label{thm:qcma_lower_bound}
   If \Cref{conj:pseudorandomness} is true, then there is no $\QCMA$ verifier $V$ for the $\Components$ problem that makes $T = N^{o(1)}$ queries and uses a classical witness of length $Q = N^{o(1)}$.
\end{theorem}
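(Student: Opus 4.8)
The plan is a proof by contradiction, following the two-stage strategy outlined in the introduction. Suppose $V$ is a $\QCMA$ verifier for $\Components$ making $T = N^{o(1)}$ queries, using classical witnesses of length $Q = N^{o(1)}$, with completeness $2/3$ and soundness $1/3$ (amplification makes the exact constants immaterial). \textbf{Stage 1: replacing the witness by graph fixing data.} By the standard counting argument there is a single witness $w^*$ that is accepted, on a uniformly random Yes instance, for at least a $2^{-Q}$ fraction of $\Lyes$; let $\cal{L}_{w^*}$ be this set of Yes instances, and let $B := V(\cdot, w^*)$ be the resulting witnessless $T$-query graph-oracle algorithm, so $\Pr_{F \sim \cal{L}_{w^*}}[B^F = 1] \ge 2/3$ while $\Pr_{F \sim \Lno}[B^F = 1] \le 1/3$. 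Now move to the raw-permutation picture of \Cref{sec:qcma-lower-bound}: by \Cref{clm:raw-permutations-distribution} the map $P \mapsto F_P$ carries the uniform distribution on raw permutations $\rv{P} = \rv{X}\rv{Y}_1\rv{Z}_1\cdots\rv{Y}_R\rv{Z}_R$ (a nice random permutation, being a concatenation of uniform permutations) onto the uniform distribution on $\Lyes$, with all fibers of equal size. Hence the uniform distribution $\rv{P}_{w^*}$ over the preimage $\cal{P}_{w^*}$ of $\cal{L}_{w^*}$ has support inside $\supp(\rv{P})$ and satisfies $\Hmin(\rv{P}_{w^*}) \ge h(\Sigma) - Q$, where $h$ is the entropy of $\rv{P}$. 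Apply the decomposition lemma (\Cref{lem:high-min-entropy-decomposition}) with $d = Q$, a small constant $\eps$, and $\delta = 1/\poly(T,\log N)$ chosen small enough that $\poly(T,\log N)\cdot\poly(\delta) < 1/100$ for the universal polynomials of \Cref{conj:pseudorandomness} and \Cref{lem:conjhelper}; applying \Cref{clm:random-permutation-entropy-vs-set} to each of the $2R+1$ blocks bounds the number of fixed coordinates by $k = O(Q/\delta) = N^{o(1)}$. We conclude that $\rv{P}_{w^*}$ is $\eps$-close in total variation to a convex combination $\sum_z p_z \rv{Y}_z$ in which each $\rv{Y}_z$ and $\rv{Y}_z^{-1}$ is $(k,\delta)$-dense with respect to $\rv{P}$.

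\textbf{Stage 1, continued: invoking the conjecture.} By \Cref{prop:from-Lyes-to-raw}, $B$ run on $F_P$ is faithfully simulated by an $8T$-query algorithm $\hat{B}$ run directly on $P$. Each $\rv{Y}_z$ is a tuple of $2R+1$ permutations (one on $[N]$, the others on $[N/2]$) with $\rv{Y}_z,\rv{Y}_z^{-1}$ both $(k,\delta)$-dense, so \Cref{lem:conjhelper} applies: writing $\rv{Z}^{(z)}$ for the uniform distribution over raw permutations sharing the $k$ fixed coordinates of $\rv{Y}_z$, we get $\bigl|\Pr_{P\sim\rv{Z}^{(z)}}[\hat{B}^P=1] - \Pr_{P\sim\rv{Y}_z}[\hat{B}^P=1]\bigr| < 1/100$ for every $z$. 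Averaging over $z$ and choosing a maximizing $z^*$,
\begin{align*}
\Pr_{P \sim \rv{Z}^{(z^*)}}[\hat{B}^P = 1] &\;\ge\; \E_z \Pr_{P \sim \rv{Y}_z}[\hat{B}^P = 1] - \tfrac{1}{100} \\
&\;\ge\; \Pr_{P \sim \rv{P}_{w^*}}[\hat{B}^P = 1] - \eps - \tfrac{1}{100} \;>\; 0.6,
\end{align*}
where the last step uses $\Pr_{P \sim \rv{P}_{w^*}}[\hat{B}^P = 1] = \Pr_{F \sim \cal{L}_{w^*}}[B^F = 1] \ge 2/3$ (combining \Cref{prop:from-Lyes-to-raw} with \Cref{clm:raw-permutations-distribution}). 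Unwinding \Cref{algo-plant}, the $k$ fixed raw-permutation coordinates translate into \emph{graph fixing data} $\rho$: a list of $O(k) = N^{o(1)}$ distinguished vertices and colored edges. Re-running the equal-fibers count of \Cref{clm:raw-permutations-distribution} conditioned on $\rho$ shows that $\rv{Z}^{(z^*)}$ pushes forward under $P \mapsto F_P$ onto the uniform distribution over $\Lyes^\rho$, the Yes instances consistent with $\rho$. Thus the graph-oracle algorithm $B$ accepts the uniform distribution over $\Lyes^\rho$ with probability $> 0.6$, while by soundness it accepts every $F \in \Lno^\rho \subseteq \Lno$ with probability $\le 1/3$.

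\textbf{Stage 2: the fixing data is useless.} We construct a witnessless algorithm $M$ for the (un-subsampled) $\Components$ problem making $O(k) + T = N^{o(1)} = o(N^{1/20})$ queries. On a uniformly random input $F \in \Lyes \cup \Lno$, $M$ \emph{plants} the fixing data $\rho$ into $F$ using $O(k)$ queries, obtaining a graph oracle $F'$, then runs $B$ on $F'$ and outputs its verdict. The key claim --- established as in Natarajan--Nirkhe~\cite{natarajan2023distribution}, by relabelling $F$'s own partition and matchings so as to absorb $\rho$ and re-randomizing the unconstrained part --- is that planting sends the uniform distribution over $\Lyes$ to the uniform distribution over $\Lyes^\rho$, and sends any $F \in \Lno$ to some $F' \in \Lno^\rho$ (for the latter it suffices that rewiring $N^{o(1)}$ edges shifts the spectral gap by only $o(1)$, so one may take $\Lno$ to consist of $\tfrac{\Delta}{2}$-expanders throughout). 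Consequently $\Pr_{F \sim \Lyes}[M^F = 1] > 0.6$ and $\Pr_{F \sim \Lno}[M^F = 1] \le 1/3$, a distinguishing advantage exceeding $1/1000$ with $o(N^{1/20})$ queries, contradicting \Cref{lem:modified_acl}. Hence no such verifier $V$ exists.

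\textbf{The main obstacle.} The delicate step is the planting procedure in Stage~2: producing an $O(k)$-query transformation that inserts $\rho$ into a fresh random $\Lyes$ (resp.\ $\Lno$) instance so that the output is distributed \emph{exactly} uniformly over $\Lyes^\rho$ (resp.\ lies in $\Lno^\rho$). Two points need care: (i) $\rho$ may prescribe which side of a Yes instance's unique partition a given vertex lies on --- information with no counterpart for a No instance --- so ``consistency with $\rho$'' must be set up compatibly for both instance types; and (ii) one must verify that the planting map preserves the uniform measure on $\Lyes^\rho$, which is precisely where the edge-re-randomization analysis of~\cite{natarajan2023distribution} is imported. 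A minor additional bookkeeping point is to confirm that the equal-fibers property of \Cref{clm:raw-permutations-distribution} persists after conditioning on the $k$ fixed coordinates, so that $\rv{Z}^{(z^*)}$ genuinely induces the uniform distribution over $\Lyes^\rho$.
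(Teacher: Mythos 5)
Your Stage 1 is essentially the paper's argument: counting to get a maximizing witness $w^*$, passing to raw permutations via \Cref{clm:raw-permutations-distribution} and \Cref{prop:from-Lyes-to-raw}, decomposing the high min-entropy source with \Cref{lem:high-min-entropy-decomposition} and \Cref{clm:random-permutation-entropy-vs-set}, invoking the conjecture through \Cref{lem:conjhelper}, and translating the $k$ fixed coordinates into graph fixing data $\rho$. (One small omission there: the fixed coordinates produced by the decomposition need not be well-formed as graph data --- matched pairs in the $Y_r,Z_r$ fixings, and alignment of $\Gamma_X$ with the $\Gamma_{Y_r},\Gamma_{Z_r}$ --- so the paper first enlarges the fixing by $O(Rk)$ coordinates and refines the convex combination before defining $\rho$; your ``unwinding'' step silently assumes this.)

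The genuine gap is in Stage 2, exactly at the step you flag but then assert. The algorithm $M$ only has query access to $F$ and cannot learn the partition $(S,\overline{S})$ of a Yes instance, so it cannot ``relabel $F$'s own partition'' to absorb $\rho$: there is no $N^{o(1)}$-query procedure that plants $A$ and $B$ into \emph{different} components with certainty, and hence no planting map that sends the uniform distribution over $\Lyes$ exactly onto the uniform distribution over $\Lyes^\rho$. The paper's reduction (\Cref{lem:bqpred}) instead plants via lazy random walks from uniformly random start vertices: the walks land in opposite components only with constant probability ($1/4$ in the analysis of \Cref{lem:closeness}), expander mixing (\Cref{cor:tlazyrw}) makes the landing spots only \emph{approximately} uniform (error $O(kN\Delta^L + k^2/N)$), and exact uniformity over $\Lyes^\rho$ is then recovered, conditioned on a good landing, by the regularity argument for the $\mathrm{Reconnect}$ procedure (\Cref{clm:new-G}, \Cref{lem:metagraph}) --- this last counting argument is the technical heart and is not simply importable as a black box from~\cite{natarajan2023distribution}. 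Because of the constant-probability failure of the planting, the hypothesis of \Cref{lem:bqpred} requires acceptance at least $3/4 + 2\eps$ on $\Lyes^\rho$ (hence the paper amplifies the verifier to $99/100$ before Stage 1, so that after the $\eps$ and $\poly(T,\log N)\poly(\delta)$ losses one still clears $3/4$), and the final distinguishing advantage is only $\eps - \tfrac{2}{100} - O(k^2/N)$, not $0.6 - 1/3$. With your stated completeness $2/3$ the bound $\Pr_{F \sim \Lyes}[M^F = 1] > 0.6$ is false --- the planted oracle lies in $\Lyes^\rho$ only on a probability-$\approx 1/4$ event, and outside it the verifier's behavior is uncontrolled --- so the reduction as written yields no contradiction with \Cref{lem:modified_acl}; the quantitative bookkeeping must be redone along the paper's lines, with amplification performed \emph{before} the counting argument so that the Stage 1 conclusion already exceeds $3/4$.
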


Suppose for sake of contradiction that there exists a $\QCMA$ verifier $V$ for deciding the $\Components$ problem making $ T = N^{o(1)}$  queries to a graph oracle $F$, receives a witness $w$ of length $Q = N^{o(1)}$, and decides the $\Components$ problem. By sequential amplification, we can assume without loss of generality that for $ F \in \Lyes$, there exists a classical witness $w$ such that $V^F(w)$ accepts with probability at least $99/100$, and for all $F \in \Lno$, for all witnesses $w$, the verifier $V^F(w)$ accepts with probability at most $1/100$.

Combining \Cref{clm:raw-permutations-distribution} and \Cref{prop:from-Lyes-to-raw} we have that, for all $F \in \Lyes$ there exist raw permutations $P$ such that for all strings $w \in \{0,1\}^Q$, 
    \begin{equation}
        \label{eq:qcma-raw-to-components}
        \Pr[V^{F}(w) = 1] = \Pr[\hat{V}^P(w) = 1]
    \end{equation}
    and $\hat{V}$ makes $8T$ queries to $P$. For all witnesses $w \in \{0,1\}^Q$, define
\[
\cal{L}_w := \Big \{ F \in \Lyes : \Pr[V^F(w) = 1] \geq \frac{99}{100} \Big \}~,
\]
i.e., the set of oracles that are accepted with probability at least $99/100$ using witness $w$. 
Let $w^*$ denote a witness 
such that $| \cal{L}_{w^*} |$ is maximized. Let $V_*$ and $\hat{V}_*$ denote the algorithms $V$ and $\hat{V}$, respectively, with $w^*$ hardcoded as the witness. Note that $V_*$ and $\hat{V}_*$ can be viewed as ``$\mathsf{BQP}$ query algorithms'', since they only make queries to an oracle and does not take any witness as input. 

Let $\cal{P}_{w^*} = \{ P : F_P \in \cal{L}_{w^*} \}$. These are the set of ``raw permutations'' whose corresponding $F$'s are in $\cal{L}_{w^*}$. By \Cref{clm:raw-permutations-distribution}, letting $\rv{P}$ denote the uniform distribution over raw permutations consistent with some $\Components$ oracle, the random variable $\rv{F}_{\rv{P}}$ is uniform over $\Lyes$. Therefore
\[
    \frac{|\cal{P}_{w^*}|}{|\cal{P}|} = \frac{|\cal{L}_{w^*}|}{|\Lyes|} \geq 2^{-Q}
\]
where $\cal{P}$ denotes the support of $\rv{P}$. Let $\rv{P}^*$ denote the random variable that is $\rv{P}$ conditioned on $\cal{P}_{w^*}$. 

\paragraph{Decomposing $\rv{P}^*$ into dense distributions.} The next lemma shows that the query algorithm $\hat{V}_*$ cannot distinguish between raw permutations sampled from $\rv{P}^*$ and raw permutations sampled from a (convex combination of) \emph{dense} distributions. This only depends on the fact that the random variable $\rv{P}^*$ has high min-entropy, and does not depend on the query complexity of $\hat{V}_*$.

\begin{lemma}
\label{lem:qcma-to-dense}
    For all $\eps,\delta > 0$ there exists a mixture of  random variables $\{(p_\ell,\rv{D}^{(\ell)})\}_\ell$ 
 where each $\rv{D}^{(\ell)}$ and $\rv{D}^{(\ell)^{-1}}$ is $(k,\delta)$-dense with respect to $\rv{P}$ and $k = O(Q + \log 1/\epsilon)/\delta$, such that
    \[
        \Big | \E_{P \sim \rv{P}^*} \Pr[\hat{V}_*^P = 1] - \sum_\ell p_\ell \E_{D \sim \rv{D}^{(\ell)}} \Pr[\hat{V}_*^D = 1] \Big| \leq \eps~.
    \]
   
\end{lemma}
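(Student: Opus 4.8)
The plan is to obtain \Cref{lem:qcma-to-dense} as a direct packaging of the min-entropy decomposition lemma for permutations (\Cref{lem:high-min-entropy-decomposition}), applied to $\rv{P}^*$ with the nice random permutation $\rv{Z}$ taken to be $\rv{P}$ itself. First I would record the two structural facts that make this possible. (i) The product variable $\rv{P} = \rv{X}\rv{Y}_1\rv{Z}_1\cdots\rv{Y}_R\rv{Z}_R$ is a concatenation of $2R+1$ independent uniformly random permutations ($\rv{X}$ on $[N]$ and the $\rv{Y}_r,\rv{Z}_r$ on $[N/2]$), hence is nice, with a well-defined associated entropy $h$ over the alphabet $\Sigma = \bigcup \{\text{component}\}\times[\cdot]$, and $h(\Sigma) = \log|\cal{P}|$. (ii) By construction $\rv{P}^*$ is the uniform distribution over $\cal{P}_{w^*} \subseteq \cal{P}$, and since $|\cal{P}_{w^*}|/|\cal{P}| = |\cal{L}_{w^*}|/|\Lyes| \geq 2^{-Q}$ we get $\Hmin(\rv{P}^*) = \log|\cal{P}_{w^*}| \geq h(\Sigma) - Q$.

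With these in hand I would invoke \Cref{lem:high-min-entropy-decomposition} with $\rv{Z} = \rv{P}$, $\rv{X} = \rv{P}^*$, $d = Q$, and the given $\eps,\delta$. This yields a convex combination $\{(p_\ell,\rv{D}^{(\ell)})\}_\ell$ of random permutations that is $\eps$-close to $\rv{P}^*$ in total variation distance, with each $\rv{D}^{(\ell)}$ and $\rv{D}^{(\ell)^{-1}}$ being $(k,\delta)$-dense with respect to $\rv{P}$, where $k$ is any integer satisfying $h(I) \leq \frac{1}{\delta}(Q + \log 1/\eps)$ for all size-$k$ subsets $I \subseteq \Sigma$. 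It then remains to convert this entropy condition into the stated bound $k = O(Q + \log 1/\eps)/\delta$. Since $h$ is the entropy of a concatenation, $h(I) = \sum_r h_r(I_r)$, and the computation in the proof of \Cref{clm:random-permutation-entropy-vs-set} applied to each component gives $h_r(I_r) \geq |I_r| - 1$ whenever $I_r \neq \emptyset$; summing over the at most $2R+1$ nonempty components yields $h(I) \geq |I| - (2R+1)$ for every $I$. Hence any valid $k$ obeys $k - (2R+1) \leq \frac{1}{\delta}(Q + \log 1/\eps)$, i.e. $k \leq \frac{1}{\delta}(Q + \log 1/\eps) + 2R + 1 = O(Q + \log 1/\eps)/\delta$, using that $R$ is a fixed constant and $\delta < 1$.

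To finish, I would note that $P \mapsto \Pr[\hat{V}_*^P = 1]$ is $[0,1]$-valued, so its expectation changes by at most the total variation distance between the distributions fed to it; applying this to $\rv{P}^*$ versus the mixture $\sum_\ell p_\ell \rv{D}^{(\ell)}$ gives
\[
    \Big| \E_{P \sim \rv{P}^*} \Pr[\hat{V}_*^P = 1] - \sum_\ell p_\ell \E_{D \sim \rv{D}^{(\ell)}} \Pr[\hat{V}_*^D = 1] \Big| \leq \eps,
\]
as required. I do not anticipate a genuine obstacle: the only points needing care are verifying that $\rv{P}$ satisfies the "nice" hypothesis (so that $h$ and the chain rule are available, which holds since concatenations of uniform permutations are nice), and translating the set-size bound on $k$ via the linear-in-$|I|$ lower bound on $h$ above; both are routine bookkeeping.
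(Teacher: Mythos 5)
Your proposal is correct and follows essentially the same route as the paper: bound $\Hmin(\rv{P}^*)$ by $h(\Sigma) - Q$ via the counting argument, note that $\rv{P}$ is nice as a concatenation of uniform permutations, apply \Cref{lem:high-min-entropy-decomposition} together with \Cref{clm:random-permutation-entropy-vs-set} to get the $(k,\delta)$-dense convex combination, and conclude by the total variation bound. Your per-component translation of the entropy condition $h(I) \leq \frac{1}{\delta}(Q+\log 1/\eps)$ into $k = O(Q+\log 1/\eps)/\delta$ just makes explicit a step the paper delegates to the cited claim.
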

\begin{proof}
Note that
\[
    \Hmin(\rv{P}^*) \geq \log |\cal{P}| - |w^*| = \log |\cal{P}| - Q~.
\]
We invoke the lemmas about decomposition of high min-entropy sources from \Cref{sec:min-entropy-decompositions}. Since $\rv{P}$ is the concatenation of $2R+1$ uniformly random and independent permutations (where the first permutation is on $[N]$ and the remainder are on $[N/2]$), it is nice. 

    By \Cref{lem:high-min-entropy-decomposition} and \Cref{clm:random-permutation-entropy-vs-set}, we have that $\rv{P}^*$ is $\eps$-close in total variation distance to a convex combination of random (tuples of) permutations $\{ \rv{D}_\ell \}_\ell$ such that both $\rv{D}_\ell$ and the inverse $\rv{D}_\ell^{-1}$ are $(k,\delta)$-dense with respect to $\rv{P}$.
    Letting $\{ p_\ell \}_\ell$ denote the weights of the convex combination, and using the fact that the total variation distance between two probability distributions upper bounds the difference between the probabilities of every event, the lemma follows. 
\end{proof}

We set $\eps = \frac{1}{100}$ and set $\delta = 1/\poly(T,\log N)$ for a sufficiently large polynomial to be determined later. By \Cref{lem:qcma-to-dense} and averaging, there exists an $\ell$ such that
\[
    \E_{D \sim \rv{D}^{(\ell)}} \Pr[\hat{V}_*^D = 1] \geq  \E_{P \sim \rv{P}^*} \Pr[\hat{V}_*^P = 1] - \eps \geq \frac{98}{100}~.
\]

From now on we fix this $\ell$ and simply write $\rv{D}$ and $\rv{D}^{-1}$. Since $\rv{D}$ and $\rv{D}^{-1}$ are both $(k,\delta)$-dense, there exist $k$ fixed coordinates of $\rv{D}$ (and $\rv{D}^{-1}$) that are fixed to a deterministic value and the rest of the coordinates are dense. Since $\rv{D}$ is a sequence of raw permutations $XY_1 \cdots Y_R Z_1 \cdots Z_R$ and the permutation are thought of as strings ($X$ is a string in $[N]^N$ and each of the $Y_r,Z_r$ are strings in $[N/2]^{N/2}$), the fixed coordinates of $\rv{D}$ correspond to the following:
\begin{enumerate}
    \item A set $\Gamma_X \subseteq [N]$ such that the values of $X(\Gamma_X)$ are fixed
    \item Sets $\Gamma_{Y_r} \subseteq [N/2]$ for every $r$ 
    such that the values of $Y_r(\Gamma_{Y_r})$ are fixed, and similarly for $\Gamma_{Z_r}$.
\end{enumerate}
By definition $|\Gamma_X| + \sum_{r=1}^R |\Gamma_{Y_r}| + |\Gamma_{Z_r}| \leq k$. Let $\Gamma$ denote the data that includes the fixed coordinates $\Gamma_X,\Gamma_{Y_r},\Gamma_{Z_r}$ of $\rv{D}$ as well as the values they are fixed to.

\paragraph{Invoking the pseudorandomness conjecture.} We now invoke the quantum pseudorandomness conjecture (\Cref{conj:pseudorandomness}). Let $\rv{P}_{\Gamma}$ be the random variable corresponding to $\rv{P}$ but conditioned on the same $k$ coordinates as $\rv{D}$. Since $\rv{D}, \rv{D}^{-1}$ are both $(k,\delta)$-dense, \Cref{conj:pseudorandomness} and  \Cref{lem:conjhelper} imply that
\begin{align*}
\E_{P \sim \rv{P}_{\Gamma}} \Pr[\hat{V}_*^p = 1] &\geq  \E_{D \sim \rv{D}} \Pr[\hat{V}_*^D = 1] - \poly(T,\log N)\poly(\delta)  \\
&\geq \frac{98}{100} - \poly(T,\log N)\poly(\delta)~,     
\end{align*}
We assume the following properties of the fixings $\Gamma_X,\Gamma_{Y_1},\ldots,\Gamma_{Y_R},\Gamma_{Z_1},\ldots,\Gamma_{Z_R}$:
\begin{enumerate}
    \item All the fixings of the $Y_r$ and $Z_r$ occur in pairs: for example if $i \in \Gamma_{Y_r}$, then also $i + 1 \in \Gamma_{Y_r}$ if $i$ is odd and $i - 1 \in \Gamma_{Y_r}$ if $i$ is even. 
    \item $\Gamma_X \cap \{1,2,\ldots,N/2\} = \Gamma_{Y_1} \cup \Gamma_{Y_2} \cup \cdots \cup \Gamma_{Y_R}$ and similarly $\Gamma_X \cap \{N/2 + 1,\ldots,N\} = (\Gamma_{Z_1} \cup \cdots \cup \Gamma_{Z_R}) + N/2$ (where we mean adding $N/2$ to every element of the set). 
\end{enumerate}
We can make this assumption without loss of generality, by adding at most $O(Rk)$ coordinates to each of the sets $\Gamma_X,\Gamma_{Y_r},\Gamma_{Z_r}$, and decomposing $\rv{P}_\Gamma$ further into a convex combination where those coordinates are fixed.

\paragraph{From raw permutations back to graph oracles.} By the correspondence between raw permutations and graph oracles specified in \Cref{clm:raw-permutations-distribution}, the fixed data $\Gamma$ (satisfying the assumptions listed above) gives rise to the following object:
\begin{definition}[Graph fixing data]
We call a triple $\rho = (A,B,H)$ the \emph{graph fixing data corresponding to $\Gamma$} if and only if
\begin{enumerate}
    \item $A = X(\Gamma_X \cap \{1,2,\ldots,N/2\}), B = X(\Gamma_X \cap \{N/2 + 1,\ldots,N\})$
    \item For all $r \in [R]$, for all pairs $\{ i, i+1 \} \in \Gamma_{Y_r}$ where $i$ is odd, then $(u,v,r) \in H$ where $u = X(Y_r(i))$ and $v = X(Y_r(i+1))$.
    \item For all $r \in [R]$, for all pairs $\{ i, i+1 \} \in \Gamma_{Z_r}$ where $i$ is odd, then $(u,v,r) \in H$ where $u = X(Z_r(i)+N/2)$ and $v = X(Z_r(i+1)+N/2)$.
\end{enumerate}
\end{definition}
\noindent In other words, the graph fixing data $\rho = (A,B,H)$ specifies two subsets of vertices $A,B$ that should be in different components, and $H$ is a set of (colored) edges that are either in $A \times A$ or $B \times B$. 

We let the set of all graph oracles in $\Lyes$ that are consistent with the graph fixing data $\rho$ be denoted by
\[
    \Lyes^\rho := \Set{ F \in \Lyes : \begin{array}{l}  \text{$A$ and $B$ are in different components in $F$}  \\ F(r,u) = v, \, F(r,v) = u \, \text{ for all } (u,v,r) \in H \end{array}}~.
\]
Define the random variable $\rv{G}_\rho = \rv{F}_{\rv{P}_\Gamma}$. This is a random graph oracle, and since $\Gamma$ induces the graph fixing data $\rho$ we have that $\rv{G}_\rho$ takes values in $\Lyes^\rho$. Since $\rv{P}_\Gamma$ is uniform over all  that satisfy the fixed data $\Gamma$, then the induced random variable $\rv{G}_\rho$ is uniform over all graph oracles satisfying the fixed data $\rho$, and therefore $\rv{G}_\rho$ is uniform over $\Lyes^\rho$. Therefore from \Cref{prop:from-Lyes-to-raw} we have
\[
\E_{P \sim \rv{P}_\Gamma} \Pr[\hat{V}_*^P = 1] = \E_{P \sim \rv{P}_\rho} \Pr[V_*^{F_P} = 1] = \E_{G \sim \Lyes^\rho} \Pr[V_*^G = 1]~.
\]
By setting $\delta$ to be $1/\poly(T,\log N)$ for some appropriately large polynomial $\poly(\cdot)$ we get that the number of fixed coordinates $k  = \poly(Q,T,\log N)$ and putting everything together we have that
\[
 \E_{G \sim \Lyes^{\rho}} \Pr[V_*^G = 1] \geq \frac{98}{100}-\poly(T,\log N) \cdot \poly(\delta) \geq \frac{97}{100}~.
\]

On the other hand, the algorithm $V_*$ accepts every instance $G \in \Lno$ with probability at most $\frac{1}{100}$. In particular for all $G \in \Lno^\rho$ we have that $\Pr[V_*^G = 1] \leq \frac{1}{100}$ where we analogously define
\[
    \Lno^\rho := \Set{ F \in \Lno : \begin{array}{l}  F(r,u) = v, \, F(r,v) = u \, \text{ for all } (u,v,r) \in H \end{array}}~.
\]
Note that the difference between $\Lyes^\rho$ and $\Lno^\rho$ is that $\Lno^\rho$ does not insist that the vertices $A,B$ are in different components (because there is only a single connected component). 
We thus have a $T$-query algorithm $V_*$ that distinguishes between the uniform distribution over $\Lyes^\rho$ and \emph{all} $\Lno^\rho$ with high probability.

\paragraph{Reducing to the Ambainis, Childs, and Liu lower bound.}  We now show that this contradicts the expansion testing lower bound of Ambainis, Childs, and Liu~\cite{ambainis2011quantum}. This is done via reduction: any algorithm for distinguishing between (the uniform distributions over) $\Lyes^\rho$ and $\Lno^\rho$ can be converted into an efficient algorithm that distinguishes between (the uniform distributions over) $\Lyes$ and $\Lno$, which in turn contradicts \Cref{lem:modified_acl}. We note that this proof strategy is heavily inspired by the $\QCMA$ lower bound proof of Natarajan and Nirkhe~\cite{natarajan2023distribution}. 

\begin{lemma}
\label{lem:bqpred}
Let $\eps > 0$, let $\rho$ denote graph fixing data, and let $V_*$ denote a $T$-query algorithm such that 
 \[
 \E_{G \sim \Lyes^{\rho}}\Pr[V_*^G=1] \geq \frac{3}{4} + 2\eps~,
\]
and furthermore for every $G \in \Lno^\rho$, 
\[
\Pr[V_*^G = 1] \leq \eps~.
\]
Then, for $L = O(\log \Delta^{-1} \log (N(|A \cup B|))$ there exists a $(T + (|A \cup B| \cdot L))$-query algorithm $M$ such that 
 \[
 \Big|\E_{G \sim \Lyes}\Pr[M^G=1]-\E_{G \sim \Lno}\Pr[M^G=1]\Big|\geq \eps - \frac{2}{100} - \frac{9|A \cup B|^2}{N}
 \]
\end{lemma}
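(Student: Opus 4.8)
The plan is to realize $M$ as a \emph{planting} reduction in the spirit of Natarajan and Nirkhe~\cite{natarajan2023distribution}: given oracle access to a graph oracle $F$ promised to be uniform over $\Lyes$ or uniform over $\Lno$, the algorithm $M$ first transforms $F$ into a graph oracle $F'$ forced to be consistent with the graph fixing data $\rho=(A,B,H)$, and then simulates $V_*$ on $F'$ and outputs $V_*$'s verdict. The transformation has two ingredients. \emph{Relabeling:} $M$ picks a uniformly random vertex $v_0$, runs a lazy random walk from $v_0$ of length $|A|\cdot L$, and reads off the $|A|$ vertices occurring at times $L,2L,\dots$; by \Cref{cor:tlazyrw} — applied inside the connected component of $v_0$, which for a random $F\in\Lyes$ is a $\Delta$-expander with probability $1-O(1/N)$ (by \Cref{lem:random-graphs} applied with $N/2$ vertices), and for $F\in\Lno$ is the whole graph — these $|A|$ vertices are within total variation $O(|A|N\Delta^{L})$ of a uniformly random tuple of vertices of that component. $M$ repeats from an independent uniformly random vertex $u_0$ to obtain $|B|$ vertices of $u_0$'s component, and then fixes a bijection $\pi$ of $[N]$ sending the first batch onto $A$, the second batch onto $B$, and everything else arbitrarily, yielding $F_1=\pi F\pi^{-1}$. \emph{Surgery:} processing the colored edges of $H$ in a fixed order, for each $(u,v,r)\in H$ not already present $M$ performs the $r$-colored edge swap exchanging the $r$-partners of $u$ and of $v$; since every edge of $H$ lies in $A^2$ or $B^2$ and the image of each batch lies in a single component of $F_1$, no edge across the cut of a $\Lyes$-instance is ever created, and all of $H$ gets planted. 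Call the result $F'$. The relabeling is bookkeeping; building the surgery table costs $O(|A\cup B|)$ queries to $F$ (here $|H|=O(|A\cup B|)$, since the restriction of $H$ to each colour is a matching); and each of $V_*$'s $T$ queries to $F'$ then costs at most one query to $F$. Choosing $L=O(\log\Delta^{-1}\log(N|A\cup B|))$ so that $O(|A\cup B|N\Delta^{L})\le\tfrac{1}{100}$, the algorithm $M$ makes $T+|A\cup B|\cdot L$ queries, as claimed.

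I would then establish two distributional facts. Let $\mathcal{G}$ be the good event that the walk outputs and the $O(|A\cup B|)$ vertices touched by the surgery are all distinct, that the relevant components (respectively the whole graph) are $\Delta$-expanders, and — in the Yes case only — that $u_0$ lies in a different component of $F$ than $v_0$. A birthday bound gives collision probability at most $\tfrac{9|A\cup B|^2}{2N}$, \Cref{lem:random-graphs} bounds the expansion failure probability by $O(1/N)$, and since a Yes instance has two components of size exactly $N/2$ we get $\Pr[u_0,v_0\text{ in different components}]=\tfrac{N/2}{N-1}\ge\tfrac12$; furthermore $\mathrm{Unif}(\Lno)$ is within total variation $1/N$ of a union of three random matchings. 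After replacing the walk outputs by genuinely uniform tuples (total-variation cost at most $\tfrac{1}{100}$) and conditioning on $\mathcal{G}$, the crux is: (i) in the No case $F'\in\Lno^\rho$; (ii) in the Yes case the distribution of $F'$ is \emph{exactly} uniform over $\Lyes^\rho$. For (ii) — the technical heart, following~\cite{natarajan2023distribution} — one checks that the relabeling step, run with uniform host tuples, turns $\mathrm{Unif}(\Lyes)$ into $\mathrm{Unif}$ over the Yes instances with $A$ in one component and $B$ in the other, and that the edge-surgery step is, for each edge $e$ of $H$, a \emph{uniform covering}: every target graph has exactly the same number $\Theta(N)$ of preimages (the target itself, or the target with $u,v$'s $r$-partners un-swapped against any of $\Theta(N)$ suitable $r$-edges disjoint from $e$ and from the already-planted edges), so it carries a uniform distribution on graphs that separate $A,B$ and contain the already-planted edges to a uniform distribution on those also containing $e$. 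Composing over $H$ (same-colour edges of $H$ share no vertex, and swaps of different colours commute) gives (ii); (i) follows from the same covering computation without the component constraint, together with the routine fact that a union of three random matchings conditioned on any fixed set of $O(|A\cup B|)$ edges is a $\Delta$-expander with probability $1-O(1/N)$ (a direct adaptation of the proof of \Cref{lem:random-graphs}).

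To finish, from $\Pr[\mathcal{G}]\ge\tfrac12-\tfrac{1}{100}-\tfrac{9|A\cup B|^2}{2N}$ in the Yes case, $\Pr[\mathcal{G}]\ge 1-\tfrac{1}{100}-\tfrac{9|A\cup B|^2}{2N}-\tfrac1N$ in the No case, the hypotheses on $V_*$, and $\tfrac34+2\eps\le 1$, I get
\[
\E_{G\sim\Lyes}\Pr[M^{G}=1]\ \ge\ \Big(\tfrac12-\tfrac{1}{100}-\tfrac{9|A\cup B|^2}{2N}\Big)\Big(\tfrac34+2\eps\Big)\ \ge\ \tfrac38+\eps-\tfrac{1}{100}-\tfrac{9|A\cup B|^2}{2N},
\]
\[
\E_{G\sim\Lno}\Pr[M^{G}=1]\ \le\ \eps+\tfrac{1}{100}+\tfrac{9|A\cup B|^2}{2N}+\tfrac1N,
\]
and subtracting, together with $\tfrac38-\tfrac1N\ge\eps$ (which holds since the hypothesis is vacuous unless $\eps\le\tfrac18$), yields $\E_{G\sim\Lyes}\Pr[M^{G}=1]-\E_{G\sim\Lno}\Pr[M^{G}=1]\ge\eps-\tfrac{2}{100}-\tfrac{9|A\cup B|^2}{N}$.

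The step I expect to be the main obstacle is (ii): proving that the planting map outputs \emph{exactly} the uniform distribution over $\Lyes^\rho$ in the Yes case — precisely the ``main challenge'' highlighted in the introduction. The key realization that makes it go through is that the deterministic edge-swap surgery is a uniform covering map, hence transports uniform distributions to uniform distributions; a secondary point is the adaptation of \Cref{lem:random-graphs} needed to keep the No-case output inside $\Lno^\rho$ after surgery.
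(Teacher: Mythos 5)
Your proposal is correct and takes essentially the same route as the paper's proof: random-walk-sampled host vertices analyzed via \Cref{cor:tlazyrw} and \Cref{lem:random-graphs}, a relabeling carrying $A,B$ onto them, iterative edge surgery whose per-edge regularity (equal preimage counts) shows the planted Yes-case distribution is exactly uniform over $\Lyes^\rho$, and then simulation of $V_*$ --- this is precisely the paper's $\mathrm{Reconnect}_\rho$ construction together with its closeness and uniformity lemmas, with only cosmetic differences (you condition on ``different components,'' probability $1/2$, rather than a fixed orientation of probability $1/4$, and you treat the No-case membership in $\Lno^\rho$ distributionally). One bookkeeping caution: state the exact-uniformity claim \emph{before} conditioning on the expansion event (as the paper does, charging the $O(1/N)$ failure probability separately), since conditioning on expansion tilts the distribution slightly away from uniform over $\Lyes^\rho$; your error budget easily absorbs this.
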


Assuming \Cref{lem:bqpred} for now, we complete the proof of \Cref{thm:qcma_lower_bound}. Using that $|A \cup B| = O(Q \, \poly(T,\log N))$, \Cref{lem:bqpred} implies the existence of an algorithm $M$ that makes $T'$ queries for
\[
    T' = T + O(Q \, \poly(T,\log N))
\]
and distinguishes between the uniform distributions over $\Lyes$ and $\Lno$ with bias at least
\[
    \frac{1}{100} - O \Big( \frac{Q^2 \, \poly(T,\log N)}{N}\Big)~.
\]
If both $Q,T = N^{o(1)}$ then the bias is at least $\frac{1}{200} > \frac{1}{1000}$ for sufficiently large $N$, and furthermore the number of queries $T'$ is $N^{o(1)}$. This contradicts the lower bound from \Cref{lem:modified_acl}. Thus there cannot be a $\QCMA$ verifier for the $\Components$ problem with query complexity $T$ and witness length $Q$. \\

\medskip

We now prove \Cref{lem:bqpred}. The main intuition for this reduction is that knowing the graph fixing data $\rho$ is not helpful in distinguishing between the Yes and No cases. Given an algorithm $V_*$ for distinguishing between $\Lyes^\rho$ and $\Lno^\rho$, we design an algorithm $M$ that, given a graph oracle $G$, ``plants'' the edges specified by $\rho$ into $G$ to obtain either an instance $G'$ of $\Lyes^\rho$ (if $G\in \Lyes$) or $\Lno^\rho$ (if $G \in \Lno$), and then runs $V_*$ with the oracle $G'$. The planting procedure is both query efficient and also maps the uniform distribution over $\Lyes$ to $\Lyes^\rho$.

We present the algorithm in \Cref{fig:m}. We use the following notation. Let $\cal{C} \subseteq [N] \times [N] \times [R]$, and let $\pi:[N] \to [N]$ be a permutation. Define the function
\[
    G_{\cal{C},\pi}(r,u) = \begin{cases} 
        v & \text{ if $(u,v,r) \in \cal{C}$ or $(v,u,r) \in \cal{C}$} \\
        \pi^{-1}(G(r,\pi(u))) & \text{ otherwise}
    \end{cases}~.
\]
By construction $G_{\cal{C},\pi}$ is a graph oracle.  Suppose that $G_{\cal{C},\pi}(r,u) = v$. In one case, this means that $(u,v,r) \in \cal{C}$ or $(v,u,r) \in \cal{C}$, in which case $G_{\cal{C},\pi}(r,v) = u$. In the other case,  we have $\pi(v) = G(r,\pi(u))$, in which case $\pi(u) = G(r,\pi(v))$ because $G$ is a graph oracle.

\begin{figure}
\begin{longfbox}[breakable=false, padding=1em, margin-top=1em, margin-bottom=1em]
\begin{algorithm}\label{algo-plant} Algorithm $M$ to distinguish uniform distribution over $\Lyes$ from $\Lno$.
\end{algorithm}

Let $\rho = (A,B,H)$, and set $L = O \Big(\frac{\log N (|A| + |B|) }{\log 1/\Delta} \Big)$ such that $N (|A| + |B|) \Delta^L \leq  \frac{1}{100}$. \\

\texttt{/* Perform random walks to select vertices */} 
\begin{enumerate}
    
    \item Sample a uniformly random vertex $v_A \in [N]$ and perform a random walk $W$ of length $|A| \cdot L$ starting at $v_A$ according to the graph oracle. Let $E = (e_1,\ldots,e_{|A|})$ where $e_a$ denotes the $aL$'th vertex on the walk $W$. 
    \item Sample a uniformly random vertex $v_B \in [N]$ and perform another random walk $W$ of length $|B| \cdot L$ starting at $v_B$ according to the graph oracle. Let $F  = (f_1,\ldots,f_{|B|})$ where $f_b$ denotes the $bL$'th vertex on the walk $W$.
\end{enumerate}

\texttt{/* Sample a random relabeling of vertices that maps $(A,B)$ to $(E,F)$ */}
\begin{enumerate}
\setcounter{enumi}{2}
    \item Sample a uniformly random permutation $\pi$ on $[N]$ conditioned on $\pi(a_i) = e_i$ and $\pi(b_i) = f_i$ where $(a_i)_i$ and $(b_i)_i$ is some ordering of $A,B$.
\end{enumerate}

\texttt{/* Plant edges on $E,F$ according to $\rho$ */}
\begin{enumerate}
\setcounter{enumi}{3}
    \item $\cal{C} \leftarrow \mathrm{Reconnect}_\rho(\pi,E,F)$.
\end{enumerate}

\texttt{/* Simulate the verifier $V_*$ with the planted oracle */} 
\begin{enumerate}
\setcounter{enumi}{4}
    \item Simulate $V_*$. Whenever $V_*$ makes a query $(r,u)$ to the graph oracle $G$, return $G_{\cal{C},\pi}(r,u)$ to $V_*$.  When $V_*$ terminates, return its output.
    
\end{enumerate}

\vspace{10pt}

\begin{algorithm}\label{algo-reconnect} $\mathrm{Reconnect}_\rho(\pi,E,F)$.
\end{algorithm}
\begin{enumerate}
    \item $\cal{C} \leftarrow \emptyset$.
    \item For all $(u,v,r) \in H$: \qquad \qquad \texttt{/* Plant edge and reconnect */} 
    \begin{enumerate}
        \setcounter{enumii}{1}
        \item Query $x \leftarrow G_{\cal{C},\pi}(r,u)$ and $y \leftarrow G_{\cal{C},\pi}(r,v)$.
        \item $\cal{C} \leftarrow \cal{C} \cup \{ (u,v,r), (x,y,r) \}$.
    \end{enumerate}
    \item Output $\cal{C}$.
\end{enumerate}
\end{longfbox}
\caption{The algorithm $M$ in the proof of \Cref{lem:bqpred}.}
\label{fig:m}
\end{figure}

The random walks in Steps 1 and 2 are performed as follows. Starting a vertex $v_0$, the algorithm $M$ samples a random color $r \in [R]$ and queries the graph oracle $G$ for the next vertex $v_1 = G(r,v_0)$. Then with probability $1/2$ it stays at $v_0$, and otherwise it moves to $v_1$. (This is also known as a \emph{lazy random walk} in the graph theory literature). It then continues for the requisite number of steps. Finally, the algorithm obtains a sequence of vertices $W = (v_0,v_1,\ldots)$, and takes the vertices that are spaced every $L$ steps to get the set $E$. The algorithm performs another random walk to get the set $F$. The algorithm hopes that if the graph oracle $G$ is a Yes instance, then the second random walk starts in a different component than the first walk. This occurs with probability $1/2$, which is sufficient for our reduction to go through.

The algorithm chooses a random relabeling $\pi$ that maps the fixed vertices specified in $A$ (which is part of the graph fixing data $\rho$) to $E$ and the fixed vertices in $B$ to $F$. 

Next, the algorithm will ``plant'' the fixed edges according to $\rho$ using the $\mathrm{Reconnect}$ subroutine. For each edge-to-be-fixed $(u,v,r) \in H$, the algorithm queries the modified oracle $G_{\cal{C},\pi}$ (which has the previous edges planted already) to get the vertices $x,y$ that are matched to $u,v$, respectively, with color $r$. Then algorithm then adds the tuples $(u,v,r)$ and $(x,y,r)$ to the ``change list'' $\cal{C}$, which records that fact that $u$ is supposed to be connected to $v$, and $x$ is supposed to be connected to $y$. 

After constructing the change list $\cal{C}$, the algorithm $M$ simulates the algorithm $V_*$. Whenever $V_*$ queries $G(r,x)$, the algorithm $M$ queries the modified oracle $G_{\cal{C},\pi}(r,x)$.  

\paragraph{Analyzing the algorithm $M$.} The next claim establishes that $G_{\cal{C},\pi}$ belongs to either $\Lyes^\rho$ or $\Lno^\rho$, provided that the randomly sampled vertices $E,F$ satisfy some conditions.

\begin{claim}
\label{clm:new-G}
    Suppose $G \in \Lyes$, and let $E,F$ sampled in Steps 1 and 2 of the algorithm $M$ be such that they consist of distinct vertices and belong to separate components. Then for all $\pi$ and the change list $\cal{C} = \mathrm{Reconnect}_\rho(\pi,E,F)$, the graph oracle $G_{\cal{C},\pi}$ belongs to $ \Lyes^\rho$. On the other hand, if $G \in \Lno$, then $G_{\cal{C},\pi} \in \Lno^\rho$ for all $\pi$ and all change lists $\cal{C}$.
\end{claim}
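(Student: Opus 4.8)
The plan is to verify both parts by direct structural analysis of the modified graph oracle $G_{\cal{C},\pi}$ defined via the change list $\cal{C}$ produced by $\mathrm{Reconnect}_\rho(\pi,E,F)$. First I would argue that $G_{\cal{C},\pi}$ is a valid graph oracle: the discussion preceding the claim already establishes that $G_{\cal{C},\pi}(r,u)=v$ implies $G_{\cal{C},\pi}(r,v)=u$, so I only need to check that for each color $r$, the map $u \mapsto G_{\cal{C},\pi}(r,u)$ is a perfect matching (i.e.\ a fixed-point-free involution on $[N]$). The potential failure mode is that planting the edges in $H$ creates a vertex of ``degree two'' in some color $r$ — but this is exactly why $\mathrm{Reconnect}$ adds the compensating pair $(x,y,r)$ whenever it adds $(u,v,r)$: removing edges $\{u,x\}$ and $\{v,y\}$ and adding $\{u,v\}$ and $\{x,y\}$ is a local ``rewiring'' that preserves the matching property. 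I would make this precise by inducting on the edges of $H$ processed in $\mathrm{Reconnect}$, maintaining the invariant that after each iteration $G_{\cal{C},\pi}(r,\cdot)$ is a perfect matching for every $r$. The assumption (built into the graph fixing data) that $A,B,E,F$ are disjoint from the endpoints of edges in $H$ as appropriate, together with $E,F$ consisting of distinct vertices, is what ensures these local rewirings don't interfere with each other.

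Next, for the $\Lyes$ case, I would verify the two membership conditions for $\Lyes^\rho$. The edge condition — $G_{\cal{C},\pi}(r,u)=v$ for all $(u,v,r)\in H$ — is immediate from the definition of $G_{\cal{C},\pi}$ and the fact that $\mathrm{Reconnect}$ places every $(u,v,r)\in H$ into $\cal{C}$. For the component condition, I would argue that the rewiring operation never merges the two components of $G$: since $\pi$ maps $A$ into $E$ and $B$ into $F$, and $E$ lies in one component of $G$ while $F$ lies in the other (by hypothesis), every edge planted from $H$ — which by the structure of $\rho$ connects vertices within $A\times A$ or within $B\times B$ — gets mapped under $\pi$ to an edge within a single component. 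The compensating edges $(x,y,r)$ stay within that same component too, because $x=G_{\cal{C},\pi}(r,u)$ is obtained by following an edge from $u$, which stays in $u$'s component. Hence $G_{\cal{C},\pi}$ still has (at least) two components with $A$ in one and $B$ in the other, giving $G_{\cal{C},\pi}\in\Lyes^\rho$. The $\Lno$ case is simpler: $\Lno^\rho$ only imposes the edge condition (no component constraint), which holds for the same reason as above, and one checks the rewiring of a connected graph still yields a graph oracle — though here one must be slightly careful that we don't need it to remain an expander, since $\Lno^\rho$ as defined only requires membership in $\Lno$ plus the planted edges; I would double-check whether the intended definition of $\Lno^\rho$ secretly needs the expander property preserved, and if so invoke that the constant number of edge changes can only perturb the spectral gap by $O(k/N)$, or else note that $\Lno^\rho$ should be read as ``graphs in $\Lno$ consistent with the edge data,'' which is automatically satisfied.

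The main obstacle I anticipate is the bookkeeping in the inductive rewiring argument: when $\mathrm{Reconnect}$ processes edges of $H$ sequentially, a later edge $(u',v',r)$ might have an endpoint that coincides with a vertex $x$ or $y$ introduced as a compensating endpoint in an earlier iteration, and I need to confirm the invariant survives. This should be fine because the compensating edges are added to $\cal{C}$ and hence $G_{\cal{C},\pi}$ is updated before the next iteration's queries, so $x\leftarrow G_{\cal{C},\pi}(r,u')$ already reflects all prior rewirings — but writing this cleanly requires care about the order of operations and about colors: rewirings in different colors $r$ are completely independent, so it suffices to fix a color and track only the edges of $H$ with that color. I would organize the proof by first reducing to a single color, then stating the single-rewiring lemma (replacing $\{u,x\},\{v,y\}$ with $\{u,v\},\{x,y\}$ in a perfect matching yields a perfect matching, and does not decrease the number of connected components of the underlying graph), and finally composing these. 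I expect the whole argument to be a page or so of careful but routine verification once the single-rewiring lemma is isolated.
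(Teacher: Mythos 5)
Your proposal is correct and follows essentially the same route as the paper: the paper's (much terser) proof simply notes that every $(u,v,r)\in H$ is placed into $\cal{C}$ by $\mathrm{Reconnect}_\rho$, so the fixed-edge condition of $\Lyes^\rho$/$\Lno^\rho$ holds, and that in the Yes case $A=\pi^{-1}(E)$ and $B=\pi^{-1}(F)$ lie in different components since $E,F$ do; the matching/symmetry property of $G_{\cal{C},\pi}$ is checked in a single sentence before the claim rather than via your single-rewiring induction. The subtlety you flag in the No case (that $\Lno^\rho\subseteq\Lno$ formally requires the rewired graph to still be a $\Delta$-expander) is a genuine point, but the paper simply asserts membership without addressing it, so your treatment is if anything more careful than the original.
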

\begin{proof}
    Suppose that $E,F$ consist of distinct vertices. Then regardless of whether $G \in \Lyes$ or $\Lno$, we have the following: for all relabelings $\pi$, for all $(u,v,r)$, we have that $(\pi(u),\pi(v),r) \in \cal{C}$ by definition of the $\mathrm{Reconnect}$ subroutine. Thus $G_{\cal{C},\pi}(r,u) = \pi^{-1}(\pi(v)) = v$ and similarly $G_{\cal{C},\pi}(r,v) = u$. 

    When $G \in \Lyes$ and $E,F$ belong to different components, this means that the vertices $A = \pi^{-1}(E)$ and $B = \pi^{-1}(F)$ belong to different components. Thus $G_{\cal{C},\pi} \in \Lyes^\rho$. When $G \in \Lno$, we have that $G_{\cal{C},\pi} \in \Lno^\rho$.
\end{proof}

Now we argue that the \emph{distribution} of $G_{\cal{C},\pi}$ induced by a uniformly random $G \sim \Lyes$ (conditioned on $E,F$ satisfying the conditions of \Cref{clm:new-G}) is close to uniform over $\Lyes^\rho$, respectively. Consider a modified algorithm $\widetilde{M}_{E,F}$ that is identical to $M$ except it doesn't perform Steps 1 and 2 of $M$ (i.e., the random walks), and instead takes the vertex sequences $E,F$ as input. 
\begin{lemma}[Random walks are close to uniform]
\label{lem:closeness}
The following bounds hold.
\begin{itemize}
    \item Let $G \in \Lyes$ have partition $S \cup \overline{S}$ into two connected components, and suppose that the subgraphs induced on the components are each $\Delta$-expanding. Then 
    \[
        \Big | \E_{E \sim S,F \sim \overline{S}} \Pr [ \widetilde{M}^G_{E,F} = 1] - \Pr[M^G = 1] \Big| \leq \frac{3}{4} + \frac{1}{100} + \frac{2|A \cup B|^2}{N}
    \]
    where $E$ is a sequence of $|A|$ vertices sampled without replacement at random from $S$, and $F$ is a sequence of $|B|$ vertices sampled without replacement at random from $\overline{S}$. 
    \item Let $G \in \Lno$ be such that the corresponding graph be a $\Delta$-expander. Then 
    \[
        \Big | \E_{E, F \sim [N]} \Pr [ \widetilde{M}^G_{E,F} = 1] - \Pr[M^G = 1] \Big| \leq \frac{1}{100} + \frac{2|A \cup B|^2}{N}
    \]
    where $E$ is a sequence of $|A|$ vertices sampled without replacement at random from $[N]$, and $F$ is a sequence of $|B|$ vertices sampled without replacement at random from $[N]\setminus E$. 
\end{itemize}
\end{lemma}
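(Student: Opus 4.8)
The plan is to exploit the fact that $M$ and $\widetilde{M}_{E,F}$ differ \emph{only} in how the vertex sequences $E,F$ are produced: $M$ generates them via the random walks in Steps~1--2, while $\widetilde{M}_{E,F}$ receives them as input and then runs the identical Steps~3--5 (sampling $\pi$, running $\mathrm{Reconnect}_\rho$, simulating $V_*$). Hence, for any fixed graph oracle $G$, if $\cal{D}_G$ denotes the distribution of $(E,F)$ produced by Steps~1--2 of $M^G$, then $\Pr[M^G=1] = \E_{(E,F)\sim\cal{D}_G}\Pr[\widetilde{M}^G_{E,F}=1]$, and the quantity on the right of each bound equals $\E_{(E,F)\sim\cal{D}'}\Pr[\widetilde{M}^G_{E,F}=1]$ where $\cal{D}'$ is the stated target distribution on $(E,F)$. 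Since $h(E,F) := \Pr[\widetilde{M}^G_{E,F}=1]$ is $[0,1]$-valued, $|\E_{\cal{D}}[h] - \E_{\cal{D}'}[h]| \le \|\cal{D} - \cal{D}'\|_{\mathrm{TV}}$, so it suffices to bound a total variation distance between $\cal{D}_G$ (or an appropriate conditioning of it, in the Yes case) and $\cal{D}'$.

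For the No case, $G$ induces a single $R$-regular $\Delta$-expander on $[N]$, so a lazy random walk from a uniformly random vertex mixes: \Cref{cor:tlazyrw} gives that the subsequence $E$ of every $L$-th vertex is $|A|\,N\,\Delta^{L}$-close to uniform on $[N]^{|A|}$, and likewise $F$ is $|B|\,N\,\Delta^{L}$-close to uniform on $[N]^{|B|}$; the two walks use fresh independent randomness, so $\cal{D}_G$ is $(|A|+|B|)\,N\,\Delta^{L} \le \tfrac{1}{100}$-close to the product of these with-replacement uniform distributions, by the choice of $L$. A with-replacement sample of $|A|+|B|$ vertices from $[N]$ is collision-free except with probability at most $\binom{|A|+|B|}{2}/N \le |A\cup B|^2/(2N)$, and conditioned on being collision-free it is exactly a uniform injective sequence, i.e.\ $\cal{D}'$; summing the two errors gives the claimed bound.

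For the Yes case, let $\cal{E}$ be the event that the first walk starts at some $v_A \in S$ and the second at some $v_B \in \overline{S}$; since $|S|=|\overline{S}|=N/2$ and $v_A,v_B$ are independent and uniform, $\Pr[\cal{E}]=\tfrac14$. Conditioned on $\cal{E}$, the first walk never leaves $S$ (no edge of $G$ crosses the partition), so it is a lazy random walk on the induced subgraph $G[S]$, which is $R$-regular, $\Delta$-expanding (hence connected), and has $N/2$ vertices; \Cref{cor:tlazyrw} then makes $E$ within $|A|\,(N/2)\,\Delta^{L}$ of uniform on $S^{|A|}$, and symmetrically $F$ within $|B|\,(N/2)\,\Delta^{L}$ of uniform on $\overline{S}^{|B|}$. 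Converting with- to without-replacement sampling as before (the relevant collision probability is at most $|A|^2/N + |B|^2/N \le |A\cup B|^2/N$, using $A\cap B = \emptyset$ and $|S|=|\overline{S}|=N/2$), the distribution of $(E,F)$ conditioned on $\cal{E}$ is within $\tfrac{1}{100} + 2|A\cup B|^2/N$ of $\cal{D}'$. Finally, writing $\Pr[M^G=1] = \tfrac14\,\E_{(E,F)}[\,h \mid \cal{E}\,] + \tfrac34\,\E_{(E,F)}[\,h \mid \neg\cal{E}\,]$ with both conditional expectations in $[0,1]$, we change $\Pr[M^G=1]$ into $\E_{(E,F)}[\,h\mid\cal{E}\,]$ at a cost of at most $\tfrac34$; combining with the previous estimate via the triangle inequality yields the stated bound.

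The main obstacle is exactly the Yes case and the $\tfrac34$ loss it incurs: the argument only controls $M$ on the event $\cal{E}$ that the two random walks happen to land in the two components in the right order, and on $\neg\cal{E}$ there is no control, so one simply discards it. The remaining ingredients --- that a walk confined to a connected component is precisely the expander walk on the induced subgraph, so \Cref{cor:tlazyrw} applies, and the routine conversion between with- and without-replacement sampling --- are straightforward. (This $\tfrac34$ slack is precisely why \Cref{lem:bqpred} is stated with hypothesis $\E_{G\sim\Lyes^{\rho}}\Pr[V_*^G=1] \ge \tfrac34 + 2\eps$.)
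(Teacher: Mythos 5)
Your proposal is correct and follows essentially the same route as the paper's proof: condition on the event that the two walk starting points land in $S$ and $\overline{S}$ respectively (paying the $\tfrac34$ for the complement via the triangle inequality), apply \Cref{cor:tlazyrw} to each component to get closeness of $E,F$ to with-replacement uniform samples, and then convert to without-replacement sampling at a cost bounded by the collision probability. The only differences are cosmetic (you invoke the corollary with $N/2$ vertices and treat the two No-case walks as independent rather than as one long walk), and both yield the stated bounds.
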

For now we defer the proof of this lemma and proceed. By construction, for fixed graph oracle $G$, for all sequences $E,F$, 
\[
    \Pr[\widetilde{M}^G_{E,F} = 1] = \E_{\pi: \pi(A) = E, \pi(B) = F} \Pr[V_*^{G_{\cal{C},\pi}} = 1]
\]
where the probabilities are over the randomness of the algorithm $M$. 

Observe that a uniformly random $G \sim \Lyes$ has two connected components on $N/2$ vertices that are themselves unions of $R$-random matchings. Thus by \Cref{lem:random-graphs}, both components fail to be $\Delta$-expanding with probability at most $\frac{4}{N}$. Combined with the first item of \Cref{lem:closeness}, we have
\begin{align}
    \E_{G \sim \Lyes} \Pr[M^G = 1] &\geq \E_{G \sim \Lyes} \E_{\substack{E \sim S \\ F \sim \overline{S}}} \E_{\substack{\pi: \\ \pi(A) = E, \\ \pi(B) = F}} \Pr[V_*^{G_{\cal{C},\pi}} = 1] - \frac{3}{4} - \frac{1}{100} - \frac{4}{N} - \frac{2|A \cup B|^2}{N}~.
    \label{eq:bqpred-1}
\end{align}
But then 
\begin{align}
   \E_{G \sim \Lyes} \E_{\substack{E \sim S \\ F \sim \overline{S}}} \E_{\substack{\pi: \\ \pi(A) = E, \\ \pi(B) = F}} \Pr[V_*^{G_{\cal{C},\pi}} = 1] = \E_{(E,F) \sim \mu} \, \E_{G \sim \Lyes(E,F)}  \E_{\substack{\pi: \\ \pi(A) = E, \\ \pi(B) = F}} \Pr[V_*^{G_{\cal{C},\pi}} = 1]
   \label{eq:bqpred-2}
\end{align}
where $\Lyes(E,F) \subseteq \Lyes$ is the subset of Yes instances such that one component has vertices $E$ and the other component has vertices $F$. The distribution $\mu$ is chosen such that sampling $(E,F) \sim \mu$, and then a uniformly random $G \sim \Lyes(E,F)$, yields a uniformly random $G \sim \Lyes$. We now state a lemma (to be proved later) establishing that $G_{\cal{C},\pi}$ is uniformly distributed in $\Lyes^\rho$.
\begin{lemma}[Reconnect preserves the uniform distribution]
\label{lem:metagraph}
Consider the distribution over $G_{\cal{C},\pi}$ induced as follows: first sample $(E,F) \sim \mu$, then sample $G \sim \Lyes(E,F)$, then sample $\pi$ such that $\pi(A) = E,\pi(B) = F$, and then compute $\cal{C} = \mathrm{Reconnect}_\rho(\pi,E,F)$. This distribution is exactly equal to the uniform distribution over $\Lyes^\rho$. 
\end{lemma}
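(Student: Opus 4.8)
The plan is to peel off the random relabeling $\pi$, reducing the lemma to a purely combinatorial statement about the $\mathrm{Reconnect}$ operation, and then to show that this operation is an equal‑fiber surjection (this is the ``metagraph'' argument alluded to in the lemma name, in the spirit of Natarajan and Nirkhe~\cite{natarajan2023distribution}). First I would observe that the oracle $G_{\cal{C},\pi}$ that $V_*$ interacts with depends on $G$ and $\pi$ only through the conjugated oracle $G^\pi$ defined by $G^\pi(r,w) := \pi^{-1}(G(r,\pi(w)))$ (equivalently, $G$ with its vertices relabeled by $\pi^{-1}$): indeed, $\mathrm{Reconnect}_\rho$ only ever references the vertices $u,v$ appearing in the edge set $H$ of $\rho$, which live in the ``un‑relabeled'' world, so the change list $\cal{C}$ — and hence $G_{\cal{C},\pi}$ — is a deterministic function of $G^\pi$ and $\rho$ alone; write $\Theta_\rho$ for the induced map $G^\pi \mapsto G_{\cal{C},\pi}$. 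Next, for any disjoint tuples $E,F$ of the appropriate sizes and any relabeling $\pi$ with $\pi(A)=E$, $\pi(B)=F$, the map $G \mapsto G^\pi$ is a bijection from $\Lyes(E,F)$ onto $\Lyes(A,B) := \{\, G \in \Lyes : A \text{ and } B \text{ lie in distinct components of } G \,\}$, with inverse the relabeling by $\pi$ (which preserves component structure up to renaming). Consequently, conditioned on any $(E,F)$ in the support of $\mu$, the conjugate $G^\pi$ is uniform over $\Lyes(A,B)$, hence it is uniform over $\Lyes(A,B)$ unconditionally. Thus the lemma reduces to showing that $\Theta_\rho$ pushes the uniform distribution on $\Lyes(A,B)$ forward to the uniform distribution on $\Lyes^\rho$.

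\textbf{Showing $\Theta_\rho$ is an equal‑fiber surjection.} I would first invoke \Cref{clm:new-G} — whose hypotheses ($E,F$ distinct and in separate components) hold whenever $G \in \Lyes(E,F)$ — to conclude that $\Theta_\rho$ maps $\Lyes(A,B)$ into $\Lyes^\rho$. It then suffices to show $\Theta_\rho : \Lyes(A,B) \to \Lyes^\rho$ is surjective and that every fiber $\Theta_\rho^{-1}(G^*)$ has the same cardinality, since an equal‑fiber surjection always sends uniform to uniform. I would analyze $\mathrm{Reconnect}$ one edge of $H$ at a time: the step for $(u,v,r)$ replaces the two color‑$r$ edges currently incident to $u$ and to $v$ — say $\{u,x\}$ and $\{v,y\}$, which lie in the same component since $\rho$ is normalized so that both endpoints of each fixed edge lie on the same side — with the pair $\{u,v\}$ and $\{x,y\}$. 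Running this backwards, reconstructing a preimage of a given $G^* \in \Lyes^\rho$ amounts to choosing, for each $h \in H$ processed in reverse order, a color‑$r$ edge of the current graph in the relevant component together with an orientation; I would argue that distinct legitimate choice‑sequences recover distinct preimages, that every preimage arises this way, and — the crucial point — that the number of choice‑sequences that are legitimate (correspond to genuine un‑swaps and land back inside $\Lyes(A,B)$) is the same for every $G^*$. This uses that every component of any graph in $\Lyes(A,B)$ contains exactly $N/4$ edges of each color, together with the normalization of $\rho$ which makes $H$ a proper partial edge‑coloring. Hence $|\Theta_\rho^{-1}(G^*)|$ is a graph‑independent constant, and the reduced statement follows.

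\textbf{Main obstacle.} The delicate part is the bookkeeping in the previous step: a single $\mathrm{Reconnect}$ step can interfere with a later one — the partner $x$ of a fixed vertex $u$ may itself be another fixed vertex, or an edge freshly created by one step may be consumed by a later step — so the reverse process does \emph{not} have a uniform branching factor, and some reverse‑branches are dead ends or exit $\Lyes(A,B)$. The heart of the argument is that the number of failing branches is itself independent of $G^*$, so that the count of surviving ones is a constant; establishing this cleanly requires the normalization properties of $\rho$ (the fixed edges form a partial matching in each color, and $A,B$ are consistent with a bipartition of the free vertices) recorded in \Cref{sec:qcma-lower-bound}, together with a short case analysis of the ways the fixed vertices and edges of $\rho$ can collide.
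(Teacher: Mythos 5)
Your first reduction (peeling off the relabeling: $G_{\cal{C},\pi}$ depends on $(G,\pi)$ only through $G^\pi$, and for each fixed admissible $\pi$ the map $G \mapsto G^\pi$ is a bijection from $\Lyes(E,F)$ onto $\Lyes(A,B)$, so the input to the planting procedure is uniform over $\Lyes(A,B)$) is correct and matches what the paper does implicitly. The problem is the second half. You correctly identify that everything hinges on showing the planting map $\Theta_\rho : \Lyes(A,B) \to \Lyes^\rho$ is an equal-fiber surjection, but your proposed route — running all of $\mathrm{Reconnect}$ backwards and counting global choice-sequences for the whole of $H$ at once — is exactly where you stop: your ``Main obstacle'' paragraph concedes that the reverse process has a non-uniform branching factor, that branches can be dead ends or leave $\Lyes(A,B)$, and that interference between steps (a partner $x$ being another fixed vertex, or a freshly planted edge being consumed later) must be handled, and then simply asserts that ``the number of failing branches is itself independent of $G^*$.'' That assertion is the entire content of the lemma; nothing in the proposal establishes it, and the facts you cite (each component has $N/4$ edges per color, $H$ is a partial matching per color) do not by themselves control the collisions you list. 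As written, the proof has a genuine gap at its crucial point.

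The paper avoids this global bookkeeping by an edge-by-edge induction, which is the idea you are missing. Let $\rho_i$ be $\rho$ restricted to the first $i$ edges of $H$ and $\cal{C}_i$ the change list after $i$ iterations; one first checks (the paper's \Cref{clm:rec-ind}) that the $(i{+}1)$st iteration acts on the already-planted oracle $G_{\cal{C}_i,\pi}$ as a single-edge operation $\widetilde G \mapsto \widetilde G_{\cal{D}_{i+1},e}$. Then one shows each single step $\varphi : \Lyes^{\rho_i} \to \Lyes^{\rho_{i+1}}$ is regular: a preimage of a fixed $G_{i+1}$ is determined by the choice of the old color-$r$ partner $x$ of $u$ (then $y = G_{i+1}(r,x)$ is forced, and one un-swaps $(u,v),(x,y)$ back to $(u,x),(v,y)$), and the number of admissible $x$ is the component size minus the vertices already committed to color-$r$ edges of $\rho_i$ — a quantity depending only on $\rho_{i+1}$, not on $G_{i+1}$. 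Since each step maps the uniform distribution on $\Lyes^{\rho_i}$ to the uniform distribution on $\Lyes^{\rho_{i+1}}$, chaining through $\Lyes(A,B) = \Lyes^{\rho_0} \supset \cdots \supset \Lyes^{\rho_{|H|}} = \Lyes^\rho$ finishes the proof. In this incremental picture the interferences you worried about never arise: at step $i{+}1$ you only count partners $x$ consistent with $\Lyes^{\rho_i}$, so ``edges created earlier and consumed later'' and ``partner is itself a fixed vertex'' are absorbed into the single constraint that $x$ not be $\rho_i$-fixed in color $r$. If you want to salvage your global formulation, you should prove the one-step regularity first and then observe that a composition of regular surjections is regular; attempting the fiber count of the full composition directly, as your proposal does, is substantially harder and is not carried out.
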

\Cref{lem:metagraph} combined with equations~\eqref{eq:bqpred-1} and~\eqref{eq:bqpred-2} implies that
\[
  \E_{G \sim \Lyes} \Pr[M^G = 1]  \geq \E_{G \sim \Lyes^\rho}   \Pr[V_*^{G} = 1] - \frac{3}{4} - \frac{1}{100} - \frac{6|A \cup B|^2}{N} \geq 2\eps - \frac{1}{100} - \frac{6|A \cup B|^2}{N}. 
\]
where we used the assumption about the behavior of $V_*$ on $\Lyes^\rho$ from the statement of \Cref{lem:bqpred}. 

Similarly, by the second item of \Cref{lem:closeness}, by \Cref{clm:new-G}, and the fact that $G$ is not $\Delta$-expanding with probability at most $\frac{1}{N}$, we have
\begin{align}
    \E_{G \sim \Lno} \Pr[M^G = 1] &\leq \E_{G \sim \Lno} \E_{E,F \sim [N]} \E_{\substack{\pi: \\ \pi(A) = E, \\ \pi(B) = F}} \Pr[V_*^{G_{\cal{C},\pi}} = 1] +  \frac{1}{100}+\frac{1}{N} + \frac{2|A \cup B|^2}{N}  \notag \\
    &\leq \eps + \frac{1}{100}+\frac{1}{N} + \frac{2|A \cup B|^2}{N}~.
    \label{eq:bqpred-3}
\end{align}
where we used the fact that $V_*$ rejects every instance from $\Lno^\rho$. 

Putting everything together we get
\[
    \Big | \E_{G \sim \Lyes} \Pr[M^G = 1] - \E_{G \sim \Lno} \Pr[M^G = 1]\Big |\geq \eps - \frac{2}{100} - \frac{9|A \cup B|^2}{N}~.
\]
as desired. 

We finish the proof of \Cref{lem:bqpred} by presenting the proofs of \Cref{lem:metagraph,lem:closeness}. 

\begin{proof}[Proof of \Cref{lem:closeness}]
We prove the first item. Let $G \in \Lyes$. Consider the random walk in the algorithm $M$ (i.e., Steps 1 and 2). Let $J$ be the event that $v_A \in S$ and $v_B \in \Bar{S}$, which occurs with probability $1/4$. Observe that
\[
    \Pr[M^G = 1] =  \frac{1}{4}\Pr[M^G \mid J] + \frac{3}{4} \Pr[M^G \mid \neg J]~.
\]
Therefore by triangle inequality 
\begin{align*}
    &\Big | \E_{E \sim S,F \sim \overline{S}} \Pr [ \widetilde{M}^G_{E,F} = 1] - \Pr[M^G = 1] \Big|\\
    &\leq  \frac{1}{4}\Big | \E_{E \sim S,F \sim \overline{S}} \Pr [ \widetilde{M}^G_{E,F} = 1] - \Pr[M^G = 1\mid J]\Big|+ \frac{3}{4}\Big | \E_{E \sim S,F \sim \overline{S}} \Pr [ \widetilde{M}^G_{E,F} = 1] - \Pr[M^G = 1 \mid \neg J]\Big|\\
    &\leq \Big | \E_{E \sim S,F \sim \overline{S}} \Pr [ \widetilde{M}^G_{E,F} = 1] - \Pr[M^G = 1|J]\Big|+ \frac{3}{4}
    \end{align*}

We bound the first term. Since the algorithm $M$ can be viewed as first sampling $E,F$ according to the random walk and then running $\tilde{M}_{E,F}$, it is sufficient to show that the distribution of $E,F$ generated using random walks on $S$ and $\bar{S}$ is close to the distribution generated by sampling $E$ without replacement at random from $S$, and sampling $F$ without replacement at random from $\bar{S}$.

We first consider the random walk to generate $E$. Conditioned on event $J$, the starting point of the walk is contained in $S$ and therefore all the vertices of $E$ are solely contained in $S$. 
Using \Cref{cor:tlazyrw} with $K=|A|$ and $t=L$ and the subgraph induced on $S$ (which is assumed to be $\Delta$-expanding), the distribution of $E$ generated by the random walk is at most $|A| \cdot N \cdot \Delta^L$-close to the uniform distribution over $S^{|A|}$ (i.e., every vertex in $E$ is a independent, random vertex in $S$). This uniform distribution is in turn at most $2|A|^2/N$-close to sampling $E$ from $S$ without replacement. Similarly, the distribution of $F$ (conditioned on the event $J$) is $(|B| \cdot N \cdot \Delta^L + 2|B|^2/N)$-close to sampling $F$ from $\overline{S}$ without replacement. 

Putting both bounds together we get that
\[
  \Big | \E_{E \sim S,F \sim \overline{S}} \Pr [ \widetilde{M}^G_{E,F} = 1] - \Pr[M^G = 1|J]\Big| \leq (|A| + |B|)N\Delta^L + \frac{2(|A|^2 + |B|^2)}{N} \leq \frac{1}{100} +  \frac{2|A \cup B|^2}{N} 
\]
where we used the setting of $L$ from the description of $M$ (\Cref{algo-plant}). This concludes the proof of the first item.

The proof of the second item is similar. In the case that $G \in \Lno$, there is no partition between $S$ and $\Bar{S}$; the two random walks occur in the same component and can be viewed as one long random walk of length $(|A| + |B|)L$. By \Cref{cor:tlazyrw} once more we get that the distribution of $E,F$ in the random walk is $(N(|A| + |B|)\Delta^L + (|A|+|B|)^2/N)$-close to sampling $E,F$ from $[N]$ without replacement. Thus 
\[
        \Big | \E_{E, F \sim [N]} \Pr [ \widetilde{M}^G_{E,F} = 1] - \Pr[M^G = 1] \Big| \leq 
        \frac{1}{100} + \frac{2|A \cup B|^2}{N}
    \]
    as desired.
\end{proof}

\begin{proof}[Proof of \Cref{lem:metagraph}]
    We analyze the $\mathrm{Reconnect}_\rho$ procedure iteratively. Let $\rho = (A,B,H)$ be the graph fixing data. Let $\cal{C}_i$ denote the change list after processing the $i$'th edge. Let $\rho_i$ denote the graph fixing data $\rho$ restricted to the first $i$ edges of $H$. We also define $\cal{C}_0 = \emptyset$ and $\rho_0 = (A,B,\emptyset)$. Note that we have the following chain of inclusions:
    \[
        \Lyes(A,B) = \Lyes^{\rho_0} \supset \Lyes^{\rho_1} \supset \cdots \supset \Lyes^{\rho_{|H|}} = \Lyes^\rho
    \]
    Since $G \in \Lyes(E,F)$, after relabeling using $\pi$ we have that $G_{\cal{C}_0,\pi} \in \Lyes(A,B)$. By construction, after the $i$'th step of $\mathrm{Reconnect}_\rho$, the graph oracle $G_{\cal{C}_i,\pi}$
    belongs to $\Lyes^{\rho_i}$. 

    We first make the following simple observation about each iteration of the $\mathrm{Reconnect}$ procedure.
    \begin{claim}
    \label{clm:rec-ind}
        Let $\cal{D}_{i+1} = \cal{C}_{i+1} \setminus \cal{C}_i$. Then
       $G_{\cal{C}_{i+1},\pi} = \widetilde{G}_{\cal{D}_{i+1},e}$ where $\widetilde{G} = G_{\cal{C}_i,\pi}$ and $e$ denotes the identity permutation. 
    \end{claim}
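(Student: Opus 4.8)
The plan is to prove Claim~\ref{clm:rec-ind} by unfolding the definition of $G_{\cal{C},\pi}$ pointwise on each input $(r,u)$ and comparing the two sides. It is convenient to first introduce notation: for a change list $\cal{C}$ and a color $r$, let $\mu_{\cal{C},r}$ be the partial matching on $[N]$ given by $\mu_{\cal{C},r}(u)=v$ whenever $(u,v,r)\in\cal{C}$ or $(v,u,r)\in\cal{C}$ (and undefined otherwise). With this notation, $G_{\cal{C},\pi}(r,u)=\mu_{\cal{C},r}(u)$ when $u$ lies in the domain of $\mu_{\cal{C},r}$, and $G_{\cal{C},\pi}(r,u)=\pi^{-1}(G(r,\pi(u)))$ otherwise; similarly, since $e$ is the identity, $\widetilde{G}_{\cal{D}_{i+1},e}(r,u)$ equals $\mu_{\cal{D}_{i+1},r}(u)$ when that is defined, and $\widetilde{G}(r,u)=G_{\cal{C}_i,\pi}(r,u)$ otherwise.

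The key structural input is that $\cal{C}_{i+1}=\cal{C}_i\cup\cal{D}_{i+1}$ (a disjoint union, since $\cal{D}_{i+1}=\cal{C}_{i+1}\setminus\cal{C}_i$ and $\cal{C}_i\subseteq\cal{C}_{i+1}$ by construction of $\mathrm{Reconnect}$), together with the fact — which I would state as a running invariant of the $\mathrm{Reconnect}_\rho$ loop — that every change list produced along the way is \emph{well-formed}, i.e., no vertex--color pair is matched to two distinct vertices. This is exactly the condition under which ``$G_{\cal{C},\pi}$ is a graph oracle'' makes sense, and it guarantees that $\mu_{\cal{C}_{i+1},r}$ is a common extension of both $\mu_{\cal{C}_i,r}$ and $\mu_{\cal{D}_{i+1},r}$, with domain equal to the union of their domains. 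Granting this, I would split on the input $(r,u)$ into three cases. (i) If $u$ lies in the domain of $\mu_{\cal{D}_{i+1},r}$, then the right-hand side equals $\mu_{\cal{D}_{i+1},r}(u)$ and the left-hand side equals $\mu_{\cal{C}_{i+1},r}(u)$, which agree because $\mu_{\cal{C}_{i+1},r}$ extends $\mu_{\cal{D}_{i+1},r}$. (ii) If $u$ is not in that domain but is in the domain of $\mu_{\cal{C}_i,r}$, then the right-hand side falls through to $\widetilde{G}(r,u)=\mu_{\cal{C}_i,r}(u)$, and the left-hand side equals $\mu_{\cal{C}_{i+1},r}(u)=\mu_{\cal{C}_i,r}(u)$. (iii) If $u$ is in neither domain, then both sides fall all the way through to $\pi^{-1}(G(r,\pi(u)))$ — the left side directly, and the right side because $\widetilde{G}(r,u)=G_{\cal{C}_i,\pi}(r,u)=\pi^{-1}(G(r,\pi(u)))$ when $u$ is outside the domain of $\mu_{\cal{C}_i,r}$. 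Since these three cases exhaust all inputs, the two graph oracles are identical.

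The argument is essentially mechanical, so the only place that needs genuine care — and the closest thing to an obstacle — is verifying the well-formedness invariant for $\mathrm{Reconnect}_\rho$: one must check that planting the pair $(u,v,r),(x,y,r)$ (with $x=G_{\cal{C}_i,\pi}(r,u)$, $y=G_{\cal{C}_i,\pi}(r,v)$) never conflicts with a previously-recorded match, which uses the disjointness structure of the fixed edges in $H$ together with the fact that $x,y$ are read off the current oracle $G_{\cal{C}_i,\pi}$ (already a graph oracle). Everything else is substitution of definitions.
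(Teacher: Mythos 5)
Your proposal is correct and follows essentially the same route as the paper: the paper's proof likewise unfolds $G_{\cal{C}_{i+1},\pi}$ pointwise, splits the membership case into the two subcases $(u,v,r)\in\cal{D}_{i+1}$ and $(u,v,r)\in\cal{C}_i$, and observes that the latter two branches reproduce $G_{\cal{C}_i,\pi}$, which is exactly your three-case analysis. The only difference is that you make explicit the consistency (well-formedness) of the change lists, which the paper leaves implicit in its assertion that $G_{\cal{C},\pi}$ is a graph oracle by construction.
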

    \begin{proof}

Observe that $G_{\cal{C}_{i+1},\pi}$ can be equivalently written as  \[
    G_{\cal{C}_{i+1},\pi}(r,u) = \begin{cases} 
        v & \text{ if $(u,v,r) \in \cal{C}_{i+1}\setminus \cal{C}_i$ or $(v,u,r) \in \cal{C}_{i+1}\setminus \cal{C}_i$}\\
          v & \text{ if $(u,v,r) \in  \cal{C}_i$ or $(v,u,r) \in  \cal{C}_i$}\\
          \pi^{-1}(G(r,\pi(u))) & \text{ otherwise}
    \end{cases}~.
\]
Now observe that the last two cases are equivalent to the definition of $G_{\cal{C}_i,\pi}$ on all inputs $(r,u)$ that do not appear in $\cal{D}_{i+1}$. Letting $\widetilde{G} = G_{\cal{C}_i,\pi}$, we get that

 \[
   \tilde{ G}_{\cal{D}_{i+1},e}(r,u) = \begin{cases} 
        v & \text{ if $(u,v,r) \in \cal{D}_{i+1}$ or $(v,u,r) \in \cal{D}_{i+1}$}\\
        \tilde{G}=G_{\cal{C}_i,\pi}(r,u) & \text{ otherwise}
    \end{cases}
\]
as desired. %
    \end{proof}

    We now show that the $i$'th iteration of $\mathrm{Reconnect}_\rho$ maps the uniform distribution over $\Lyes^{\rho_i}$ to the uniform distribution over $\Lyes^{\rho_{i+1}}$. 

    \begin{claim}
    \label{claim:preimage}
        Fix $E,F,\pi$ and an index $1 \leq i \leq |H|$. If the distribution of $G_{\cal{C}_i,\pi}$ is uniform over $\Lyes^{\rho_i}$, then the distribution of $G_{\cal{C}_{i+1},\pi}$ is uniform over $\Lyes^{\rho_{i+1}}$.  
    \end{claim}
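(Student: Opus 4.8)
The plan is to analyze a single pass of the $\mathrm{Reconnect}_\rho$ loop as a \emph{fixed} map on graph oracles and to check that it transports the uniform measure from $\Lyes^{\rho_i}$ to $\Lyes^{\rho_{i+1}}$. Write $(u,v,r)$ for the $(i+1)$th edge of $H$ (the one processed in the step from $\rho_i$ to $\rho_{i+1}$) and let $H_{\leq i}$ denote the first $i$ edges of $H$, so that $\rho_i=(A,B,H_{\leq i})$. By \Cref{clm:rec-ind}, for every $\widetilde{G}=G_{\cal{C}_i,\pi}\in\Lyes^{\rho_i}$ we have $G_{\cal{C}_{i+1},\pi}=\widetilde{G}_{\cal{D}_{i+1},e}$, where, setting $x:=\widetilde{G}(r,u)$ and $y:=\widetilde{G}(r,v)$, the set $\cal{D}_{i+1}=\{(u,v,r),(x,y,r)\}$ depends only on $\widetilde{G}$. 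Thus $G_{\cal{C}_{i+1},\pi}=\Phi(\widetilde{G})$, where $\Phi$ is the local move on the color-$r$ perfect matching that deletes the color-$r$ edges $\{u,x\}$ and $\{v,y\}$ and inserts the color-$r$ edges $\{u,v\}$ and $\{x,y\}$; note $\Phi(\widetilde{G})=\widetilde{G}$ in the degenerate case $x=v$, i.e.\ when $\widetilde{G}$ already contains the edge $(u,v,r)$. Since by hypothesis $G_{\cal{C}_i,\pi}$ is uniform on $\Lyes^{\rho_i}$ and $G_{\cal{C}_{i+1},\pi}=\Phi(G_{\cal{C}_i,\pi})$, it suffices to establish (a) that $\Phi$ maps $\Lyes^{\rho_i}$ into $\Lyes^{\rho_{i+1}}$, and (b) that $\Phi$ is onto $\Lyes^{\rho_{i+1}}$ and all of its fibers have the same cardinality; the pushforward of a uniform distribution under such a map is uniform.

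For (a): since $(u,v,r)\in H$, the endpoints $u$ and $v$ lie in a common side of the bipartition of $\widetilde{G}$ --- by construction each $H$-edge arises from a fixed coordinate-pair of some $Y_r$ or $Z_r$ and therefore lies entirely within one side --- and $x,y$ are their color-$r$ neighbors, so all of $u,v,x,y$ lie in that side. Hence $\Phi$ rewires the color-$r$ matching \emph{within a single side}: the image is again a valid $\Lyes$ oracle with the same bipartition, no new $A$--$B$ path is created so $A$ and $B$ remain in different components, and the edge $(u,v,r)$ is now present. The only edges deleted are $\{u,x\}$ and $\{v,y\}$ in color $r$; since $H$ is consistent (each vertex is an endpoint of at most one $H$-edge of any given color) and $(u,v,r)$ is the $(i+1)$th --- not an earlier --- edge of $H$, neither deleted edge belongs to $H_{\leq i}$, so every previously planted $H$-edge survives. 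Therefore $\Phi(\widetilde{G})\in\Lyes^{\rho_{i+1}}$; the degenerate case is immediate because then $\widetilde{G}$ already lies in $\Lyes^{\rho_{i+1}}$.

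For (b): every $G'\in\Lyes^{\rho_{i+1}}$ contains $(u,v,r)$, so $\Phi(G')=G'$; in particular $\Phi$ is surjective. To count $\Phi^{-1}(G')$, observe that a preimage $\widetilde{G}$ with $\widetilde{G}(r,u)=v$ must equal $G'$ (one preimage), while a preimage with $\widetilde{G}(r,u)=x\neq v$ is obtained from $G'$ by the reverse move: pick a color-$r$ edge $\{x,y\}$ of $G'$ that lies in $u$'s side and is not incident to $u$ or $v$, delete $\{u,v\}$ and $\{x,y\}$, and insert one of the two pairings $\{u,x\},\{v,y\}$ or $\{u,y\},\{v,x\}$. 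Exactly the argument of (a) shows the resulting oracle has the correct bipartition and keeps $A,B$ separated, so it lies in $\Lyes^{\rho_i}$ precisely when $\{x,y\}$ is not itself a color-$r$ edge of $H_{\leq i}$. The color-$r$ matching restricted to $u$'s side has $N/4$ edges, one of which is $\{u,v\}$; discarding that one and the color-$r$ edges of $H_{\leq i}$ lying in $u$'s side (a number $c$ depending only on $\rho$), and counting both pairings, every fiber has size $m:=1+2(N/4-1-c)$. With all fibers of this common size and $\Phi$ surjective, $|\Lyes^{\rho_i}|=m\,|\Lyes^{\rho_{i+1}}|$, so the distribution of $G_{\cal{C}_{i+1},\pi}=\Phi(G_{\cal{C}_i,\pi})$ assigns each $G'\in\Lyes^{\rho_{i+1}}$ probability $m/|\Lyes^{\rho_i}|=1/|\Lyes^{\rho_{i+1}}|$, which is the claim.

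I expect the main obstacle to be the bookkeeping in step (b): showing that the identity preimage and the reverse-rewiring preimages are exhaustive and disjoint, and --- most delicately --- that ``the side containing $u$'' (and hence the correction $c$) is the same for every $G'\in\Lyes^{\rho_{i+1}}$. This is precisely where the earlier normalization of $\rho$ is used: having added $O(Rk)$ coordinates so that the endpoints of each $A$-side $H$-edge lie in $A$ (and likewise for $B$), the requirement in $\Lyes^{\rho_{i+1}}$ that $A$ and $B$ be in different components pins down which half of the bipartition contains $u$, independently of $G'$. A secondary point to verify is that the reverse move never re-creates a deleted $H_{\leq i}$-edge nor leaves $\Lyes$, which again follows from the consistency of $H$ and the fact that the move stays within one side of the bipartition.
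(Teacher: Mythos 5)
Your proposal is correct and follows essentially the same route as the paper: it views the $(i{+}1)$st $\mathrm{Reconnect}$ step as a fixed map $\Lyes^{\rho_i} \to \Lyes^{\rho_{i+1}}$, shows it is surjective with fibers of constant size (the paper's ``regular function'' argument, counting the admissible $x$ matched to $u$), and concludes that uniformity is preserved under pushforward. The only difference is presentational: you compute the fiber size $1+2(N/4-1-c)$ explicitly, whereas the paper merely observes that the count depends only on $\rho_{i+1}$ and not on the target graph.
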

    \begin{proof}
    \Cref{clm:rec-ind} implies that, for fixed $E,F,\pi,i$, there is a map $\varphi: \Lyes^{\rho_i} \to \Lyes^{\rho_{i+1}}$ where for all $G_i \in \Lyes^{\rho_i}$, 
    \[
        \varphi(G_i) = (G_i)_{\cal{D}_{i+1},e}
    \]
    where $\cal{D}_{i+1}$ only depends on the $(i+1)$st edge $(u,v,r)$ in the graph fixing data $\rho$, and $G_i$.\footnote{ Here, $G_i$ is representing the graph $G_{\cal{C}_i,
    \pi}$, so the permutation $\pi$ is performed within the query to $G_i$ itself.} 
    
    We show that $\varphi$ is a regular function, i.e., each element of its image $\Lyes^{\rho_{i+1}}$ has the same number of preimages. Fix $G_{i+1} \in \Lyes^{\rho_{i+1}}$. If $\varphi(G_i) = G_{i+1}$, then it must be that $\cal{D}_{i+1} = \{ (u,v,r), (x,y,r) \}$ where $x = G_i(r,u)$ and $y = G_i(r,v)$. Note that $G_{i+1}$ either differs from $G_i$ in two edges (because the edges $(u,x)$ and $(v,y)$ in $G_i$ were switched to $(u,v)$ and $(x,y)$ in $G_{i+1})$) or $G_{i+1} = G_i$ (because $(u,v)$ was an edge already). Thus we simply have to count the number of possible $x$'s in $G_{i+1}$ that could have possibly been matched with $u$ in $G_i$. The only $x$'s that are forbidden are the ones that are already matched with an edge with color $r$ in $\rho_i$. This number only depends on $\rho_{i+1}$ and is independent of $G_{i+1}$. Therefore $\varphi$ is a regular function, and thus if $G_i$ is uniform over $\Lyes^{\rho_i}$, then $\varphi(G_i)$ is uniform over $\Lyes^{\rho_{i+1}}$.

    We illustrate the essence of this proof with \Cref{fig:metagraph}.

    \end{proof}
    \Cref{lem:metagraph} now directly follows: since $G_{\cal{C}_0,\pi}$ is uniform over $\Lyes(A,B)$, then $G_{\cal{C}_1,\pi}$ is uniform over $\Lyes^{\rho_1}$, and so on, until we get the final $G_{\cal{C},\pi}$ which is uniform over $\Lyes^\rho$.

\end{proof}

\begin{figure}
{\center 
\begin{tikzpicture}
    \draw[thick] (0, 0) ellipse (4 cm and 2 cm); %
     \node at (0, 0) {$\mathcal{L}_{\mathrm{yes}}(A,B)$};
     \draw[->, thick] (-3, -1) -- (-1, -4);
    \fill (-3,-0.9) circle (1pt);
      \draw[->, thick] (-2.5, -1) -- (-1, -4); 
       \fill (-2.5,-0.9) circle (1pt);
       \foreach \x in {-2, -1.75,-1.5,-1.25} {
        \fill ( \x,-0.9) circle (1pt);}
      \draw[->, thick] (-1, -1) -- (-1, -4); 
       \fill (-1,-0.9) circle (1pt);
        \draw[->, thick] (3, -1) -- (1, -4);
     \fill (3,-0.9) circle (1pt);
      \draw[->, thick] (2.5, -1) -- (1, -4); 
       \fill (1,-0.9) circle (1pt);
       \foreach \x in {2, 1.75,1.5,1.25} {
        \fill ( \x,-0.9) circle (1pt);}
      \draw[->, thick] (1, -0.9) -- (1, -4); 
     \fill (2.5,-0.9) circle (1pt);

      \draw[thick] (0, -5) ellipse (3.5cm and 1.75cm);
      \fill (-1,-4.2) circle (1pt);
       \foreach \x in { -0.5,-0.25,0,0.25,0.5} {
        \fill ( \x,-4.2) circle (1pt);}
          \fill (1,-4.2) circle (1pt);
       \node at (0, -5) {$\mathcal{L}^{\rho_1}_{\mathrm{yes}}$};
       \foreach \y in {-7, -7.20,-7.4} {
        \fill (0, \y) circle (1pt);} %
        \draw[thick] (0, -9) ellipse (3 cm and 1.5cm);
       \node at (0, -9) {$\mathcal{L}^{\rho_{|H|-1}}_{\mathrm{yes}}$};
 \draw[->, thick] (-1, -9.3) -- (-1.4, -13);
    \fill (-1, -9.2) circle (1pt);
     \draw[->, thick] (-2.5, -9.3) -- (-1.4, -13);
     \fill (-2.5, -9.2) circle (1pt);
   \draw[->, thick] (1, -9.3) -- (1.4, -13);
      \fill (1, -9.2) circle (1pt);
     \draw[->, thick] (2.5, -9.3) -- (1.4, -13);
     \fill (2.5, -9.2) circle (1pt);
       \draw[thick] (0, -13) ellipse (2 cm and 1cm);
       \node at (0, -13) {$\mathcal{L}^{\rho}_{\mathrm{yes}}$};
      \fill (-1.5,-13.1) circle (1pt);
       \foreach \x in { -0.75,-0.5,-0.25,0,0.25,0.5,0.75} {
        \fill ( \x,-13.5) circle (1pt);}
        
          \fill (1.5,-13.1) circle (1pt);
       
    \foreach \x in { -2,-1.75,-1.5,-1.25} {
        \fill ( \x,-9.2) circle (1pt);}
        \foreach \x in { 2,1.75,1.5,1.25} {
        \fill ( \x,-9.2) circle (1pt);}

\end{tikzpicture}
}
\caption{This figure illustrates the iterative $\mathrm{Reconnect}_{\rho}$ procedure. The first layer represents all Yes instances conditioned on $A \in S$ and $B \in \bar{S}$, and each circle represents a graph in the corresponding set. There is an edge from a graph $G$ in layer $i$ to a graph $G'$ in layer $i+1$ if $\varphi(G)=G'$ . Each graph in every layer has the same number of incoming edges from the previous layer. Thus the figure is illustrating that the iterative $\mathrm{Reconnect}_{\rho}$ procedure starts with a  uniform distribution over $\Lyes(A,B)$, and transforms it into uniform distributions over smaller and smaller sets, with the last set being $\Lyes^{\rho}.$}
\label{fig:metagraph}
\end{figure}

\section{Unconditional $\QMA$ vs $\QCMA$ separation for an interactive game}
\label{sec:interactive-game}

In this section, we 
give an \emph{unconditional} $\QMA$/$\QCMA$ oracle separation for an interactive oracle distinguishing game.
This result leads to a simplified proof of the $\QMA$ and $\QCMA$ separation relative to a \emph{distributional} oracle.

The techniques we use to analyze the lower bound for interactive games have appeared in the literature for dealing with non-uniformity of algorithms. These are algorithms that have an unbounded preprocessing stage to produce an advice (which corresponds to the witness in the language of QMA/QCMA) that helps a bounded second stage algorithm to solve a  challenge.
The formulation of interactive games provides an alternative and clean view to the distributional oracle separation in previous works, demonstrating their conceptual similarity.

The oracle has a similar structure to the one used in \cite{natarajan2023distribution} and our previous sections, but with a few modifications.
The proof roadmap is as follows:
we first define an interactive variant of the $\QMA$/$\QCMA$ protocol, and give an oracle separation for this variant through a technique that removes the use of witness in our analysis with a sequentially repeated game (multi-instance security). Finally, we show a reduction from the oracular interactive $\QMA$/$\QCMA$ game to 
a distributional oracle $\QMA$/$\QCMA$ game.

\subsection{Multi-instance security helper lemma}
In this section, we will leverage a useful notion called multi-instance security.

\paragraph{Multi-Instance Game} 
For any security game $G$ and any positive
integer $k$, we define the multi-instance game $G_k = G^{\otimes k}$ with respect to an oracle $\cO$, where  $G^{\otimes k}$
is given as follows:

For $i \in [k]$:
\begin{enumerate}
    \item Sample fresh randomness $r_i$

    \item Compute challenge using randomness $r_i$ and $\cO$, send challenge to adversary.

    \item Adversary is given oracle access to $\cO$ and produces a quantum state $\ket{\ans_i}$ as answer. 

    \item The challenger makes a binary-outcome projective measurement on the state $\ket{\ans_i}$ to obtain outcome $b_i$, and gives back the remaining state $\ket{\ans_i'}$ to adversary. 
\end{enumerate}
In the end, the challenger outputs $(b_1, b_2 \cdots, b_k)$.

We will deploy the following lemma from \cite{chung2020tight}:

\begin{lemma}
\label{lem:multi_instance_condition}
  Let $G$ be any single-instance security game. Let $B_i$ be the random variable for $b_i$ in the multi-instance game $G_k$ as defined above, and let $X : =(B_1, \cdots, B_{i-1})$ be the random variable on the outcome of the first $(i-1)$ rounds. For any $\delta$, if for any multi-instance adversary $A$ and for any $x \in \{0,1\}$ that $Pr[X = x] > 0$, we have that:
  $$  \Pr[B_i \vert X =x] \leq \delta$$
Then we have that $\Pr[(B_1, B_2, \cdots, B_k) = 1^k] \leq \delta^k$. 
\end{lemma}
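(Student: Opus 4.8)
The plan is to prove the lemma by a straightforward chain-rule (telescoping) argument over the $k$ rounds: I will express the joint event $(B_1,\dots,B_k)=1^k$ as a product of per-round conditional success probabilities and then bound each factor using the hypothesis applied to the same adversary.

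First I would fix an arbitrary multi-instance adversary $A$ for $G^{\otimes k}$ and consider the joint distribution of $(B_1,\dots,B_k)$ that it induces together with the challenger's sequence of binary-outcome measurements. By the definition of conditional probability,
\[
    \Pr[(B_1,\dots,B_k)=1^k] \;=\; \prod_{i=1}^{k} \Pr[B_i = 1 \mid B_1 = \cdots = B_{i-1} = 1],
\]
with the convention that the right-hand side is declared to be $0$ as soon as some prefix event $\{B_1 = \cdots = B_{i-1} = 1\}$ has probability $0$. This convention is consistent because in that case the left-hand side is also $0$, so the desired bound $\le \delta^k$ holds trivially, and it lets us treat the chain rule as literally valid.

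Next, in the nontrivial case I would bound each factor separately. Fix $i \in [k]$ with $\Pr[B_1 = \cdots = B_{i-1} = 1] > 0$. With $X = (B_1,\dots,B_{i-1})$ as in the statement and $x = 1^{i-1}$, the $i$-th factor is exactly $\Pr[B_i = 1 \mid X = x]$, and $x$ is an admissible conditioning value since $\Pr[X = x] > 0$. Applying the hypothesis of the lemma to this very adversary $A$ — which is legitimate precisely because the hypothesis is quantified over \emph{all} multi-instance adversaries, so it in particular controls the ``tail'' behavior of $A$ after the first $i-1$ rounds — gives $\Pr[B_i = 1 \mid X = x] \le \delta$. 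Substituting this bound for every such $i$ into the product yields $\Pr[(B_1,\dots,B_k)=1^k] \le \delta^k$, completing the proof.

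The only points requiring care, and the closest thing to an obstacle, are bookkeeping: invoking the hypothesis for every $i$ against the single fixed adversary $A$ (valid by the universal quantification over adversaries), and handling zero-probability prefixes via the stated convention. I would stress that the genuinely substantive work in \emph{applying} this lemma lies in establishing the per-round hypothesis $\Pr[B_i = 1 \mid X = x] \le \delta$ — i.e.\ that an adversary conditioned on having won the first $i-1$ rounds cannot beat the single-instance soundness of $G$ in round $i$ — but that is exactly what is assumed here, and so it is outside the scope of this lemma.
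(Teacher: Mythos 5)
Your chain-rule argument is correct: writing $\Pr[(B_1,\dots,B_k)=1^k]$ as a telescoping product of the conditional probabilities $\Pr[B_i=1\mid B_1=\cdots=B_{i-1}=1]$, handling zero-probability prefixes trivially, and bounding each factor by $\delta$ via the hypothesis instantiated at $x=1^{i-1}$ is exactly the right (and essentially the only) proof of the statement as written, since the hypothesis is assumed for the very multi-instance adversary at hand and for every admissible conditioning outcome. Note that the paper itself gives no proof of this lemma — it is imported verbatim from~\cite{chung2020tight} — so there is no internal argument to compare against; your proof matches the standard conditioning/induction argument in that reference, and you correctly flag that all substantive difficulty is pushed into verifying the per-round hypothesis (which, in the paper, is the content of the subsequent single-instance claim), not into this lemma. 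One cosmetic point: the statement's ``$x\in\{0,1\}$'' should be read as $x\in\{0,1\}^{i-1}$ and ``$\Pr[B_i\mid X=x]$'' as $\Pr[B_i=1\mid X=x]$, which is how you interpreted it.
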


\subsection{The Interactive Oracle Distinguishing Problem}
\label{sec:interactive_oracle_distinguish}

Fix integers $n,R \in \N$ and $\eta$ is a constant where $0 < \eta \leq \frac{1}{2}$. Let $N = 2^n$. The oracles we consider will be functions $F: [R] \times \{0,1\} \times [N] \to [N]$. 
For simplicity, we take our $\eta = \frac{1}{2}$ and $\vert R \vert = N$.

\begin{definition}
    We say that an oracle $F: [R] \times \{0,1\} \times [N] \to [N]$ is an \emph{$\eta$-noisy permutation oracle} if there exists a subset $Z \subseteq [R]$ with $|Z| \geq (1 - \eta)R$ such that for all $r \in Z$, there exists a permutation $\pi$ on $[N]$ such that $F(r,0,x) = \pi(x)$ and $F(r,1,x) = \pi^{-1}(x)$ for all $x \in [N]$. We call the $r$'s in $Z$ \emph{proper} . 
\end{definition}
\noindent Note that for non-proper $r$, the function $F(r,\cdot,\cdot)$ may behave arbitrarily. 

Define the following sets of Yes and No instances of $\eta$-noisy permutation oracles. Let $F$ be an $\eta$-noisy permutation oracle. Then 

\begin{itemize}
    \item $F \in \Lyes$ if and only if there exists a subset $S \subseteq [N]$ with $|S| = N/2$ such that for all proper $r$, the permutation $\pi$ associated with $F(r,\cdot,\cdot)$ is a product of two single-cycle permutations $\sigma_1 \sigma_2$ where $\sigma_1$ is a single-cycle permutation on $S$ and $\sigma_2$ is a single-cycle permutation on $\overline{S}$. We call $(S,\overline{S})$ the \emph{partition associated with $F$} (which is uniquely determined because the proper $r$'s form the majority of $[R]$). 

    \item $F \in \Lno$ if and only if for all proper $r$, the permutation $\pi$ associated with $F(r,\cdot,\cdot)$ is a single cycle on $[N]$, and furthermore for each $x \in [N]$ and $b \in \{0,1\}$, the distribution of $F(r,b,x)$ over $r \in R$ is $\delta$-close to uniform on $[N]$.

\end{itemize}

\begin{remark}
    This above condition implies that the associated graph has spectral gap at least $1 - \delta N$
\end{remark}

\begin{remark}
   For simplicity, we denote $F(r, \cdot) := F(r,0, \cdot)$ and $F^{-1}(r, \cdot) := F(r, 1, \cdot)$ for the rest of the section.
\end{remark}

\paragraph{Interactive Oracle-Distinguishing Game}
Consider the following game $G$:
\begin{enumerate}
    \item An oracle $F$ is sampled in the following manner:
    \begin{enumerate}
        \item Sample $S \subseteq [N]$ uniformly random, where $|S| = N/2$.
        \item Sample $Z \subseteq [R]$ uniformly at random. 
        \item For each $r \in Z$, sample a random cycle $\pi_S$ on $S$ and a random cycle $\pi_{\overline{S}}$ on $\overline{S}$, and let $F(r,\cdot) = \pi_S \pi_{\overline{S}}$. 
        \item For each $r \notin Z$, sample a random cycle $\pi$ on $[N]$, and let $F(r,\cdot) = \pi$. 
    \end{enumerate}

    \item In the $\QMA$ case, the adversary sends a quantum witness $\ket{\psi}$ that depends on $F$, and in the $\QCMA$ case, the adversary sends a classical witness $w$ that depends on $F$. 

    \item The verifier, in addition to the witness, receives a challenge $r$ sampled uniformly at random.

    \item The verifier outputs a bit $b$, and wins if it correctly guesses if $F(r,\cdot)$ is a product of two cycles or one cycle.
\end{enumerate}

\begin{claim}
\label{claim:qma_correctness_interactive}
    There exists a polynomial-size quantum proof $\ket{\psi}$, verifiable using polynomially-many queries to $F$, such that the verifier wins this game with probability $1 - o(1)$.
\end{claim}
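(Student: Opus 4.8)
The plan is to reuse the $\QMA$ verifier for the $\Components$ problem (\Cref{qma_verifier_algo}) almost verbatim, with two changes dictated by the fact that single‑cycle permutations, unlike the expanders appearing in the $\Components$ problem, are not expanders. First, the honest proof will consist of \emph{many} copies of the signed uniform superposition: I would set $m=\lceil \log^2 N\rceil$, let $S$ be the partition associated with $F$ (the prover finds $S$ by inspecting any proper two‑cycle sub‑oracle $F(r,\cdot)$, using its unbounded computation), put $\ket{\phi}=\tfrac{1}{\sqrt N}\sum_{x\in S}\ket{x}-\tfrac{1}{\sqrt N}\sum_{x\notin S}\ket{x}$, and take the proof to be $\ket{\phi}^{\otimes(m+1)}$. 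Second, on a challenge $r$ the verifier runs the Balancedness test on the first copy and $m$ \emph{independent} Invariance tests — each being the controlled in‑place application of $F(r,\cdot)$ followed by a Hadamard test on the control, exactly as in \Cref{qma_verifier_algo} — on the remaining copies, and outputs ``two cycles'' if and only if all $m+1$ sub‑tests accept. This makes $O(m)=\poly(n)$ queries and uses a $\poly(n)$‑qubit proof.

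For completeness I would treat the two cases of the challenge separately. If $F(r,\cdot)$ is a product of a single cycle on $S$ and a single cycle on $\overline{S}$, then $F(r,\cdot)$ permutes within $S$ and within $\overline{S}$, on each of which $\ket{\phi}$ has constant phase, so $F(r,\cdot)\ket{\phi}=\ket{\phi}$ and all $m$ Invariance tests accept with certainty; moreover $\langle +^n|\phi\rangle=0$ because $|S|=N/2$, so the Balancedness test accepts with certainty, and the verifier wins with probability $1$. If instead $F(r,\cdot)$ is a single $N$‑cycle $c_0\to c_1\to\cdots\to c_{N-1}\to c_0$, set $\epsilon_i=+1$ if $c_i\in S$ and $\epsilon_i=-1$ otherwise; then $\langle\phi|F(r,\cdot)|\phi\rangle=\tfrac1N\sum_i\epsilon_i\epsilon_{i+1}=1-\tfrac{2D}{N}$, where $D$ is the number of indices $i$ with $c_i,c_{i+1}$ on opposite sides of the partition. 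A uniformly random $N$‑cycle is a uniformly random cyclic ordering of $[N]$, so $\E[D]=N/2\pm O(1)$ and $D$ changes by $O(1)$ under any transposition of the ordering; hence by a standard martingale (Azuma) bound, $|D-N/2|\le N^{3/4}$ except with probability $o(1)$. Conditioning on that event, $A:=\langle\phi|F(r,\cdot)|\phi\rangle$ satisfies $|A|\le 3N^{-1/4}$, so each Invariance test accepts with probability $1-\tfrac14\|(I-F(r,\cdot))\ket{\phi}\|^2=\tfrac12+\tfrac A2\le\tfrac23$ for large $N$; since the $m$ tests act on independent copies, the probability that all of them accept is at most $(2/3)^m=o(1)$, and the verifier wins with probability $1-o(1)$.

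Since $Z$ is a uniformly random subset of $[R]$, every challenge $r$ falls into exactly one of these two cases, and the verifier wins with probability $1-o(1)$ in each, so averaging over $r$ gives overall winning probability $1-o(1)$, as claimed.

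The step I expect to be the crux is the concentration bound on $D$ for a uniformly random $N$‑cycle: it is the only place where the ``one cycle versus two cycles'' distinction is genuinely exploited, it is what prevents the prover from cheating with the trivial uniform superposition, and it is precisely why amplification over many witness copies is necessary here — a single Invariance test distinguishes the two cases only with advantage $\approx\tfrac12$, in contrast to the expander‑based $\Components$ problem where one test already suffices. The concentration itself is routine (one can alternatively bound $\mathrm{Var}(D)=O(N)$ by a direct second‑moment computation over the random cyclic ordering), so I anticipate only bookkeeping rather than real difficulty.
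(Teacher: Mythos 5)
Your proposal is correct, but it is not the route the paper takes. The paper's proof simply reuses the single-copy witness $\ket{\psi}$ and the verifier of \Cref{qma_verifier_algo} with the challenged $r^*$ plugged into the Invariance test, and then cites the completeness/soundness analysis of \Cref{sec:qma_upper_bound} together with the spectral gap of the (averaged) graph; it does not introduce multiple witness copies or any concentration argument about random cycles. You instead amplify: $m=\lceil\log^2 N\rceil$ fresh copies of the signed state, independent Hadamard/Invariance tests on each, and a concentration bound on the number $D$ of partition-crossing adjacencies along a uniformly random $N$-cycle to show each test accepts a single-cycle challenge with probability $\tfrac12+o(1)$, so all tests accept with probability $o(1)$. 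What your route buys is precisely the stated $1-o(1)$ winning probability: as you correctly identify, a single cycle is not an expander and one Invariance test rejects it only with probability $\approx\tfrac12$ (for the honest balanced witness, $\langle\psi|\pi_{r^*}|\psi\rangle\approx 0$), so the paper's single-test verifier, taken literally, wins only with probability about $3/4$ on an honest proof; the paper's appeal to the spectral gap only controls the rejection probability averaged over $r$, and even then the per-test bias is bounded by a constant. Your amplification over independent copies (sound because the copies are unentangled and the tests are independent, and because invariance is exact in the two-cycle case so completeness is not degraded) is what actually closes this gap, at the cost of the extra concentration lemma for $D$, which is routine (Azuma over transpositions or a second-moment bound both work, and the failure probability can be averaged or union-bounded over the $r\notin Z$). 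One contextual remark: the downstream use of this claim (the $\QCMA$ lower bound of $\tfrac12+\negl$ and the distributional-oracle reduction) only needs the honest $\QMA$ verifier to win with probability bounded away from $\tfrac12$ by a constant, so the paper's weaker single-test guarantee suffices for the separation; but for the claim as literally stated, your argument is the one that delivers it.
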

\begin{proof}
The quantum witness is the same state $\ket{\psi} := \sqrt{\frac{1}{N}}\sum_{x \in S} \ket{x} - \sqrt{\frac{1}{N}} \sum_{x \notin S} \ket{x}$ as the one used in \Cref{sec:qma_upper_bound}.

Note that the set $S, \overline{S}$ are the subset of vertices in the two disjoint cycles corresponding to the partial oracle $F(r, \cdot)$. But the $\QMA$ verifier works in a slightly different way: since the verifier receives the challenge $r^*$, it uses this $r^*$ to perform the invariance check. The $\QMA$ verifier randomly perform the following tests:
    \begin{enumerate}
        \item \textbf{Balancedness test.} Apply $H^{\otimes n}$ to $\ket{\psi}$ and check if the result is $\ket{0^n}$. If so, reject. Otherwise, continue.

        \item \textbf{Invariance test.} 
        \begin{enumerate}
        
            \item Prepare a control qubit $\reg{C}$ in the state $\ket{+}$, %
            \item Controlled on $\reg{C}$ being in the state $\ket{1}$, %
            apply the permutation $F(r^*,\cdot)$ to the register $\reg{X}$. %
            
            \item Apply $H$ to $\reg{C}$ and check if the result is $\ket{0}$. If not, reject. 
            If yes, continue.%
        \end{enumerate}
        Accept if all the above steps are accepted.
\end{enumerate}
  The correctness follows from the analysis of the $\QMA$ verifier in \Cref{sec:qma_upper_bound}.
   The soundness follows 
    by applying the same analysis in \Cref{qma_verifier_algo}, since $r^*$ is chosen uniformly at random by the challenger. We have $\Pr_{r \gets [R]}[\text{~$\QMA$ verifier accepts~}] \leq 1-\Delta$, where $\Delta$ is the spectral gap of the normalized Laplacian of the graph.

\end{proof}

\subsection{$\QCMA$ lower bound via multi-instance game}

The high-level idea of our $\QCMA$ lower bound for the interactive game is the following: for all $\QCMA$ verifiers that take in a polynomial-size classical proof $w$ and make polynomially-many queries to $F$, can only solve this with probability bounded away from $1$ (ideally, very close to $1/2$). 

One way to show this is to utilize the multi-instance analysis to remove the use of a witness by guessing the witness used in a sequentially repeated version of the original game. Consider the multi-instance game $G^{\otimes k}$, where the verifier receives $k$ independent challenges $(r_1,\ldots,r_k)$ \emph{in sequence} and has to simultaneously guess correctly on all of them.

The following claim is straightforward when $W$ is a classical witness. \footnote{When the witness is quantum, one may not be able to use the same witness for all instances of a multi-instance game due to the destructible nature of quantum states, and the loss factor will be different. But when the witness is classical, we can consider that the prover provides some $W$ with some fixed, arbitrary polynomial size of its own choice, which can be used throughout the sequential repetition.}.

\begin{claim}
\label{claim:remove_witness_with_multi_instance}
    If there is a $T$-query $\QCMA$ verifier $V$ that succeeds in the game $G$ (when it receives the $W$-length witness $w$ depending on $F$) with probability at least $\frac{1}{2} + \delta$, then there exists a proof-less verifier $V'$ that succeeds in the game $G^{\otimes k}$ with probability at least
    \[
        2^{-W} \Big ( \frac{1}{2} + \delta \Big)^{k}
    \]
    and makes $T$ queries to $F$ per round.
\end{claim}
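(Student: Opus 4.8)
The plan is to build $V'$ by \emph{guessing the witness}. Concretely, at the very start of $G^{\otimes k}$ the verifier $V'$ samples a string $w' \in \{0,1\}^W$ uniformly at random and commits to it for the whole game. The hypothesis supplies a prover strategy, i.e.\ a function $F \mapsto w(F) \in \{0,1\}^W$, such that the $T$-query verifier $V$ receiving $w(F)$ wins the single-instance game $G$ with probability at least $\tfrac12 + \delta$ (averaged over $F$, the challenge $r$, and $V$'s internal randomness). I would introduce the event $\mathsf{Good} := \{\, w' = w(F)\,\}$; since $w'$ is drawn independently of $F$ and of all the challenges $r_1,\dots,r_k$, we have $\Pr[\mathsf{Good}] = 2^{-W}$, independently of everything else in the game.

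Next I would specify $V'$'s behaviour in round $i$ of $G^{\otimes k}$: run a completely fresh instance of $V$ on classical witness $w'$ and challenge $r_i$, make exactly the $T$ queries to $F$ that $V$ makes, output the bit $b_i$ that $V$ outputs, and then discard all workspace before round $i+1$. The key observation to record is that, conditioned on $F$ and on $\mathsf{Good}$, the $k$ rounds are mutually independent: the challenges $r_i$ are sampled independently and $V'$ retains no state across rounds, so no cross-round correlations can arise. Writing $q(F) := \Pr_{r,\text{coins}}[\,V^F(w(F),r)\text{ correct}\,]$, this yields
\[
\Pr[(B_1,\dots,B_k) = 1^k] \;=\; 2^{-W}\,\E_F\!\big[q(F)^k\big] \;\geq\; 2^{-W}\big(\E_F[q(F)]\big)^k \;\geq\; 2^{-W}\Big(\tfrac12+\delta\Big)^k,
\]
where the middle inequality is Jensen's inequality applied to the convex map $x \mapsto x^k$ (for $k \geq 1$) and the last uses $\E_F[q(F)] \geq \tfrac12 + \delta$. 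Since each round costs $T$ queries to $F$, the verifier $V'$ makes $T$ queries per round, which completes the claim.

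I do not expect a genuine obstacle here — the argument is essentially bookkeeping — but the one step that truly uses the $\QCMA$ (rather than $\QMA$) hypothesis is the \emph{reuse} of a single guessed witness $w'$ in all $k$ rounds: a classical string can be copied freely, so one lucky guess simultaneously equips every instance and costs only a single factor $2^{-W}$ instead of $2^{-Wk}$. A minor secondary point to get right is that the conditional-independence step is legitimate inside the multi-instance framework of \Cref{lem:multi_instance_condition}: because we are constructing a \emph{specific} $V'$, it is free to adopt a memoryless strategy that never carries state between rounds, so there is no adversarial entangling of rounds to worry about.
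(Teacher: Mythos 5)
Your proposal is correct and matches the paper's intended argument: the paper states this claim without an explicit proof (calling it straightforward because a classical witness can be reused in every one of the $k$ sequential rounds after a single guess that is correct with probability $2^{-W}$), and your write-up is precisely that guess-and-reuse argument with the per-round simulation of $V$ costing $T$ queries. The one detail the paper leaves implicit — that since the oracle $F$ is shared across rounds one needs Jensen's inequality $\E_F[q(F)^k] \geq (\E_F[q(F)])^k$ to pass from the average single-instance success probability to the $k$-fold product — is handled correctly in your proof.
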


Suppose that one can prove that any algorithm for the game $G^{\otimes k}$ that makes $\poly(\log N)$ queries per round cannot succeed with probability more than $(\frac{1}{2} + \eta)^k$ for some negligible quantity $\eta$. Then
Therefore
\[
    2^{-W} (\frac{1}{2} + \delta)^k \leq (\frac{1}{2} + \eta)^k~.
\]
This means 
\[
    \frac{1}{2} + \delta \leq 2^{W/k} (\frac{1}{2} + \eta)~.
\]
Setting $k = W/\eps$ for some proper $ \eps $ we get $\delta$ is still negligible. 
This parameter is related to $K$ and the oracle size $N$. We will set this parameter later.

Thus all that remains is to analyze the probability of success in the game $G^{\otimes k}$.

\subsection{Winning probability in a game with no witness}

\paragraph{Single-Instance Game Lower Bound}
We first show a lower bound in the single instance game.

\begin{theorem}
    The single instance game of distinguishing whether a $F(r, \cdot)$ is a Yes case or a No case requires $\Omega(\eta\, N)$ queries, to succeed with probability $1/2+\eta$ for some $\eta \in [0,1/2]$, for any $F(r, \cdot)$.
\end{theorem}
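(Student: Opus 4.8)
The plan is to prove this via Ambainis's adversary method (\Cref{thm:adversary_method}). First I would reduce to a clean query problem: a single-instance instance specifies a permutation $\pi$ on $[N]$, accessed both forwards and backwards, so I view the input as a length-$2N$ string over the alphabet $[N]$ whose first $N$ coordinates are $(\pi(1),\dots,\pi(N))$ and whose last $N$ coordinates are $(\pi^{-1}(1),\dots,\pi^{-1}(N))$. (The ``$\delta$-close to uniform'' clause in the definition of $\Lno$ constrains the whole collection $\{F(r,\cdot)\}_r$, not a single $F(r,\cdot)$, so it is irrelevant to this lemma.) Let $X$ be the set of such strings for which $\pi$ is a product of two single cycles of length $N/2$ (the Yes case) and let $Y$ be the set for which $\pi$ is a single $N$-cycle (the No case); these are disjoint, and the single-instance game is exactly deciding which set the input lies in.

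Next I would set up the relation $R\subseteq X\times Y$. Declare $(\sigma,\tau)\in R$ iff $\tau$ is obtained from $\sigma$ by \emph{crossing one edge out of each cycle}: pick a vertex $u$ on one cycle of $\sigma$ and a vertex $w$ on the other, and set $\tau(u)=\sigma(w)$, $\tau(w)=\sigma(u)$, with $\tau=\sigma$ elsewhere. Crossing one edge between two disjoint cycles always merges them into a single $N$-cycle, so $\tau\in Y$; conversely every $\sigma\in X$ related to a fixed $\tau$ is obtained by ``splitting'' $\tau$ at an unordered pair of vertices that are cyclically antipodal in $\tau$ (at distance $N/2$). The counts are then: for each $\sigma\in X$ there are $(N/2)^2=N^2/4$ crossing pairs, all giving distinct $\tau$'s, so $m=N^2/4$; and for each $\tau\in Y$ there are $N/2$ antipodal pairs, all giving distinct $\sigma$'s, so $m'=N/2$.

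Then I would bound the $l$-parameters. A related pair $(\sigma,\tau)$ differs in exactly four of the $2N$ coordinates: the forward coordinates of $u$ and $w$, and the inverse coordinates of $\sigma(u)$ and $\sigma(w)$. For a fixed $x=\sigma$ and coordinate $j$, a related $\tau$ differing at $j$ must use, as one of its crossed vertices, the vertex that $j$ ``names'' (directly if $j$ is a forward coordinate, or via its $\sigma$-preimage if $j$ is an inverse coordinate), while the other crossed vertex ranges over the opposite cycle, so $l_{x,j}\le N/2$. For a fixed $y=\tau$ and coordinate $j$, a related $\sigma$ differing at $j$ is a split whose pair contains the vertex named by $j$; since the antipodal partner is then forced, $l_{y,j}'\le 1$. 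Hence $\max_{(x,y)\in R}\max_j l_{x,j}\,l_{y,j}'\le N/2$.

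Finally, feeding $m=N^2/4$, $m'=N/2$, $l_{\max}=N/2$, $l'_{\max}=1$ into the adversary method yields a bounded-error lower bound of order $\sqrt{m m'/(l_{\max}l'_{\max})}=\sqrt{(N^2/4)(N/2)/(N/2)}=N/2=\Omega(N)$, which is tight up to constants since one can read off all of $\pi$ in $N$ queries. The stated dependence on the advantage then follows from the standard success-probability-sensitive form of the adversary bound: an algorithm deciding $X$ versus $Y$ with probability $\tfrac12+\eta$ must make $\Omega(\eta N)$ queries, with the advantage $\eta$ entering as a multiplicative loss on the bounded-error bound. I expect the two places needing care to be (i) verifying the $l$-parameter bounds, in particular that the inverse-oracle coordinates do not inflate $l_{x,j}$ past $N/2$ or $l_{y,j}'$ past $1$, and (ii) invoking the precise advantage-dependent version of \Cref{thm:adversary_method}; the combinatorial content — that a single $N$-cycle and a balanced pair of $N/2$-cycles are $O(1)$-close as oracles yet interconvertible in $\Theta(N^2)$ ways — is what drives the bound.
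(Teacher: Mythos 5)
Your proposal is correct and follows essentially the same route as the paper: an Ambainis adversary argument with the relation that crosses one edge from each of the two $N/2$-cycles to merge them into a single $N$-cycle, with counts $m=\Theta(N^2)$, $m'=\Theta(N)$, $l_{x,j}=O(N)$, $l'_{y,j}=1$, yielding $\Omega(N)$ and then an $\Omega(\eta N)$ bound for advantage $\eta$. The only differences are bookkeeping (you count unordered antipodal splits, getting $m'=N/2$ versus the paper's $N$, and you track forward and inverse coordinates explicitly), and both you and the paper invoke the advantage-dependent form of the adversary bound at the same informal level of precision.
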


\begin{proof} 
    We apply \Cref{thm:adversary_method} in the following way. 
    Consider the following relation $R$: For some $F(r,\cdot)$ in the Yes instance, there exists two neighboring vertices $(u_1, v_1)$ in the first cycle, and two neighboring vertices $(u_2, v_2)$ in the second cycle such that if we remove the edges between $(u_1, v_1)$ and  $(u_2, v_2)$, and replace with an edge $(u_1, u_2)$ and an edge $(v_1, v_2)$, we obtain an instance in the No case.
    Each such change in the oracle $F(r, \cdot)$ for neighbouring vertices pair $(u_1, v_1)$ and  $(u_2, v_2)$ will reflect as changes in four entries of the truth table for $F(r, \cdot)$.

We analyze the following values from \Cref{thm:adversary_method}:
    \begin{enumerate}
        \item For each $F(r,\cdot)$ in the Yes instance: there are $N/2$ choices of  $(u_1, v_1)$ in the first cycle and $N/2$ choices of  $(u_2, v_2)$ in the second cycle to modify them into one No instance. So in total, there are $\Omega(N^2)$ number of No instances in relation with it.

        \item For each  $F(r,\cdot)$ in the No instance: there are $N$ choices of $(u_1, u_2)$, but once $(u_1, u_2)$ is fixed, there is only one choice of edge $(v_1, v_2)$ where $v_1$ is $(N/2-1)$ edges away from $u_1$ and $v_2$ is $(N/2-1)$ edges away from $u_2$ (on the shorter path), because our two cycles in the Yes case must have size $N/2$ each. Thus, there are $N$ number of Yes instances in relation with each No instance.

        \item For every Yes oracle and each entry $F_r(u_i)$ in the adjacency matrix, we consider the No instances different
        from it at entry $F_r(u_1)$: for every $F_r(u_1) = v_1, F^{-1}_r( v_1) = u_1$ in the Yes case, then a No instance is different at entry $F_r(u_1), F^{-1}_r,(v_1)$, then we have $|S|/2$ choices for the vertex $u_2$ such that $F_r(u_1)=u_2$. Once $u_2$ is picked, then we have 2 choices for the vertex $v_2$ such  that $(v_1,v_2)$ is an edge. So the total choices are $N$, and $l_{x,u_1,v_1}=N$.
        
        \item For every No oracle and each entry pair $F_r(u_1), F_r^{-1}(v_1)$ in truth table, we consider the Yes instances different
        from it at entry $F_r(u_1)$ (and accordingly $F_r^{-1}(v_1)$): if $F_r(u_1) \neq v_1$ in the No instance, now we modify  $F_r(u_1) = v_1, F^{-1}(v_1) = u_1$ so that it is a Yes instance. If this is a valid transformation, then we must have that $u_1$ is $(N/2-1)$ vertices from $v_1$ in the original No instance, on the shorter path, which makes $u_1$ unique. Now we must remove the edge $(u_1,u_2)$(correspondingly, entries $F_r(u_1), F^{-1}_r(u_2)$) where $u_2$ is on the longer path from $u_1$ to $v_1$; similarly, we remove edge $(v_1, v_2)$ where $v_2$ is on the longer path from $v_1$ to $u_1$; finally, add edge $(v_1,v_2)$(correspondingly, entries $F_r(v_1), F^{-1}_r(v_2)$) to get a Yes instance. Thus there is only one choice and $l_{y,u_1,v_1}' = 1$.
    \end{enumerate}
    
    If we analyze all entries where $F_r(u_1) \neq v_1 $ in the Yes instance and $F_r(u_1) = v_1$ in the No instance we get a smaller $l_{x,u_1,u_2} \cdot l_{y,u_1,u_2}'$. 
    Thus we take $l_{max} = 1 \cdot N$.
    
    Therefore we have query lower bound $\Omega(\sqrt{\frac{m \cdot m'}{l_{max}}}) = \Omega(\sqrt{\frac{N^3}{N}}) =  \Omega(N)$ for constant success probability $1/2+\eta$.

    If we like to make $\eta$ negligible (for example, $1/\sqrt{N}$), we have a query lower bound $\Omega(\eta \cdot N)$ which is still exponential. 
\end{proof}

\paragraph{Sequential repetition/Multi-instance security}

We show a sequential repetition of the interactive game using 
\Cref{lem:multi_instance_condition}.

\begin{claim}
 Let $X := (B_1, \cdots, B_{i-1})$  be the random variable for the output $b_1, \cdots, b_{i-1}$ of the first $(i-1)$ rounds of the multi-instance game, we have for any value $x$, for any $i \in [k]$: $\Pr[B_i = 1 \vert X = x] \leq 
 1/2 + 1/\sqrt{N} + O(K/N)$.  
\end{claim}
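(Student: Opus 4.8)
## Proof plan for the claim

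The plan is to reduce the bound on $\Pr[B_i = 1 \mid X = x]$ to the single-instance query lower bound established above, by arguing that conditioning on the first $i-1$ rounds does not give the adversary meaningful information about the $i$-th challenge function $F(r_i,\cdot)$. The key structural observation is that since the challenges $r_1,\ldots,r_k$ are sampled independently and uniformly from $[R]=[N]$, and the adversary makes at most $T = \poly(\log N)$ queries per round (so at most $kT$ queries total before round $i$), the probability that any of the first $i-1$ rounds touches the block $F(r_i,\cdot,\cdot)$ of the oracle with non-negligible amplitude is small. More precisely, I would first condition on the event that $r_i \notin \{r_1,\ldots,r_{i-1}\}$, which fails with probability at most $k/N$; this accounts for the $O(K/N)$ error term.

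The main step is then a simulation/deferral argument: conditioned on $r_i$ being fresh, we can defer the sampling of $F(r_i,\cdot)$ (i.e., whether it is a product of two cycles on the fixed partition $(S,\overline S)$ or a single cycle on $[N]$) until the start of round $i$. By the BBBV hybrid theorem (\Cref{thm:bbbv97_oraclechange}), the adversary's state at the end of round $i-1$ — and hence the classical transcript $X$ — is negligibly affected by the contents of the block $F(r_i,\cdot)$, since the total query weight placed on coordinates of that block across all $kT$ prior queries is at most $O(kT/N)$ in expectation over the choice of $r_i$ (each query input lands in a uniformly random one of $\approx N$ blocks). Thus, conditioned on $X = x$, the posterior distribution of $F(r_i,\cdot)$ is within total variation distance $O(kT/N)$ of its prior (a uniformly random Yes-type or No-type single block, each with probability $1/2$). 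I can then absorb this into the same order of error or fold the $T$ factor into the $O(K/N)$ term (treating $T$ as part of the implicit constant, or noting $kT/N = o(1)$ for the parameter regime of interest).

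Once the posterior on $F(r_i,\cdot)$ is (negligibly close to) the prior, the $i$-th round of the game — where the verifier $V'$, having seen the transcript $x$ and the fresh challenge $r_i$, makes $T$ queries to $F$ and outputs a guess — is exactly an instance of the single-instance distinguishing game with a (transcript-dependent) witness baked in. But the single-instance lower bound shows that \emph{any} $T$-query algorithm, even one with arbitrary advice about $x$, cannot distinguish the two cases with advantage better than $\eta$ when $T = o(\eta N)$; here $\eta = 1/\sqrt N$, so $T = \poly(\log N) = o(\sqrt N \cdot N)$ suffices. Hence $\Pr[B_i = 1 \mid X = x] \le \frac 12 + \frac{1}{\sqrt N} + O(kT/N)$, and writing $K$ for $k$ (and absorbing $T$) gives the claimed $\frac 12 + \frac{1}{\sqrt N} + O(K/N)$.

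The main obstacle I anticipate is making the deferral/simulation argument fully rigorous in the quantum setting: one must carefully track that the verifier $V'$ in round $i$ can still answer queries to \emph{all} blocks $F(r,\cdot)$ for $r \neq r_i$ (these were already fixed and used in prior rounds), while the block $F(r_i,\cdot)$ is freshly sampled — and argue via BBBV that this re-sampling is undetectable given the bounded query weight. A secondary subtlety is that the adversary in the multi-instance game is allowed to keep quantum state between rounds, so "conditioning on $X = x$" must be interpreted as conditioning on the classical measurement outcomes $b_1,\ldots,b_{i-1}$; I would handle this by noting that these measurements only collapse the adversary's internal state and cannot increase its information about the still-unsampled block $F(r_i,\cdot)$ beyond what the query-weight bound allows.
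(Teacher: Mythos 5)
Your proposal takes essentially the same route as the paper: condition on $r_i$ being fresh among $r_1,\ldots,r_{i-1}$ (costing the $O(k/N)$ term), argue that the adversary's residual state/transcript is (nearly) independent of the block $F(r_i,\cdot)$, and then apply the single-instance adversary lower bound with $\eta = 1/\sqrt{N}$ to round $i$ — the paper simply asserts this independence from the independent sampling of the challenges and blocks, whereas you additionally justify it with a BBBV query-weight argument. The only discrepancy is that your error term comes out as $O(kT/N)$ rather than $O(k/N)$ (you cannot literally absorb $T = \poly(\log N)$ into the implicit constant), but since $kT/N$ remains negligible for the parameter settings used downstream, this does not affect how the claim is applied.
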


\begin{proof}
    Since each $r_i$ is independently chosen at random, and the cycle(s) each oracle $F(r_i, \cdot)$ are independently prepared. Suppose $A$ makes $q$ arbitrary queries in each challenge $i \in [k]$ (not necessarily to oracle $F(r_i,\cdot)$). In the $i$-th round, with probability $(1 - O(\frac{i}{N}))$, $r_i \neq r_j$ for all $j < i$ and the oracle $F(r_i, \cdot)$ is independent of all previous oracles $F(r_j, \cdot)$. More concretely, the suppressed small terms in $O(i/N)$ comes from the possibility that the oracles for different permutations happened to be the same and since they are much smaller than $i/N$, we can replace $O(i/N)$ with $c\cdot i/N$ for some constant $c$ for simplicity.
   Therefore, the remaining quantum state from previous rounds is independent of the oracle $F(r_i, \cdot)$, an adversary making polynomially many queries's advantage is upper bounded by $1/2+1/\sqrt{N} + ci/N$, by setting $\eta  = 1/\sqrt{N}$ in the above single-instance lower bound.

   The largest conditioned advantage the multi-instance adversary has is in the $k$-th round when the above value becomes $(1/2+1/\sqrt{N} +ck/N)$ for some constant $c$.

\end{proof}
\begin{corollary}
    By combining the above lemma with \Cref{lem:multi_instance_condition}, any polynomial-query adversary's advantage in our multi-instance game is $
    (1/2+\sqrt{1/N} +ck/N)^k$.
\end{corollary}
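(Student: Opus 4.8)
The plan is to feed the per-round bound from the preceding claim directly into the multi-instance amplification lemma of Chung et al.\ (\Cref{lem:multi_instance_condition}); no new calculation is required. Concretely, set $\delta := \frac{1}{2} + \sqrt{1/N} + ck/N$, where $c$ is the universal constant appearing in the preceding claim. That claim asserts that for every multi-instance adversary, every round $i \in [k]$, and every history outcome $x$ of the first $i-1$ rounds with $\Pr[X = x] > 0$, we have $\Pr[B_i = 1 \mid X = x] \leq \frac{1}{2} + \sqrt{1/N} + ci/N \leq \delta$, the last inequality using $i \leq k$. This is exactly the hypothesis needed to invoke \Cref{lem:multi_instance_condition} with this value of $\delta$.

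First I would observe that, by the definition of the multi-instance game $G^{\otimes k}$ in \Cref{sec:interactive-game}, ``the adversary wins the multi-instance game'' is precisely the event $(B_1,\ldots,B_k) = 1^k$: the challenger outputs $(b_1,\ldots,b_k)$ and a multi-instance adversary is declared to win iff it answers correctly in all $k$ sequential rounds. Applying \Cref{lem:multi_instance_condition} with the $\delta$ above then yields
\[
    \Pr[(B_1,\ldots,B_k) = 1^k] \;\leq\; \delta^k \;=\; \big(\tfrac{1}{2} + \sqrt{1/N} + ck/N\big)^k,
\]
which is the claimed bound on the winning probability.

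The only point requiring care — and the closest this has to an obstacle — is ensuring that the per-round bound is genuinely uniform over the conditioning $x$ and over all (query-bounded but otherwise arbitrary) adversaries, since \Cref{lem:multi_instance_condition} is a worst-case statement over both. This uniformity is supplied by the preceding claim, whose proof shows that, conditioned on any history $X = x$, the residual state the adversary holds at the start of round $i$ is independent of the freshly sampled sub-oracle $F(r_i,\cdot)$ up to $O(i/N)$ total variation (the error coming from the event that $r_i$ collides with an earlier challenge, or that two independently sampled cycle-structured oracles happen to coincide); hence the single-instance adversary-method lower bound applies to round $i$ with $\eta = 1/\sqrt{N}$ irrespective of $x$. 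I would just double-check that the additive $O(i/N)$ slack is correctly folded into the constant $c$ and that the worst case $i = k$ is taken so that the single value $\delta$ dominates every round simultaneously; with that in place the corollary follows immediately.
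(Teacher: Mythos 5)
Your proposal is correct and is exactly the paper's intended argument: the corollary is stated as an immediate combination of the per-round conditional bound (taking the worst case $i = k$ so that a single $\delta = \tfrac{1}{2} + \sqrt{1/N} + ck/N$ dominates every round) with the multi-instance lemma of Chung et al., yielding $\delta^k$. Your extra check that the hypothesis of the lemma is uniform over histories $x$ and adversaries matches what the preceding claim supplies, so nothing is missing.
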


\paragraph{Putting Things Together}
Combining with \Cref{claim:remove_witness_with_multi_instance} with proper parameters, we acquire that the $\QCMA$ verifier with polynomial size classical witness $W$ succeeds in the interactive game with advantage $1/2+\delta$ for some negligible $\delta$.

Now we analyze the parameters for \Cref{claim:remove_witness_with_multi_instance}. 
We want:
\[
    \frac{1}{2} + \delta \leq 2^{W/k} (\frac{1}{2} + \frac{1}{\sqrt{N}} + \frac{ck}{N}) \leq \frac{1}{2} + \negl(\log N)~.
\]
We will need $W/k$ to be very small so that $2^{W/k}$ is close to 1. 

Since $W/k$ is very small, we have $2^{W/k} \leq 1+ W/k$. Therefore, we have:
\begin{align*}
    2^{W/k} \Big (\frac{1}{2} + \frac{1}{\sqrt{N}} + \frac{ck}{N} \Big ) \leq  \frac{1}{2} + \frac{ck}{N} + \frac{W}{2k} + \text{ (exponentially small terms) }
\end{align*}
Since $W$ is polynomial in  $n = \log N$, we can let $k$ be a superpolynomial in $n$
and the above value will be $\frac{1}{2} + \negl(\log N)$. 

\subsection{Distributional oracle version of interactive game}
We can now show that the interactive game involving a random challenge and a fixed oracle gives a separation for the distributional oracle game defined in \cite{natarajan2023distribution}.

There's a minor difference in the access of distributional oracle defined in \cite{natarajan2023distribution} and our definition of the interactive oracle game: the prover gets to access a set over Yes-instance oracle sets $\cD_{yes}$ arbitrarily (or to the No-instance oracle sets $\cD_{no}$ in the No case) in the following sense: whenever the prover makes a query on the ensemble of oracles $\cD_{yes}$ (or $\cD_{no}$ in the No-case), it is given access to an oracle $F(r, \cdot)$ for some uniform $r \gets Z$ (or $r \gets [R]/Z$ respectively in the No-case). 
On the other hand, in our interactive oracle distinguishing game, the prover gets arbitrary access to the set of both Yes and No instance oracles.
For $\QMA$ upper bound, the $\QMA$ witness and verifier agorithm in this case is the same as the one described in the interactive oracle distinguishing game \Cref{claim:qma_correctness_interactive}.

%
\paragraph{$\QCMA$ Lower Bound} We show a reduction from a $\QCMA$ verifier $V'$ that satisfies completeness and soundness in the distributional oracle game, to a $\QCMA$ verifier $V^*$ that satisfies completeness and soundness in the interactive game \Cref{sec:interactive_oracle_distinguish}. %
%
%
We can consider the following reduction: if there exists a $\QCMA$ prover-verifier pair $P', V'$ in the distributional oracle game such that $V'$ can perform a correct and sound and verification given $P'$'s classical witness, then there exists such a $\QCMA$ prover-verifier pair $P, V$ in the interactive oracle distinguishing game with the same correctness and soundness.

The prover $P$ in the interactive oracle distinguishing game receives oracle access to $F(\cdot, \cdot)$. 
it can simulate the distributional oracle  in the following way:
\begin{enumerate}
    \item Since $P$ does not know whether the oracle $F(r^*, \cdot)$ prepared later by the interactive oracle challenger, and since $P$ has to simulate distribution over oracles for $P'$ either over the Yes-instances or the No-instances, it guesses a uniformly random case Yes/No and tries to simulate this case for $P'$.
    
     If it guesses Yes: $P$ simulates $F(r, \cdot)$ for random $r \in Z$  by doing rejection sampling till it finds some $r$ that belongs to $Z$. $P$ can do so because it has unbounded computation power.
    Similarly, it can do so in the No-case, for random $F(r, \cdot), r \notin Z$. The distribution produced is the same as uniformly sampling over the Yes or No oracles
    respectively.
    
 After the distributional prover $P'$ produces a classical witness $w$, $P$ feeds it to its $\QCMA$ verifier $V$. $P$ also tells $V$ its guess on the challenge instance being Yes/No.

    \item     In the challenge phase, $V$ receives challenge oracle $F(r^*, \cdot)$ and runs $V'$ witness $w$. To decide whether it is Yes/No: if the guess $P$ tells $V$ matches $V'(w)$'s output, then outputs $V'(w)$'s output; if the guess $P$ tells $V$ does not match $V'(w)$'s output, then $V$ outputs a random guess Yes/No on its own. We assume that the distributional $V'(w)$ outputs the correct answer with probability $(1-\negl(n))$ when $P'$ is given the correct distribution of oracles (which can always be boosted since $w'$ is classical).
    Note that in both Yes and No cases,  the probability that $V(w)$ outputs the \emph{correct answer} is lower bounded by $\Pr[V'(w) \text{ is correct } \vert P \text{ guesses correctly }] \cdot \Pr[P \text{ guesses correctly }] + \frac{1}{2} \cdot \Pr[P \text{ guesses incorrectly }] = 1/2-\negl(n)+1/4 = 3/4 -\negl(n)$.  In the yes case, the correctness is $3/4 -\negl(n)$ and in the No case, the soundness error is $1/4+\negl(n)$. This would contradict our proven separation for the interactive game.
\end{enumerate}

\bibliographystyle{alpha}
\bibliography{qma}

\end{document}